\newcommand{\fmsolve}{|FM-Solve|$_{\He}$\xspace}
\newcommand{\euf}{\ensuremath{\mathrm{EUF}}\xspace}
\newcommand{\fcc}{\ensuremath{\mathrm{FCC}}\xspace}
\newcommand{\card}[1][]{\mathsf{card}_{#1}}
\newcommand{\So}{\mathsf{S}}
\newcommand{\sat}{\textsf{sat}\xspace}
\newcommand{\unsat}{\textsf{unsat}\xspace}
\newtheorem{thm}{Theorem}
\newtheorem{lem}{Lemma}
\newcommand{\cvc}{\textsc{cvc}{\small 4}\xspace}
\newcommand{\cvciii}{\textsc{cvc}{\small 3}\xspace}
\newcommand{\ziii}{\textsc{z}{\small 3}\xspace}
\newcommand{\dpllt}{DPLL$(T_1)$\xspace}
\newcommand{\dplltsfcconly}{DPLL$(T_\fcc)$\xspace}
\newcommand{\dpllts}{DPLL$(T_1,\ldots,T_m)$\xspace}
\newcommand{\checkfcc}{|check|$_{\fcc}$\xspace}
\newcommand{\rem}[1]{\textcolor{red}{[#1]}}
\newcommand{\ar}[1]{\rem{#1 --ar}}
\renewcommand{\vec}[1]{\mathbf{#1}}
\newcommand{\next}[1]{\mathsf{next}_{#1}}
\newcommand{\instans}[1]{I_{#1}}
\newcommand{\teq}{\approx}
\newcommand{\tneq}{\not\approx}
\newcommand{\cc}[1]{#1^*}
\newcommand{\ccc}[1]{#1^{*}_{c}}
\newcommand{\terms}{\mathbf{T}}
\newcommand{\vals}{\mathbf{V}}
\newcommand{\sfuns}[1]{#1^\mathrm{f}}
\newcommand{\I}{\mathcal{I}}
\newcommand{\M}{\mathcal{M}}
\newcommand{\Mo}{\mathbf{I}}
\newcommand{\He}{\mathcal{H}}
\newcommand{\ent}[1][]{\models_{#1}}
\newcommand{\pent}{\ent[\mathrm{p}]}
\newcommand{\lits}[1]{\mathrm{Lit}_{#1}}
\newcommand{\ilits}[1]{\mathrm{Int}_{#1}}
\newcommand{\none}{\mathsf{no}}
\newcommand{\lev}{\mathsf{lev}}
\newcommand{\limplies}{\Rightarrow}
\newcommand{\liff}{\Leftrightarrow}
\newcommand{\bool}{\mathsf{Bool}}
\newcommand{\distinct}{\mathsf{distinct}}
\newcommand{\compl}[1]{\overline{#1}}
\newcommand{\state}[1]{\langle #1 \rangle}
\newcommand{\Mc}{M}
\newcommand{\Fc}{F}
\newcommand{\Cc}{C}
\newcommand{\pref}[2]{{#1}^{[#2]}}
\newcommand{\failst}{\mathsf{fail}}
\newcommand{\rulename}[1]{\textsf{{\bfseries #1}}\xspace}
\newcommand{\irulename}[2]{\textsf{{\bfseries #1}}$_#2$\xspace}
\newcommand{\backjump}{\rulename{Backjump}}
\newcommand{\decide}{\rulename{Decide}}
\newcommand{\fail}{\rulename{Fail}}
\newcommand{\explain}[1]{\irulename{Explain}{#1}}
\newcommand{\propagate}[1]{\irulename{Propagate}{#1}}
\newcommand{\conflict}[1]{\irulename{Conflict}{#1}}
\newcommand{\plearn}{\rulename{Learn}}
\newcommand{\learn}[1]{\irulename{Learn}{#1}}
\newcommand{\drule}[2]{
\renewcommand{\arraystretch}{1.2}
\(\begin{array}{c}
#1 \\
\hline
#2
\end{array}\)
}
\def\LOOPFOR{\qtab\keyword{for}\ }
\def\LOOPDO{\qtab\keyword{do}\ }
\def\ENDLOOP{\untab}
\def\ENDLOOPDO{\untab\keyword{while}\ }
\def\IF{\qtab\keyword{if}\ }
\def\THEN{\ \keyword{then}\ }
\def\ELSE{\untab\qtab\keyword{else}\ }
\def\ELSEIF{\untab\qtab\keyword{else if}\ }
\def\FI{\untab} 
\def\RETURN{\keyword{return}\ }
\def\ENDPROC{\untab}
\def\keyword#1{\mbox{\normalshape\bf #1}}
\def\MATCH{\qtab\keyword{match}\ }
\def\WITH{\ \keyword{with}\ }
\def\ENDMATCH{\untab}
\newcommand{\maxf}{\mathsf{max}}
\newcommand{\ibracket}[1]{[ \negthinspace [ #1 ] \negthinspace ]}
\newcommand{\evaluate}[2]{#2 \ibracket{#1}}
\newcommand{\tupl}[1]{\mathbf{#1}}
\newcommand{\evmap}[1]{\mathcal{V}_{#1}}
\newcommand{\purify}[1]{\lfloor #1 \rfloor}
\journalname{myjournal}
\begin{document}

\title{Constraint Solving for Finite Model Finding in SMT Solvers\thanks{
The work of the first two authors was partially funded 
by a grant from Intel Corporation.
}
}


\author{
Andrew Reynolds \and 
Cesare Tinelli \and 
Clark Barrett
}


\institute{
Andrew Reynolds \at
Department of Computer Science, 
The University of Iowa \\
\email{andrew.j.reynolds@gmail.com}           
\and
Cesare Tinelli \at
Department of Computer Science, 
The University of Iowa \\
\email{cesare-tinelli@uiowa.edu}
\and
Clark Barrett \at
Department of Computer Science, 
Stanford University \\
\email{barrett@cs.stanford.edu}
}

\date{Received: date / Accepted: date}

\maketitle

\begin{abstract}
SMT solvers have been used successfully as reasoning engines for automated verification and other applications based on automated reasoning. Current techniques for dealing with quantified formulas in SMT are generally incomplete, forcing SMT solvers to report ``unknown'' when they fail to prove the unsatisfiability of a formula with quantifiers. This inability to return counter-models limits their usefulness in applications that produce queries involving quantified formulas. In this paper, we reduce these limitations by integrating finite model finding techniques based on constraint solving into the architecture used by modern SMT solvers. This approach is made possible by a novel solver for cardinality constraints, as well as techniques for on-demand instantiation of quantified formulas. Experiments show that our approach is competitive with the state of the art in SMT, and orthogonal to approaches in automated theorem proving.
\keywords{Satisfiability Modulo Theories \and Finite Model Finding}

\ 

\noindent
Under consideration for publication in Theory and Practice of Logic Programming (TPLP)
\end{abstract}

\section{Introduction} \label{sec:intro}

Satisfiability modulo theories (SMT) is a subfield of automated reasoning
concerned with the problem of determining the satisfiability of formulas 
in some first-order theory $T$,
where $T$ is usually a combination of several sub-theories. 
SMT techniques and solvers have been used successfully in recent years to support 
a variety of formal methods for hardware and software development,
including automated verification.
They are especially effective for tasks that can be reduced 
to proving the unsatisfiability of \emph{quantifier-free} formulas 
in certain theories, such as theories of linear arithmetic, algebraic datatypes,
bit vectors, arrays, strings and so on, for which it is possible to build
specialized constraint solvers.
A number of applications, however, require reasoners that can prove the 
unsatisfiability of \emph{quantified} formulas in those theories.
In verification applications, for instance, quantified formulas are necessary 
to express properties of systems with an unbounded number of processes, 
or properties involving a number of memory locations.
In general, the need for quantified formulas arises naturally 
when dealing with function or predicate symbols that do not
belong to the signature of an SMT solver's built-in theories.

The few SMT solvers that can currently reason about quantified formulas are based 
on incomplete methods and often report ``unknown'' when they fail, 
after some predetermined amount of effort, 
to prove a quantified formula unsatisfiable.
For many client applications, however, it is very useful to know 
if the failure is due to the fact that the input formula is indeed satisfiable,
especially if the solver can also return some representation 
of the formula's model.
Current SMT solvers are able to produce models of satisfiable quantified formulas
only in fairly restricted cases~\cite{GeDeM-CAV-09}, 
which limits their scope and usefulness.

We reduce these limitations with a novel approach for model finding in SMT.
Since, by the undecidability of first-order logic,
there are no automated methods for finding arbitrary models,
we focus on \emph{finite} models, 
which can be represented symbolically and enumerated.
More precisely, because SMT solvers work with sorted logics
with both built-in and \emph{free} 
(``uninterpreted'') sorts,
we focus on finding models that interpret the free sorts as finite domains.
As with traditional finite model finders for first-order logic,
the main idea is simply to check universally quantified formulas exhaustively 
over candidate models with increasingly large domains for the free sorts,
until an actual model is found.
Our approach differs from previous ones in that it does not rely on the explicit 
introduction of \emph{domain constants} for the free sorts, 
as done by MACE-style model finders~\cite{Claessen:Soerensson:MACEimprove:ModelComputationWS:2003},
and
in that we are able to reason modulo more theories than just the theory of equality,
unlike SEM-style model finders~\cite{Zhang1995IJCAI}.
Moreover, and crucially for our goals, the approach is fully integrated into the general architecture underlying most SMT solvers.

While limited to SMT formulas with quantifiers
ranging only over free sorts or built-in finite sorts (such as, for instance,
bit vector sorts or enumeration sorts), our approach is still quite useful.
Formulas with such quantifiers occur often,
for instance, in verification applications;
moreover, when they are satisfiable they usually have small finite models.

We present our model finding method in the context 
of an abstract framework that models a large class of SMT solvers 
supporting multiple theories and quantified formulas~\cite{KrsGoe-FROCOS-07}.
We incorporate in this framework
an efficient mechanism for deciding the satisfiability of a set 
of ground SMT formulas under \emph{finite cardinality} constraints for the free sorts.
This is used first to find a candidate model,
a model $\M$ of a heuristically generated finite set of ground consequences 
of the input formula $\varphi$.
To check that $\M$ satisfies $\varphi$ as well, the model finder then 
verifies, in a complete way, that \emph{all} ground consequences of $\varphi$ 
over the universe of $\M$ are satisfied by $\M$.
When this check fails, the model finder looks for a new candidate model,
possibly under extended cardinality bounds for the free sorts.
The practical effectiveness of this approach relies on two crucial components:
$(i)$ a method for constructing and representing candidate models efficiently
and 
$(ii)$ a model-based quantifier instantiation heuristic that avoids the explicit generation and checking of all the ground instances of the input formula.
The two are strictly related since the instantiation heuristic
takes advantage of the way candidate models are represented 
to identify, and ignore, entire sets of instances that do not need 
to be considered.

The paper is organized as follows.
After discussing preliminaries in Section~\ref{sec:prelim},
we present the framework used by SMT solvers in Section~\ref{sec:dpllts}.
We then present a high-level overview of our approach for finite model finding in Section~\ref{sec:fmf},
followed by details in Sections~\ref{sec:t-fcc}--\ref{sec:fmf-mbqi}
This includes, in particular, the strategy used by the solver 
for finding small candidate finite models and the algorithm for checking 
the satisfiability of quantified formulas against these candidate models.
Section~\ref{sec:results} describes an experimental evaluation 
of our implementation of these techniques in the SMT solver \cvc
on several sets of benchmarks.

This paper builds on material from previous conference papers~\cite{ReyEtAl-1-RR-13,ReyEtAl-2-RR-13},
as well as the PhD dissertation by the first author~\cite{reynolds2013finite}.

\section{Related Work}

Most traditional finite model finders for quantified formulas are based 
on a reduction to a decidable logic,
propositional logic or some decidable fragments of first-order logic, 
where the reduction introduces finite upper bounds on
the cardinalities of the atomic types. This technique was pioneered by McCune
in the Mace tool \cite{mccune-1994}, and is often referred to as MACE-style model finding.
These techniques were later implemented in the tool Paradox \cite{Claessen:Soerensson:MACEimprove:ModelComputationWS:2003},
which incorporated successful techniques for symmetry breaking.
Other conceptually similar tools include FM-Darwin \cite{baumgartner-et-al-2009}, which handles first-order logic with
equality, the Alloy Analyzer with its backend Kodkod \cite{TorJac-TACAS-07}
which handles first-order relational logic, and Nitpick \cite{blanchette-nipkow-2010} which handles higher-order logic.
Recently, a MACE-style finite model finding approach was also implemented in the Vampire theorem prover~\cite{DBLP:conf/sat/Reger0V16}.

A different approach to model finding, 
pioneered by the SEM model finder~\cite{Zhang1995IJCAI}, 
does not encode the input problem into propositional logic.
Instead, it uses built-in support for equality together with constraint propagation techniques 
similar to those used in modern constraint solvers.
In this respect, our approach is more similar to
SEM-style model finding than it is to MACE-style model finding.

Our approach uses on-demand quantifier instantiation to check the satisfiability of universally quantified formulas.
Other instantiation-based approaches have been developed, both in the automated
theorem proving community~\cite{Kor08-IJCAR} and in SMT. 
For the latter, instantiation-based techniques are most typically used 
in an incomplete way 
for finding proofs of unsatisfiability~\cite{Detlefs03simplify:a,GBT09,DBLP:conf/cade/MouraB07}.
Other techniques establish the satisfiability of 
quantified formulas,
either by using model-based techniques~\cite{GeDeM-CAV-09},
or by reasoning in local theories where only a finite set of instances is required 
for completeness~\cite{ihlemann2008}.

\section{Preliminaries}
\label{sec:prelim}

We work in the context of many-sorted first-order logic 
with equality. 
A (many-sorted) \emph{signature} $\Sigma$ consists of 
a set of sort symbols and
a set of \emph{(sorted) function symbols},
$f : S_1 \times \cdots \times S_n \rightarrow S$,
where $n \geq 0$ and $S_1, \ldots, S_n, S$ are sorts in $\Sigma$.
When $n$ is 0, 
$f$ is also called a \emph{constant symbol}.
We use the binary predicate $\teq$ to denote equality.
We assume that $\Sigma$ always includes a Boolean sort $\bool$ and 
constants $\true$ and $\false$ of that sort---allowing us to encode all other predicate symbols as
function symbols of return sort $\bool$.
For such symbols, we may write, e.g., $P( t_1, \ldots, t_n )$ as shorthand for the equality $P( t_1, \ldots, t_n ) \teq \true$.
A signature $\Sigma_0$ is a \emph{subsignature} of a signature $\Sigma$,
and  $\Sigma$ is a \emph{supersignature} of $\Sigma_0$,
if every sort and function symbol of $\Sigma_0$ is also in $\Sigma$.

Given a signature $\Sigma$,
a $\Sigma$-term is either a (sorted) variable $x$ with sort from $\Sigma$,
or an expression of the form $f( t_1, \ldots, t_n )$, 
where $f$ is a function from $\Sigma$, and $t_1, \ldots, t_n$ are $\Sigma$-terms.
A term $t$ is a \emph{well-sorted} term of sort $S$ if $t$ is a variable having sort $S$, 
or $t$ is of the form $f( t_1, \ldots, t_n )$ where
$f$ is of sort $S_1 \times \cdots \times S_n \rightarrow S$, and
$t_1, \ldots, t_n$ are well-sorted terms of sorts $S_1, \ldots, S_n$ respectively.
An \emph{atomic $\Sigma$-formula} is an equality $t_1 \teq t_2$ where $t_1$ and $t_2$ are well-sorted $\Sigma$-terms of the same sort.
A \emph{$\Sigma$-literal} is either an atomic $\Sigma$-formula $p$ or its negation $\neg p$.
We write $s \tneq t$ as an abbreviation for $\neg s \teq t$.
A \emph{$\Sigma$-clause} is a disjunction of $\Sigma$-literals, e.g. $l_1 \vee \ldots \vee l_n$.
A \emph{$\Sigma$-formula} is an expression built from atomic $\Sigma$-formulas
logical connectives such as $\vee$, $\wedge$, and $\neg$, and
quantifiers $\forall$ and $\exists$.
An occurrence of variable $x$ is \emph{free} in a formula $\varphi$
if it does not reside within a subformula $\forall x\, \psi$ or $\exists x\, \psi$ of $\varphi$.
We write $FV( \varphi )$ to denote the set of variables that occur free in $\varphi$,
or the \emph{free variables} of $\varphi$.
A \emph{$\Sigma$-sentence} is a $\Sigma$-formula with no free variables.
A $\Sigma$-term or formula is \emph{ground} if it contains no variables.
More generally, by a slight abuse of terminology, we will sometimes call ground 
any quantifier-free term or formula.
Where $\tupl{x} = (x_1,\ldots,x_n)$ is a tuple of sorted variables,
we write $\forall \tupl{x}\, \varphi$ as an abbreviation for
$\forall x_1 \cdots \forall x_n\, \varphi$
and $\exists \tupl{x}\, \varphi$ as an abbreviation for
$\neg \forall \tupl{x}\, \neg \varphi$.
When using this notation, we will implicitly assume that $\tupl x$ is maximal---for example
we assume that $\forall x_1\, \forall x_2\, \varphi$ is instead written as $\forall x_1 x_2\, \varphi$.


A substitution $\sigma$ is a mapping from variables to terms, applied in postfix form, 
such that $x\sigma$ and $x$ have the same sort for every variable $x$
and the set $\mathcal{D}om(\sigma) := \{ x \mid x \sigma \neq x \}$,
the \emph{domain} of $\sigma$, is finite.
We say $\sigma$ is a \emph{most general unifier (or mgu)} of terms $t_1$ and $t_2$ 
if $\sigma$ is a substitution with minimal domain
such that $t_1 \sigma = t_2 \sigma$.

A \emph{$\Sigma$-interpretation $\I$} maps
each sort $S$ in $\Sigma$ to a non-empty set $S^\I$,
the \emph{domain} of $S$ in $\I$;
it maps each variable $x$ of sort $S$ to an element $x^\I$ of $S^\I$
and each function symbol 
$f : S_1 \times \cdots \times S_n \rightarrow S \in \Sigma$ 
to a total function 
$f^\I : S_1^\I \times \cdots \times S_n^\I \rightarrow S^\I$.
If $\Sigma_0$ is a subsignature of $\Sigma$,
the \emph{$\Sigma_0$-reduct} of $\I$ is the $\Sigma_0$-interpretation $\I_0$ 
that interprets the symbols of $\Sigma_0$ exactly as $\I$ does.
The evaluation of a term $f( t_1, \ldots, t_n )$ in $\I$, denoted
$\evaluate{t}{\I}$, is defined recursively as
$(i)$ $\evaluate{x}{\I} = x^\I$;
$(ii)$ $\evaluate{f( t_1, \ldots, t_n )}{\I} =$ $f^\I( \evaluate{t_1}{\I}, \ldots \evaluate{t_n}{\I} )$.
For a $\Sigma$-interpretation $\I$, a variable $x$ of sort $S$, and an element $u$ 
of  $S^\I$,
we write $\I[x \rightarrow u]$ to denote a $\Sigma$-interpretation that interprets
$x$ as $u$,
and is otherwise identical to $\I$.
The usual satisfiability relation $\models$ between $\Sigma$-interpretations and 
$\Sigma$-formulas, written $\ent$, is defined as follows\footnote{%
Cases for additional constructs such as $\limplies$, $\liff$ and $\exists$ can be defined as usual
by reduction to the cases below.
}
\begin{description}
\item[-] $\I \models t_1 \teq t_2$ iff $\evaluate{t_1}{\I} = \evaluate{t_2}{\I}$
\item[-] $\I \models \varphi \wedge \psi$ iff $\I \models \varphi$ and $\I \models \psi$
\item[-] $\I \models \varphi \vee \psi$ iff $\I \models \varphi$ or $\I \models \psi$
\item[-] $\I \models \neg \varphi$ iff $\I \not\models \varphi$
\item[-] $\I \ent \forall x\, \varphi$ iff $\I[x \rightarrow v] \models \varphi$ for all $v \in S^\I$ where $S$ is the sort of $x$.
\end{description}
A $\Sigma$-interpretation $\M$ \emph{satisfies} (or \emph{is a model of}) 
a $\Sigma$-formula $\varphi$ if $\M \models \varphi$;
$\M$ \emph{satisfies} (or \emph{is a model of}) 
a set of $\Sigma$-formulas if it satisfies all of them.
A $\Sigma$-formula or set of $\Sigma$-formulas is \emph{satisfiable} if it has a model
and is \emph{unsatisfiable} otherwise.
We consider only interpretations that interpret $\bool$ as a binary set and 
$\true$ and $\false$ as distinct elements of that set.
We write $\bot$ to abbreviate the unsatisfiable formula $\false \teq \true$.
A set $\Gamma$ of formulas \emph{propositionally entails} a formula $\varphi$, 
written $\Gamma \models_p \varphi$, if the set $\Gamma \cup \{\lnot \varphi\}$ is unsatisfiable
when considering all atomic formulas in it as propositional (Boolean) variables.

A \emph{theory} is a pair $T = (\Sigma, \Mo)$ where
$\Sigma$ is a signature and $\Mo$ is a class of $\Sigma$-interpretations,
the \emph{models} of $T$, closed under variable reassignment
(that is, for all $\I \in \Mo$, every $\Sigma$-interpretation that differs
from $\I$ only in how it interprets variables is also in $\Mo$).
%
The \emph{union} of two theories $T_1 = (\Sigma_1, \Mo_1)$ and $T_2 = (\Sigma_2, \Mo_2)$,
when it exists,
is the theory $T_1 \cup T_2 = (\Sigma, \Mo)$
where $\Sigma$ is the smallest supersignature of $\Sigma_1$ and $\Sigma_2$
and $\Mo$ is the set of all $\Sigma$-interpretations whose $\Sigma_i$-reduct
is in $\Mo_i$ for $i=1,2$.
This definition extends to more than two theories as expected.

Given a theory $T = (\Sigma, \Mo)$, 
a $\Sigma$-formula $\varphi$ is \emph{satisfiable modulo $T$}, or \emph{$T$-satisfiable},
if and only if there is a model of $T$ that satisfies $\varphi$.
A set $\Gamma$ of $\Sigma$-formulas \emph{$T$-entails} a $\Sigma$-formula $\varphi$,
written $\Gamma \ent[T] \varphi$,
if every model of $T$ that satisfies all formulas in $\Gamma$ satisfies $\varphi$ as well.
The formula $\varphi$ is \emph{$T$-valid} if it is $T$-entailed by the empty set---equivalently,
if it is satisfied by every model of $T$.
Two sets $\Gamma_1$ and $\Gamma_2$ of $\Sigma$-formulas are \emph{equisatisfiable in $T$}
if for every model of $T$ that satisfies one there is a model of $T$ that satisfies the other,
and the two models differ at most on the way they interpret the free variables
not shared by $\Gamma_1$ and $\Gamma_2$.

\section{The \dpllts Framework}
\label{sec:dpllts}

Most SMT solvers have a basic architecture that combines in a principled way
a propositional satisfiability solver, the \emph{SAT engine}, 
with a number of \emph{theory} solvers,
specialized constraint solvers for sets of literals over a specific theory. 
A general framework, called DPLL($T$), to describe at an abstract but formal level
the working of such SMT solvers and the interaction of their main components
was originally developed by Nieuwenhuis et al.~\cite{NieOT-JACM-06}.
The framework, parametrized by a background theory $T$, describes entire families
of procedures to determine the $T$-satisfiability of a ground set of input clauses.
We present here a variant of it, introduced by Krsti\'c and Goel~\cite{KrsGoe-FROCOS-07}, 
where $T$ is not a monolithic theory 
but is instead the union of a number of separate sub-theories,
each with its own theory solver.

\subsection{The theory $T$}

For the rest of the paper, we will consider a $\Sigma$-theory $T = T_1 \cup \cdots \cup T_m$
where each $T_i$ is a theory with signature $\Sigma_i$.
One of these theories, say $T_e$ with $e \in \{1,\ldots,m\}$,
may be the theory of equality---over the symbols $\Sigma_e$.
This theory, whose set of models consists of all $\Sigma_e$-interpretations, 
is also known as the theory of equality with uninterpreted functions (\euf).
As a consequence, we will refer to the sort and function symbols of $T$
that occur only in $\Sigma_e$ as \emph{uninterpreted} and 
to the other symbols of $T$ as \emph{interpreted}.
To stress that we treat the component theories of $T$ individually 
we will refer to our variant of DPLL($T$) as \dpllts.

For convenience and without loss of generality, we assume that 
if a signature from $\{\Sigma_1, \ldots, \Sigma_m\}$ shares a
sort symbol with another signature then it shares it also with all the signatures in the set.
Finally, we impose the (true) restriction that
the signatures $\Sigma_1, \ldots, \Sigma_m$ share no function symbols at all
except for $\true$ and $\false$.
This restriction is currently imposed by all SMT solvers that support
multiple theories as it enables the modular combination of theory solvers
for the individual theories.

\subsection{Transition system}

The \dpllts framework for theory $T$ defines a state transition system
for each ground $\Sigma$-formula $\varphi_0$ whose $T$-satisfiability one is interested in.
Intuitively, the initial state of the system corresponds to a CNF encoding of $\varphi_0$. 
Under the right conditions on $T$, all of the
executions of the system starting from such a state end in a distinguished 
fail state if and only if $\varphi_0$ is not $T$-satisfiable.

\paragraph{States}
System states are all triples of the form $\state{M,F,C}$
where
\begin{itemize}
\item
$M$, the current \emph{assignment}, is a sequence of literals and 
\emph{decision points $\bullet$},
\item
$F$ is a set of ground clauses derived from $\varphi_0$, and 
\item
$C$ is either the distinguished value $\none$ or a clause, which we will refer to as a \emph{conflict clause}.
\end{itemize}

\noindent
Each assignment $M$ can be factored uniquely into 
the subsequence concatenation $M_0\bullet M_1 \bullet \cdots \bullet M_n$,
where no $M_i$ contains decision points.
For $i=0,\ldots,n$, we call $M_i$ the \emph{decision level} $i$ of $M$ and
denote with $\pref{M}{i}$ the subsequence $M_0\bullet \cdots \bullet M_i$.
When convenient, we will treat $M$ as the set of its literals.
The formulas in $F$ have a particular \emph{purified form}
that can be assumed with no loss of generality
since any formula can be efficiently converted into 
that form while preserving equisatisfiability in $T$:
each element of $F$ is a ground clause,
and each atom occurring in $F$ is \emph{pure},
that is, has signature $\Sigma_i$ for some $i\in \{1, \ldots, m\}$.
By the way assignments are constructed, their atoms too are always pure.

Initial states have the form 
$\state{\emptyset, F_0, \none}$ where $F_0$ is an input set of clauses
to be checked for $T$-satisfiability.
The expected final states are states of the form $\state{M, F, \bot}$, 
when $F_0$ is not $T$-satisfiable;
or else $\state{M, F, \none}$ with 
$M$ satisfiable in $T$,
$F$ equisatisfiable with $F_0$ in $T$, and
$M \pent F$.

If $M$ is $T$-satisfiable and $M \pent F$ 
we call $M$ a \emph{satisfying assignment} for $F$.

\begin{figure}[t]
\propagate{i}
\drule{
l_1, \ldots, l_n \in \Mc \quad 
l_1, \ldots, l_n \ent[i] l \quad
l \in \lits{\Fc} \cup \ilits{\Mc} \quad  
l, \compl{l} \notin \Mc
}
{\Mc := \Mc\;l}
\medskip

\decide
\drule{
l \in \lits{\Fc} \cup \ilits{\Mc} 
\quad 
l, \compl{l} \notin \Mc
}
{\Mc := \Mc \bullet l}
\qquad
\conflict{i}%
\drule{
\Cc = \none \quad 
l_1, \ldots, l_n \in \Mc \quad 
l_1, \ldots, l_n \ent[i] \bot
}
{\Cc := \compl{l}_1 \lor \cdots \lor \compl{l}_n}
\medskip

\explain{i}
\drule{
\Cc = l \lor D \quad 
\compl{l}_1, \ldots, \compl{l}_n \ent[i] \compl{l} \quad 
\compl{l}_1, \ldots, \compl{l}_n \prec_\Mc \compl{l}
}
{\Cc := l_1 \lor \cdots \lor l_n \lor D}
\medskip

\learn{i}
\drule{
\emptyset \ent[i] l_1 \lor \cdots \lor l_n \quad 
l_1, \ldots, l_n \in \lits{\Mc}\vert_i \cup \ilits{\Mc} \cup L_i
}
{\Fc := \Fc \cup \{l_1 \lor \cdots \lor l_n \}}
\qquad
\learn{0}
\drule{
\Cc \neq \none \quad
\bullet \in \Mc
}
{\Fc := \Fc \cup \{\Cc\}}
\medskip

\backjump
\drule{
\Cc = l_1 \lor \cdots \lor l_n \lor l
\quad \lev\ \compl{l}_1, \ldots, \lev\ \compl{l}_n \leq\ i < \lev\ \compl{l}
}
{\Cc := \none
\quad
\Mc := \pref{\Mc}{i}\: l
}
\qquad
\fail
\drule{ 
\Cc \neq \none \quad
\bullet \notin \Mc
}
{\Cc := \bot}

\caption{\dpllts rules}
\label{fig:rules}
\end{figure}

\paragraph{Transition rules}
The possible behaviors of the system are defined 
by a set of non-deterministic state transition rules,
specifying a set of successor states for each current state.
The rules are provided in Figure~\ref{fig:rules}
in \emph{guarded assignment form}~\cite{KrsGoe-FROCOS-07}.
A rule applies to a state $s$ if all of its premises hold for $s$.
In the rules,
$\Mc$, $\Fc$ and $\Cc$ respectively denote 
the assignment, formula set, and conflict clause component of the current state.
The 
conclusion describes how each component is changed, if at all.

We write $\compl{l}$ to denote the complement of literal $l$ and write
$l \prec_\Mc l'$ to indicate that $l$ occurs before $l'$ in $\Mc$.
The function $\lev$ maps each literal of $\Mc$ to the (unique) decision level
at which $l$ occurs in $\Mc$.
The set $\lits{\Fc}$ (resp., $\lits{\Mc}$) consists of all literals 
in $\Fc$ (resp., all literals in $\Mc$) and their complements.
For $i=1,\ldots,m$, the set $\lits{\Mc}\vert_i$ consists 
of the $\Sigma_i$-literals of $\lits{\Mc}$.
$\ilits{\Mc}$ is the set of all \emph{interface literals} of $\Mc$:
the equalities and disequalities between \emph{shared variables}
where the set of shared variables is
\[
\{x\ \mid\ x \text{ is a variable in both } 
             \lits{\Mc}\vert_i \text{ and } \lits{\Mc}\vert_j
             \text{for some }1 \le i < j \le m \} \ .
\]

The index $i$ in the rules ranges from $0$ to $m$ 
for \propagate{i}, \conflict{i} and \explain{i},
and from $1$ to $m$ for \learn{i}.
In all rules, $\ent[i]$ abbreviates $\ent[T_i]$ when $i > 0$.
In \propagate{0}, $l_1,\ldots, l_n \ent[0] l$ simply means that 
$\compl{l}_1 \lor \cdots \lor \compl{l}_n \lor l \in \Fc$.
Similarly,
in \conflict{0}, $l_1,\ldots, l_n \ent[0] \bot$ means that 
$\compl{l}_1 \lor \cdots \lor \compl{l}_n \in \Fc$;
in \explain{0}, $\compl{l}_1, \ldots, \compl{l}_n \ent[0] \compl{l}$ means that 
$l_1 \lor \cdots \lor l_n \lor \compl{l} \in \Fc$.

The rules \decide, \propagate{0}, \explain{0}, \conflict{0}, 
\plearn, and \backjump
model the behavior of the SAT engine, which treats atoms as Boolean variables.
The rules \conflict{0} and \explain{0} model the conflict discovery and analysis
mechanism used by CDCL SAT solvers.
All the other rules model the interaction between the SAT engine and the individual theory solvers
in the overall SMT solver.

Generally speaking, the system uses the SAT engine to construct
the assignment $\Mc$ as if the problem were propositional, 
but it periodically asks the sub-solvers for each theory $T_i$ to check 
if the set of $\Sigma_i$-literals in $\Mc$ is $T_i$-unsatisfiable, 
or entails in $T_i$ some yet undetermined literal from $\lits{\Fc}\cup \ilits{\Mc}$.
In the first case, the sub-solver returns an \emph{explanation} of 
the unsatisfiability as a conflict clause, which is modeled by \conflict{i}
with $i\in\{1,\ldots,m\}$.
The propagation of entailed theory literals and the extension 
of the conflict analysis mechanism to them is modeled by the rules 
\propagate{i} and \explain{i}.

The inclusion of the interface literals $\ilits{\Mc}$ 
in \decide and \propagate{i} achieves the effect of the Nelson-Oppen
combination method~\cite{BruCFGS-AMAI-09,TinHar-FROCOS-96}.
Under the right conditions on the component theories, 
the two rules allows the overall system to determine the $T$-satisfiability
of the input formula by doing only local reasoning in the individual 
component theories and exchanging information between their corresponding
solvers just through (dis)equalities between interface variables.

The rule \learn{i} with $i > 0$ is needed to model theory solvers following 
the splitting-on-demand paradigm~\cite{BarNOT-LPAR-06}.
When asked about the satisfiability of their constraints,
these solvers may instead return a \emph{splitting lemma}, 
a $T_i$-valid formula that encodes an additional guess 
that needs to be made about the literals in $\Mc$
before the solver can determine their satisfiability.
The set $L_i$ in the rule is a finite set consisting of literals, not present 
in the input set $F_0$, which may be generated by such solvers. 


\subsection{System executions and correctness}
An \emph{execution} of a transition system modeled as above
is a (possibly infinite) sequence $s_0, s_1, \ldots$ of states such that 
$s_0$ is an initial state and 
for all $i \geq 0$, $s_{i+1}$ can be generated from $s_i$ by the application 
of one of the transition rules.
A system state is \emph{reachable} if it occurs in some execution;
it is \emph{irreducible} if no transition rules besides \learn{i}, apply to it.
An \emph{exhausted execution} is a finite execution 
whose last state is irreducible.
A \emph{complete execution} is either an exhausted execution or an infinite execution.
An application of \learn{i}, with $i \geq 0$, is \emph{redundant} in an execution 
if the execution contains a previous application of \learn{i}
with the same premise.

Adapting results from~\cite{BarNOT-LPAR-06,KrsGoe-FROCOS-07,NieOT-JACM-06},
it can be shown that every execution 
starting with a state $\state{\emptyset, F_0, \none}$ and
ending in $\state{M,F,C}$ satisfies the following invariants:
\begin{enumerate}
\item $M$ contains only pure literals and no repetitions;
\item $F \ent[T] C$ and $M \pent \lnot C$ when $C \neq \none$;
\item $F_0$ and $F$ are equisatisfiable in $T$.
\end{enumerate}
Moreover, 
the transition system is \emph{terminating}: 
every execution with no redundant applications of \learn{i} is finite; and
\emph{sound}:
for every execution starting with a state $\state{\emptyset, F_0, \none}$ and 
ending with a state $\state{M, F, \bot}$, 
the clause set $F_0$ is $T$-unsatisfiable.
Under suitable assumptions on the sub-theories $T_1, \ldots, T_m$,
the system is also \emph{complete}:
for every exhausted execution starting with $\state{\emptyset, F_0, \none}$ and 
ending with $\state{M, F, \none}$, $M$ is satisfiable in $T$
and $M \pent F_0$.
%
Here, we provide a sketch of the correctness proof for \dpllts restricted 
to a single theory $T_1$
based on the proof for the original framework~\cite{NieOT-JACM-06}.

\begin{thm}
\label{thm:dpllts-correct}
Suppose $T = T_1$.
With any strategy where all applications of \learn{1} are not redundant and
introduce new literals only from a finite set $L_1$,
\dpllt is sound, complete, and terminating for all sets $F_0$ of ground clauses.
\end{thm}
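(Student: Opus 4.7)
The plan is to leverage the three invariants stated just before the theorem and then argue termination via a well-founded measure on states. First I would verify, by induction on execution length, that every reachable state $\state{M,F,C}$ satisfies (i) $M$ is a repetition-free sequence of pure literals, (ii) $F \ent[T] C$ and $M \pent \lnot C$ whenever $C \neq \none$, and (iii) $F_0$ and $F$ are equisatisfiable in $T$. Each rule is checked in turn: \decide, \propagate{0}, and \propagate{1} preserve (i) by their side conditions $l, \compl{l} \notin M$; \conflict{i} and \explain{i} preserve (ii) by the soundness of $\ent[T_i]$; \backjump preserves (i)--(iii) from the form of $\pref{M}{i}\,l$; \learn{1} preserves (iii) because it adds a $T_1$-valid clause, and \learn{0} preserves (iii) because (ii) tells us the added conflict clause is $T$-entailed by $F$.

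Soundness then follows immediately: a final state $\state{M,F,\ff}$ triggers invariant (ii) with $C = \ff$, giving $F \ent[T] \ff$, so $F$, and hence $F_0$ by (iii), is $T$-unsatisfiable. For completeness I would analyze an exhausted final state $\state{M,F,\none}$: since no non-Learn rule applies, blocking \conflict{1} gives $T_1$-satisfiability of $M$, blocking \conflict{0} means $M$ falsifies no clause of $F$, and blocking \decide and \propagate{0} forces every atom of $\lits{F}$ to appear in $M$, so $M$ is total over $\lits{F}$ and $M \pent F$. Since $F_0 \subseteq F$ by construction, this yields $M \pent F_0$.

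For termination, let $L = \lits{F_0} \cup L_1$; by hypothesis $L$ is finite. Invariant (i) bounds $|M|$ by $|L|$. Every clause ever placed in $F$ is built from literals in $L$, and the non-redundancy condition on \learn{i} forces each application to add a genuinely new clause, so \learn{i} fires only finitely often. For the remaining rules I would adapt the lexicographic ordering used in the original DPLL($T$) framework: order states first by the sequence of decision-level lengths of $M$ from the lowest level upward, then by whether $C = \none$, then by the position in $M$ reached by the last \explain{0}. Each of \decide, \propagate{0}, \propagate{1}, \backjump, \conflict{0}, \conflict{1}, \explain{0}, \explain{1}, and \fail strictly decreases this measure, and combined with the finite bound on \learn{i}, no infinite execution exists.

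The hardest step will be termination, specifically that the interplay between \backjump and \learn{0} cannot produce an infinite loop: a learned clause may enable new applications of \propagate{0} that were previously blocked, and one must show this cannot recycle old assignments indefinitely. The non-redundancy assumption together with the finiteness of $L$ is exactly what rules this out, by bounding the total size of $F$ by $2^{|L|}$ and hence bounding the measure from above. Completeness is also somewhat subtle because one must show that no literal in $\lits{F} \setminus \lits{F_0}$ is left unassigned, but all such literals lie in $L_1$, so the saturation argument remains finite.
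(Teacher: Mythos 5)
Your proposal is correct and follows essentially the same route as the paper's own sketch: soundness via the entailment/equisatisfiability invariants on reachable states, completeness by analyzing which rules are blocked in an irreducible state $\state{M,F,\none}$, and termination via a well-founded lexicographic measure on $(M,C)$ combined with the finiteness of literals in $\lits{F_0}\cup L_1$ and the bound on non-redundant \learn{} applications. Your explicit bound on $|F|$ to handle the \learn{0}/\backjump interplay is a slightly more careful rendering of a point the paper's sketch glosses over, but it is not a different argument.
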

\begin{proof}
(Sketch)

Soundness)
Observe that all reachable states of the form $\state{M,F,C}$ where $C \neq \none$
are such that $C$ is $T_1$-entailed by $F_0$.
When $C$ is introduced by \conflict{0}, it is a clause from $F$;
when it is introduced by \conflict{1}, it is $T_1$-valid.
When applying \explain{i}, we replace a literal $l$ in $C$ 
with disjunction of literals $l_1 \lor \cdots \lor l_n$
which is entailed by $l$ either in the theory $T_1$ (when $i = 1$), 
or together with $F$ (when $i=0$).
Thus, when a state of the form $\state{M,F,\bot}$ is reachable, 
we can conclude  that $F \ent[T_1] \bot$.
Since all clauses in $F$ are $T_1$-entailed by $F_0$ by construction, 
$\bot$ is $T_1$-entailed by $F_0$ as well.
\smallskip

Termination)
For all reachable states $\state{M,F,C}$, every literal occurring in $M$, $F$ or $C$ 
belongs to the finite set of literals $\lits{\Mc} \cup \ilits{\Mc} \cup L_1$.
As a consequence, there is only a finite number of reachable states.
Consider a partial ordering $\succeq$ on assignments $M$, with the empty assignment
as maximal element,
such that $(e_1\, M_1) \succeq (e_2\, M_2)$ 
if either 
$(i)$ $e_1 = \bullet$ and $e_2 \neq \bullet$ or 
$(ii)$ $e_1 = e_2$ and $M_1 \succeq M_2$.
In addition, 
consider a partial ordering $\succeq$ on conflict clauses, seeing as sets of literals, 
such that $C_1 \succeq C_2$ if either $C_1$ is $\none$, 
or neither $C_1$ nor $C_2$ are $\none$ and $C_2 \prec_M^{mul} C_1$, where $\prec_M^{mul}$ is the multiset extension of $\prec_M$.
Extend this ordering to states so that $\state{M_1,F_1,C_1} \succeq \state{M_2, F_2, C_2}$ if and only if
$M_1 \succeq M_2$, or $M_1 = M_2$ and $C_1 \succeq C_2$.
One can show that this ordering is well founded.
Moreover, applying all rules besides \learn{1} to a state $s$ results in a state $s'$ where $s \succ s'$.
Termination then follows from the fact that \learn{1} is applicable
only a finite number of times.
\smallskip

Completeness)
We claim that for every irreducible reachable state $\state{M,F,\none}$,
$M$ is a $T_1$-satisfiable satisfying assignment for $F$.
To see this, consider first that since \decide does not apply to the state,
$M$ must contain an assignment for all literals in $F$.
Moreover, $M$ is a satisfying assignment for $F$ since \conflict{0} does not apply.
Since \conflict{1} does not apply, then $M$ must be $T_1$-satisfiable.
Since $F_0 \subseteq F$, we have that $M$ propositionally entails $F_0$,
and thus $F_0$ is $T_1$-satisfiable as well.
Thus, since our procedure is terminating, it is also complete.
\qed
\end{proof}
\medskip

The soundness and termination arguments in the proof above
immediately extend to the multiple theory case of \dpllts where $m > 1$.
The completeness argument can be extended as well under further model-theoretic 
assumptions on the component theories~\cite{KrsGoe-FROCOS-07,DBLP:journals/fmsd/JovanovicB13}.

\subsection{A Typical Strategy for \dpllts}
\label{sec:dpllt-strat}

\begin{figure}[t]
\begin{tabular}{l@{\qquad\qquad}l}
\begin{minipage}[t]{.4\linewidth}
\begin{program}
\PROC |check|( \Mc, \Fc, \Cc ) \BODY 
  (\text{\propagate{0}} \mid \ldots \mid \text{\propagate{n}})^\ast;
  \IF |weak\_effort|( \Mc, \Fc, \Cc ) = \mathsf{true} 
    \IF l, \compl{l} \notin \Mc \text{ for some } l \in \lits{F}
      \decide \text{ on } l
    \ELSEIF |strong\_effort|( \Mc, \Fc, \Cc )
      \RETURN \state{ \Mc, \Fc, \none }
    \FI
  \FI
  \RETURN |check\_conflict|( \Mc, \Fc, \Cc )
\ENDPROC
\end{program}
\end{minipage}
&
\begin{minipage}[t]{.4\linewidth}
\begin{program}
\PROC |check\_conflict|( \Mc, \Fc, \Cc ) \BODY 
  \IF \Cc \neq \none
    (\text{\explain{0} } \mid \ldots \mid \text{ \explain{n}})^\ast;
    \IF \Cc = \emptyset
      \RETURN \state{\Mc, \Fc,\bot}
    \ELSE
      \plearn;\backjump
    \FI
  \FI
  \RETURN |check|(\Mc, \Fc, \Cc)
\ENDPROC
\end{program}
\end{minipage}
\end{tabular}
\caption[A typical strategy $|check|$ for applying \dpllts rules]{
A typical strategy $|check|$ for applying \dpllts rules.
}
\label{fig:dpllt-strategy}
\end{figure}

A typical strategy for applying the theory-specific rules of \dpllts
is outlined in Figure~\ref{fig:dpllt-strategy}
where $\mid$ denotes alternative choice and ${}^*$ denotes 
zero or more rule applications.
The $|check|$ procedure involves two sub-procedures $|weak\_effort|$ and $|strong\_effort|$, 
which are not shown here and are specific to the theory $T$.
Each of these methods when invoked either
applies \conflict{i} or \learn{i} for some $1 \leq i \leq m$ and 
returns $\mathsf{false}$, or
applies no rules and returns $\mathsf{true}$.

\emph{Weak effort checks}, as denoted by $|weak\_effort|$,
are commonly used to eagerly avoid extensions that are clearly unsatisfiable in one of the theories.
On the other hand, \emph{strong effort checks}, as denoted by $|strong\_effort|$,
are required to make progress towards determining the $T$-satisfiability of the conjunction of literals in $M$.
In particular, unlike $|weak\_effort|$, we require that $|strong\_effort|$ returns
$\mathsf{true}$ only when $M$ is $T$-satisfiable.
Generally speaking, weak effort checks typically involve computationally inexpensive reasoning at the cost of incompleteness,
whereas strong effort checks are complete but may involve expensive reasoning.
The design of a theory solver in the \dpllts framework depends largely on how the methods
$|weak\_effort|$ and $|strong\_effort|$ are implemented.
We will see an example of these functions in Section~\ref{sec:fcc-dpllts-opt}.

In more detail, the first sub-procedure $|check|$ in Figure~\ref{fig:dpllt-strategy} applies to states $\state{M,F,C}$ where $C = \none$.
We first apply the rule \propagate{i} for sub-theories $T_i$, possibly multiple times.
Afterwards, we apply a weak effort check.
If no conflicts or clauses are learned at weak effort, 
we apply \decide on some unassigned literal $l$ from $\lits{F}$, if one exists.
Otherwise, our assignment $M$ is complete, and we apply a strong effort check to verify the $T$-satisfiability of $M$.
If $|strong\_effort|( M, F, C )$ returns $\mathsf{true}$, then $M$ is satisfiable in $T$,
and the method returns the (final) state $\state{ M, F, \none }$,
indicating that $F$ is satisfiable.
In all other cases, we apply $|check\_conflict|$.

The second sub-procedure $|check\_conflict|$ is applied to states $\state{M,F,C}$ where $C$ may be different from $\none$.
In those cases,
we perform conflict analysis by repeated applications of \explain{i}.
If we reach the $\failst$ state, 
then we know $F$ is unsatisfiable.
Otherwise, we add a learned clause via \learn, and apply \backjump to return to a prefix of $M$.


We formally state the requirements of weak and strong effort checks 
for the single theory case in the following proposition
which is a consequence of Theorem~\ref{thm:dpllts-correct}.

\begin{proposition}
\label{prop:dpllt-strat}
Suppose $T = T_1$.
The |check| method in Figure~\ref{fig:dpllt-strategy} implements
a sound, complete, and terminating strategy for all sets $F_0$ of ground clauses provided all of the following hold.
\begin{enumerate}
\item 
In $|weak\_effort|$ and $|strong\_effort|$, all applications of \learn{1} 
are not redundant and 
introduce new literals only from a finite set $L_1$.
\item 
$|weak\_effort|$ and $|strong\_effort|$ 
return $\mathsf{false}$ only when they apply at least one rule.
\item $|strong\_effort|( \Mc, \Fc, \Cc )$ returns $\mathsf{true}$ 
only when the conjunction of the literals in $M$ is $T_1$-satisfiable.
\end{enumerate}
\end{proposition}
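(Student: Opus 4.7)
The plan is to reduce Proposition~1 to Theorem~\ref{thm:dpllts-correct} by showing that the |check| procedure only ever performs applications of \dpllts rules, and that the conditions (1)--(3) guarantee the termination and completeness hypotheses needed by that theorem. More concretely, I would first argue that every step taken inside |check| and |check\_conflict|---the explicit \propagate{i}, \explain{i}, \plearn, and \backjump applications, plus the implicit \conflict{i} or \learn{i} applications hidden inside $|weak\_effort|$ and $|strong\_effort|$---is a legitimate \dpllts rule application on the current state. Thus any trace produced by the strategy is an execution of \dpllts in the sense of Section~3.3, and by condition~(1) this execution has no redundant \learn{1} applications and introduces new literals only from a finite set $L_1$.

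Soundness is then immediate: a final state $\state{M,F,\bot}$ is only reached through the \fail branch of |check\_conflict|, which is a genuine \dpllts final state, so by the soundness part of Theorem~\ref{thm:dpllts-correct} the input set $F_0$ is $T_1$-unsatisfiable. Termination follows similarly once I verify that |check| cannot spin forever without executing a \dpllts rule. This is where condition~(2) is essential: both $|weak\_effort|$ and $|strong\_effort|$ return $\mathsf{false}$ only after actually firing a rule, so each recursive call to |check| or |check\_conflict| is preceded by at least one rule application. Combined with condition~(1), Theorem~\ref{thm:dpllts-correct}'s well-founded ordering on states strictly decreases across calls (or terminates via \fail / the return at the end of |check|), ruling out infinite recursion.

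For completeness, I need to show that whenever |check| returns a state of the form $\state{M,F,\none}$, this state is indeed irreducible with $M$ a $T_1$-satisfiable satisfying assignment for $F$, so that Theorem~\ref{thm:dpllts-correct} yields $T_1$-satisfiability of $F_0$. The control flow makes this direct: such a state is returned only after the \propagate{i} loop has saturated, every literal in $\lits{F}$ is already assigned in $M$ (otherwise the \decide branch would have been taken), and $|strong\_effort|(M,F,\none)$ returned $\mathsf{true}$. The first two facts give $M \pent F$ (since \propagate{0}/\conflict{0} could not apply, every clause of $F$ must have a true literal in $M$), while condition~(3) gives $T_1$-satisfiability of $M$; hence $F$ and therefore $F_0$ are $T_1$-satisfiable by the equisatisfiability invariant.

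The main obstacle is the termination argument: the theorem's well-founded ordering was stated for assignments and conflict clauses, but |check| can also repeatedly invoke the effort routines that apply \conflict{i} or \learn{i} without changing $M$ in the visible part of the control flow. I would address this by noting that a \conflict{i} application moves the state from $C = \none$ to $C \neq \none$, triggering the |check\_conflict| branch where the ordering decreases, and that \learn{i} applications are finitely bounded by condition~(1) together with the finiteness of $L_1$; so no infinite tail of effort-only steps is possible.
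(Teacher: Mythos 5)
Your proposal is correct and follows essentially the same route as the paper's own (much terser) proof sketch: reduce to Theorem~\ref{thm:dpllts-correct} by checking that condition~(1) supplies the hypotheses on \learn{1}, condition~(2) gives termination of the method, and condition~(3) guarantees that returned states $\state{M,F,\none}$ are exhaustive, i.e.\ that $M$ is a $T_1$-satisfiable satisfying assignment. Your elaboration of the termination subtlety (effort routines firing rules without visibly changing $M$) is a welcome addition but does not change the underlying argument.
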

\begin{proof} (Sketch)
The first point ensures that |check| meets the requirements 
on applications of \learn{1} as given in Theorem~\ref{thm:dpllts-correct};
the second point ensures that |check| is a terminating method; and
the third point ensures that |check| generates exhaustive executions.
\qed
\end{proof}

\section{Finite Model Finding in SMT}
\label{sec:fmf}

The \dpllts framework described in the previous section is limited to quantifier-free formulas.
This section outlines an approach for finite model finding for quantified formulas
that can be integrated in \dpllts-based SMT solvers.
%
Concretely, we consider $\Sigma$-formulas in the following language:
\[\begin{array}{lcl}
\phi & := & t_1 \teq t_2 \mid \phi_1 \wedge \phi_2 \mid \phi_1 \vee \phi_2 \mid\neg \phi \mid \forall x\, \varphi
\end{array}\]
where $t_1$ and $t_2$ are $\Sigma$-terms, and 
\emph{the sort of $x$ is either an uninterpreted sort or a sort interpreted 
in every model of $T$ as a finite set of some fixed cardinality.}
Examples of the latter include sorts denoting fixed-length bit-vectors
or finite (non-recursive) datatypes.
Certain integer arithmetic constraints with bounded quantifiers,
where the bounds are explicitly provided or can be inferred,
can be treated similarly to finite interpreted sorts~\cite{reynolds2013finite,DBLP:conf/cade/BaumgartnerBW14}.
Many applications of SMT rely on solving problems that fall into such categories, 
given a careful encoding of the constraints.

Given an input formula $\psi$ in the grammar above,
our approach first performs a purification step,
which results in a set $F$ of ground clauses, and 
a set $A$ of equivalences of the form $a \teq \true \liff \forall \tupl{x}\, \varphi$,
abbreviated as $a \liff \forall \tupl{x}\, \varphi$,
where $a$ is a Boolean variable uniquely associated 
with the quantified formula $\forall \tupl{x}\, \varphi$.
We will refer to $a$ as the \emph{proxy variable} for $\forall \tupl{x}\, \varphi$.
The set $F$ can be constructed by a standard conversion of $\psi$ to clausal form
which, however, treats the quantified subformulas of $\psi$ as atoms.
After that, each quantified formula $\forall \tupl{x}\, \varphi$ occurring in a clause of $F$ 
is replaced with its proxy variable $a$ if it occurs positively in the clause, and 
with $\varphi\{ \tupl{x} \mapsto \tupl{k} \}$ otherwise,
where $\varphi\{ \tupl{x} \mapsto \tupl{k} \}$ is the result of 
substituting each occurrence in $\varphi$ of a variable $x$ of $\tupl x$ 
with a fresh variable $k$ (to be treated like a Skolem constant).
The process is repeated until $F$ contains no quantifiers.
This conversion from $\psi$ to $F$ and $A$ can be done so that 
$\psi$ and $F \cup A$ are equisatisfiable.
We will denote by $\purify{\psi}$ the resulting pair $( F, A )$.

\begin{example}
Consider the formula $\psi =  \neg P( b, c ) \wedge ( Q( b, c ) \liff \forall x\, P( b, x ))$
where $P$ and $Q$ are uninterpreted predicates and $x,b$ and $c$ are of some uninterpreted 
sort $S$.
The purified form $\purify{\psi}$ is computed as follows.
First, a conversion of $\psi$ to clausal normal form results in the clauses:
\[
\begin{array}{l}
F_0 := \{ \neg P( b, c ),\,
          \neg Q( b, c ) \vee \forall x\, P( b, x ),\,  
          Q( b, c ) \vee \neg \forall x\, P( b, x )
        \}
\end{array}
\]
We replace the occurrence of $\neg \forall x\, P( b, x )$ in the third clause of $F_0$
with $\neg P( b, k )$, where $k$ is a fresh variable of sort $S$,
and replace the positive occurrence in the second clause with
a fresh proxy variable $a$ of sort $\bool$.
We obtain the quantifier-free set of purified clauses $F$ and equivalences $A$:
\[
\begin{array}{l@{\hspace{1em}}l@{\hspace{1em}}l}
F & := & \{ \neg P( b, c ), \neg P( b, k ) \vee Q( b, c ), a \vee \neg Q( b, c ) \} \\
A & := & \{ a \Leftrightarrow \forall x\, P( b, x ) \}
\end{array}
\]
It is not hard to see that $\phi$ and $F \cup A$ are equisatisfiable in $T$.
\qed
\end{example}

\begin{figure}
\begin{framed}
\fmsolve$( F, A )$ \\[1ex]
{\bf Input:} A set $F$ of purified $\Sigma$-clauses and a set $A$ of equivalences \\
{\bf Output:} \sat or \unsat
\smallskip

\begin{enumerate}
\item 
Find a satisfying assignment $M$ for $F$.  
If none is found, return \unsat.
\smallskip

\item 
Construct a $\Sigma$-interpretation $\M$ that satisfies $M$.
Let $\vals$ be a minimal set of $\Sigma$-terms such that,
for each uninterpreted sort $S$ of $\Sigma$,
every element of $S^\M$ is denoted by a term in $\vals$
(that is, for all $s \in S^\M$ there is a $v \in \vals$ 
such that $s = \evaluate{v}{\M}$).
\smallskip

\item 
For each $\forall \tupl{x}\,\varphi$ where $a \liff \forall \tupl{x}\, \varphi \in A$ and $a \in M$,
\smallskip

\begin{enumerate}
\item
let $\instans{\tupl{x}}$ be the set of substitutions from $\tupl{x}$ to terms in $\vals$ chosen by $\He( \M, \forall \tupl{x}\,\varphi )$;
\smallskip

\item
let $(F, A ) = (F \cup F', A \cup A')$
where 
$( F', A' ) = \purify{ \{ \neg a \vee \varphi \sigma \mid \sigma \in \instans{\tupl{x}} \}}$.
\end{enumerate}
\smallskip

If each of the sets $\instans{\tupl{x}}$ was empty, return \sat, 
otherwise go to Step 1.

\end{enumerate}
\end{framed}
\caption{Finite Model Finding Procedure \fmsolve, parameterized by a quantifier instantiation heuristic $\He$. }
\label{fig:fmf-proc}
\end{figure}

\subsection*{Model finding procedure}

Figure~\ref{fig:fmf-proc} describes a finite model finding procedure called
\fmsolve 
that takes as input a set $F$ and a set $A$, where $(F, A) = \purify\phi$ 
for some $\Sigma$-formula $\psi$, and 
tries to determine the satisfiability of $F \cup A$
by adding to $F$ instances of the quantified formulas that occur in $A$.
The procedure is parametrized by an instantiation heuristic $\He$ 
for the quantified formulas.

In Step 1, it looks for a satisfying assignment $M$ for $F$.\footnote{%
Recall that a satisfying assignment is a $T$-satisfiable set 
of $\Sigma$-literals that propositionally entail $F$.
}
This assignment can be found using the \dpllts procedure from the previous section.
If no satisfying assignment can be found, the procedure terminates 
with \unsat, for ``unsatisfiable.''
Otherwise, in Step 2, it constructs a $\Sigma$-interpretation $\M$ 
that satisfies $M$.
In doing so, however, it considers only $\Sigma$-interpretations $\M$ 
that interpret the uninterpreted sorts of $\Sigma$ as finite sets.
This makes it feasible to actually construct the set $\vals$ used in Step 3.
In that step, the procedure considers the set of quantified formulas 
that are \emph{active} in $M$, that is, 
those whose proxy variable occurs positively in $M$.
It adds new constraints $F' \cup A'$ to $F \cup A$ based on instances 
of quantified formulas chosen by the heuristic $\He$, 
which takes as input a model and a quantified formula.
We consider only heuristics that are sound with respect to models: if $\He$ returns no instances for quantified formulas in Step 3,
it is because $\M$ satisfies all active quantified formulas in $M$, 
and so the procedure terminates with \sat, for ``satisfiable.''
\medskip

\begin{thm}
\label{thm:fmf-correct}
For all inputs $F,A$ for \fmsolve, the following hold.
\begin{enumerate}
\item If the method for finding satisfying assignments $M$ for $F$ in Step 1 is sound, then the procedure \fmsolve returns \unsat only if $F \cup A$ is $T$-unsatisfiable.
\item If for all inputs, $\He( \M, \forall \tupl{x}\,\varphi )$ returns the empty set only if $\M \models \forall \tupl{x}\,\varphi$, then the procedure \fmsolve 
returns \sat only if $F \cup A$ is $T$-satisfiable.
\end{enumerate}
\end{thm}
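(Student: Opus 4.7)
The plan is to prove both parts via a common invariant: throughout any execution of \fmsolve, the set $F \cup A$ remains $T$-equisatisfiable with its initial value. The initial application of the purification operator $\purify{\cdot}$ is equisatisfiability-preserving by design. The only further changes happen in Step 3, which extends $F$ with clauses derived from $\neg a \vee \varphi\sigma$ and, via purification, extends $A$ with fresh equivalences: each new clause $\neg a \vee \varphi\sigma$ is $T$-entailed by the equivalence $a \liff \forall\tupl{x}\,\varphi$ already in $A$ (since that equivalence implies $a \limplies \varphi\sigma$), and the subsequent purification step again preserves equisatisfiability.

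With this invariant, Part 1 is immediate. If \fmsolve returns \unsat, Step 1 failed to find a satisfying assignment for the current $F$. Soundness of the Step 1 method implies the current $F$, and hence the current $F \cup A$, is $T$-unsatisfiable; the invariant then transfers unsatisfiability back to the initial $F \cup A$.

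For Part 2, suppose \fmsolve returns \sat in the final iteration with satisfying assignment $M$ and witnessing interpretation $\M$. Then $\M \models M$ together with $M \pent F$ gives $\M \models F$, and the hypothesis on $\He$ yields $\M \models \forall\tupl{x}\,\varphi$ for every active quantified formula, i.e., every $a \liff \forall\tupl{x}\,\varphi \in A$ with $a \in M$. I would then construct a model $\M^*$ of $F \cup A$ by keeping $\M$'s interpretation of all non-proxy symbols and setting $a^{\M^*}$ to the truth value of $\forall\tupl{x}\,\varphi$ in $\M$, for each $a \liff \forall\tupl{x}\,\varphi \in A$. This is well defined because the bodies $\varphi$ stored in $A$ are always in their pre-purification form and thus contain no proxy variables, so their truth value under $\M$ depends only on $\M$'s non-proxy part. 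By construction $\M^* \models A$. To show $\M^* \models F$, the only clauses whose satisfaction could be at risk are those containing a proxy that flips from false in $\M$ to true in $\M^*$, which happens exactly when $\neg a \in M$ yet $\M \models \forall\tupl{x}\,\varphi$. Positive occurrences of $a$ (from the original purification) only make the containing clauses easier to satisfy; the only negative occurrences of $a$ in $F$ reside in Step 3(b) instance clauses $\neg a \vee \varphi\sigma'$, and for these $\varphi\sigma'$ holds in $\M^*$ because $\M \models \forall\tupl{x}\,\varphi$ gives $\M \models \varphi\sigma$ for the original un-purified instance.

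The main obstacle will be justifying the last claim, namely $\M^* \models \varphi\sigma'$ given $\M \models \varphi\sigma$, where $\varphi\sigma'$ is obtained from $\varphi\sigma$ by Step 3(b)'s application of $\purify{\cdot}$ (Skolemization of the existential subformulas and proxy introduction for universal ones). I would discharge this by induction on the quantifier-nesting depth of $\varphi$, exploiting the fact that $\M^*$'s interpretation of every newly introduced proxy coincides, by the very definition of $\M^*$, with the truth value under $\M$ of the corresponding quantified subformula, so that evaluation of $\varphi\sigma'$ in $\M^*$ faithfully mirrors evaluation of $\varphi\sigma$ in $\M$. Combining this with the invariant yields $T$-satisfiability of the initial $F \cup A$, completing Part 2.
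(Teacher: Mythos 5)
Your proposal is correct and follows essentially the same route as the paper's proof: Part~1 via soundness of Step~1 plus satisfiability preservation of the Step~3 additions, and Part~2 by extending $\M$ to a model of $F \cup A$ that assigns each proxy $a$ the truth value of its quantified formula in $\M$. You supply considerably more detail than the paper's sketch (the analysis of which clauses could be affected by flipping proxy values, and the induction through nested purification), but the underlying argument is the same.
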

\begin{proof}
To show Point~1, assume the method for finding satisfying assignments $M$ for $F$ in Step 1 is sound.
Thus, when the procedure returns \unsat, we have that $F$ is $T$-unsatisfiable.
Since the formulas added to $F$ and $A$ in Step 3 preserve satisfiability, we have that our input is $T$-unsatisfiable as well.

To show Point~2,
the procedure returns \sat when $\He( \M, \forall \tupl{x}\,\varphi )$ returns
the empty set for all quantified formulas where $a \liff \forall \tupl{x}\,\varphi \in A$ and $a \in M$.
Assume $\M \models \forall \tupl{x}\,\varphi$ for all such formulas.
Then, $F \cup A$ is satisfied by a model $\M'$ where
$a^{\M'} = (\forall \tupl{x}\,\varphi)^\M$ for each $a \liff \forall \tupl{x}\,\varphi \in A$ and $a \not\in M$,
and where all other symbols are interpreted as in $\M$.
Since during all iterations of the procedure $F \cup A$ remains a superset of 
the original input, we have that the latter is satisfied by $\M'$ as well.
\qed
\end{proof}

\ 

The following sections will examine in more detail the main ideas
behind the three steps of procedure \fmsolve.
In Section~\ref{sec:t-fcc}, we describe techniques 
for finding satisfying assignments in Step 1.
These assignments have models that interpret uninterpreted sorts as sets of minimal cardinality
and are used in Step 2.
In Section~\ref{sec:fmf-model-construct}, we describe methods for constructing such models.
Finally, in Section~\ref{sec:fmf-mbqi}, we describe quantifier instantiation heuristics 
that can be used to choose sets of substitutions for Step 3.

Although Theorem~\ref{thm:fmf-correct} holds in general for inputs that involve
several theories, for simplicity, we restrict ourselves in the following
to problems in \euf only, that is, involving only uninterpreted sorts and function symbols.
Under these restrictions,
we provide arguments for the correctness of the three steps of \fmsolve in
Theorems~\ref{thm:fixed-card-dpllts} and~\ref{thm:choose-inst-ge},
which ensure the correctness of our finite model finding procedure 
according to Theorem~\ref{thm:fmf-correct}.

\section{EUF with Finite Cardinality Constraints (\fcc)} \label{sec:t-fcc}

In this section, we introduce techniques for finding
satisfying assignments in Step 1 of procedure \fmsolve
from Figure~\ref{fig:fmf-proc}.
We will focus on satisfying assignments that have \emph{small} models, that is,
models which interpret the uninterpreted sorts of our signature as finite sets of minimal size.
To do this, we introduce an extension of the theory \euf with finite cardinality constraints (\fcc).
We describe its signature ($\Sigma_\fcc$) and semantics,
give a satisfiabiliy procedure for conjunctions of literals in this theory, 
and describe how it can be integrated into the \dpllts architecture.
Finally, we discuss a strategy, \emph{fixed-cardinality \checkfcc},
which ensures that upper bounds are incrementally established for all uninterpreted sorts.

\begin{definition}[\fcc]
Let \euf be the theory of equality and uninterpreted functions over some signature $\Sigma_\euf$.
The theory \fcc of \emph{\euf with finite cardinality constraints} is the extension of \euf
obtained as follows.
The signature $\Sigma_\fcc$ of \fcc extends $\Sigma_\euf$ with
a constant $\card[S,k]$ of sort $\bool$
for each sort $S$ of $\Sigma_\euf$ and integer $k > 0$.
Its models are all $\Sigma_\fcc$-interpretations 
that satisfy each atomic formula $\card[S,k]$ exactly when they interpret $S$
as a set of cardinality at most $k$.
\end{definition}

As shown below, the \fcc-satisfiability of sets of $\Sigma_\fcc$-literals
is a decidable problem.
By a reduction from graph (vertex) coloring, one can show that the problem is NP-hard.
The main idea of the reduction is to represent the set of $k$ colors as a sort $\mathsf{C}$
and represent the vertices of the graph as variables of sort $\mathsf{C}$.
An edge between two vertices $x$ and $y$ is encoded as the constraint $x \tneq y$.
The cardinality constraint on $\mathsf{C}$ is encoded by $\card[\mathsf{C},k]$.
It is not difficult to see that given a model $\M$ of \fcc
(which is finitely representable), checking whether $\M$ satisfies a set of 
$\Sigma_\fcc$-literals can be done in polynomial time.
It follows that this satisfiability problem is NP-complete.

We prove its decidability by providing an effective satisfiability procedure.
The procedure relies on computing certain congruence closures of sets of constraints,
so we start by introducing that notion.

\begin{definition}[Congruence Closure]
Let $M$ be a set of literals, in any signature, and 
let $\terms_M$ be the set of all terms (and subterms) occurring in $M$.
The \emph{congruence closure $\cc M$ of $M$} is 
the smallest set of literals such that
\begin{enumerate}
\item $M \subseteq \cc M 
         \subseteq \{s \teq t \mid s,t \in \terms_M\} \cup \{s \tneq t \mid s,t \in \terms_M\}$;
\item for all $s,t \in \terms_M$, $\cc M \ent[\euf] s \teq t$ iff $s \teq t \in \cc M$. 
\end{enumerate}
\end{definition}

By construction, the relation $\{(s,t) \mid s \teq t \in \cc M\}$ induced by $\cc M$ is 
a congruence, and hence an equivalence, relation over $\terms_M$.
For brevity, we will identify $\cc M$ with its induced equivalence relation when convenient.
It can be shown (see, e.g.,~\cite{BaaNip-98}) that 
$(i)$ $\cc M$ is computable whenever $M$ is finite, 
and 
$(ii)$ if $M$ is satisfiable
it is satisfied by an interpretation $\M$ that interprets each sort $S$
as the set $\vals_S = \{v^S_1, \ldots, v^S_{n_S}\}$ consisting of
an arbitrary representative $v^S_i$ for each of the $n_S$ equivalence classes 
of $\cc M$ over terms of sort $S$.  
We call $\M$ a \emph{normal model} of $M$.
Our procedure will seek to find normal models for given input sets $M$ of literals.


\subsection{Decision Procedure} 
\label{sec:fcc-basic}

This section presents a decision procedure for the satisfiability of 
sets of constraints in the theory \fcc.
For now, we limit ourselves to signatures $\Sigma_\fcc$ whose set of sorts 
consists of a single (uninterpreted) sort $\So$.
Figure~\ref{fig:dp-fcc} gives the decision procedure for the satisfiability problem
in this case.
As input, the procedure takes a set $M$ consisting of cardinality constraint literals 
for $\So$ and
equalities and disequalities over ground $\Sigma_\fcc$-terms of sort $\So$,
and terminates with \sat or \unsat.

\begin{figure}
\begin{framed}
{\bf Input:} A set $M$ of $\Sigma_\fcc$-literals \\
{\bf Output:} \sat or \unsat
\begin{enumerate}
\item 
If $s \teq t \in \cc M$ for some $s \tneq t \in M \cup \{\false \tneq \true\}$, return \unsat.
\item 
If $M$ contains no positive cardinality literals, return \sat; \\
otherwise, let $k$ be the smallest integer such that $\card[\So,k] \in M$.
\item
If $\neg \card[\So,j] \in M$ for some $j \geq k$, return \unsat.
\item 
If there are $k$ or fewer equivalence classes in $\cc M$, return \sat.
\item 
\label{it:split}
If there exists two terms $s$ and $t$ in distinct equivalence classes of $\cc M$ 
such that $M \not\ent[\euf] s \tneq t$,
run the procedure recursively on $M \cup s \teq t$ and $M \cup s \tneq t$, 
returning \sat if either of the two subcalls returns \sat, and returning \unsat otherwise.
\item 
\label{it:last}
Return \unsat.
\end{enumerate}
\end{framed}
\caption{Decision Procedure for \fcc.}
\label{fig:dp-fcc}
\end{figure}

\begin{lem}
\label{lem:fcc-correct}
The procedure in Figure~\ref{fig:dp-fcc} is sound, complete and terminating 
for every set $M$ of $\Sigma_\fcc$-literals.
\end{lem}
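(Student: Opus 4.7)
I would prove the three properties separately, in the order termination, soundness, completeness, because each subsequent argument leans on structural facts established by the previous one.

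\textbf{Termination.} I would argue by a lexicographic well-founded measure on recursive calls. Since every recursive call is made on $M \cup \{s \teq t\}$ or $M \cup \{s \tneq t\}$ with $s,t \in \terms_M$, the set of terms $\terms_M$ (and hence the set of possible unordered pairs of classes) never grows. Let $\mu(M) = (n_M, p_M)$ where $n_M$ is the number of equivalence classes of $\cc M$ over $\terms_M$, and $p_M$ is the number of unordered pairs of classes whose representatives are not already \euf-entailed to be disequal. On the $s \teq t$ branch, $n_M$ strictly decreases (at least two classes merge). On the $s \tneq t$ branch, $n_M$ is unchanged but the pair $([s],[t])$, which by the guard of Step~\ref{it:split} was not yet forced disequal, becomes forced, so $p_M$ strictly decreases. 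Since $\mu$ takes values in $\mathbb{N}\times\mathbb{N}$ under the lexicographic order, the recursion terminates; the non-recursive steps are obviously terminating.

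\textbf{Soundness.} I would show that every branch that returns \unsat does so on an \fcc-unsatisfiable input, by case analysis on which step triggers the return. Step~1 is immediate from the definition of $\cc M$: it witnesses an \euf-entailment of $s \teq t$ against $s \tneq t \in M$. Step~3 is immediate from the semantics of $\card[\So,\cdot]$: $\card[\So,k]$ forces $|\So^\M| \leq k$, while $\neg\card[\So,j]$ with $j \geq k$ forces $|\So^\M| > j \geq k$. Step~\ref{it:last} is the interesting case: when it is reached, Step~4 failed (so $\cc M$ has $n > k$ classes) and Step~\ref{it:split}'s guard failed (so any two distinct classes have representatives $s,t$ with $M \ent[\euf] s \tneq t$). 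Hence every \euf-model of $M$ interprets the $n$ class representatives by pairwise distinct elements, forcing $|\So^\M| \geq n > k$, contradicting $\card[\So,k]$. The recursive case follows from the tautology $(M \ent[\fcc] \bot) \iff (M\cup\{s\teq t\}\ent[\fcc]\bot \land M\cup\{s\tneq t\}\ent[\fcc]\bot)$.

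\textbf{Completeness.} For every \sat return I would exhibit a model. In Step~2, $M$ has no positive cardinality literal, and by Step~1 $\cc M$ is consistent; starting from a normal model of $M$, I would enlarge $\So^\M$ by fresh elements so that $|\So^\M|$ exceeds every $j$ with $\neg\card[\So,j]\in M$, which is always possible since there is no upper bound to beat. In Step~4, let $n \leq k$ be the number of classes; by Step~3, every $\neg\card[\So,j]\in M$ satisfies $j < k$. I would take the normal model from $\cc M$ and pad $\So^\M$ with $k-n$ fresh elements so that $|\So^\M|=k$, which simultaneously satisfies $\card[\So,k]$ (and hence all $\card[\So,k']$ with $k'\geq k$, which are the only positive cardinality literals in $M$) and every $\neg\card[\So,j]$ with $j<k$. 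The recursive case is immediate: any model returned by a subcall on $M \cup \{l\}$ is also a model of $M$.

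\textbf{Main obstacle.} The only delicate point is Step~4: one must notice that Step~3 has already eliminated every negative cardinality literal $\neg\card[\So,j]$ with $j \geq k$, so that padding the normal model up to exactly $k$ elements works uniformly; without that earlier filtering the completeness argument at Step~4 would fail. The rest of the proof is essentially bookkeeping around congruence closure and the two-branch case split.
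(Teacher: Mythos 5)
Your proof is correct and follows essentially the same route as the paper's: case analysis on the \unsat returns for soundness (with the reached-Step~\ref{it:last} case handled by observing that all $n > k$ classes are pairwise entailed disequal), a decreasing measure on the splitting recursion for termination, and padding a normal model of $\cc M$ up to a cardinality compatible with the remaining (already filtered) cardinality literals for completeness. The only cosmetic difference is that the paper uses the single quantity $|E_M|$ (your $p_M$), which already strictly decreases on \emph{both} branches of the split, so your lexicographic pair $(n_M,p_M)$ is slightly more machinery than needed.
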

\begin{proof}
Soundness)
Let us start by observing that splitting the problem based on equalities $s \teq t$, 
as done in Step~\ref{it:split} of the procedure,
is trivially sound since all models of \fcc satisfy exactly one 
of $s \teq t$ and $s \tneq t$.
The procedure answers \unsat in one of the following cases:
\begin{enumerate}
\item
an equality $s \teq t$ is entailed by $M$ where $s \tneq t$ is also in 
$M \cup \{\false \tneq \true\}$,
\item
conflicting literals $\card[\So,k]$ and $\neg \card[\So,j]$ are asserted for $j \geq k$, 
or 
\item
there exist $k+1$ terms (each in a different equivalence class) 
that are entailed to be mutually disequal by $M$.
\end{enumerate}

\noindent
For the first case, it is immediate that $M$ has no models.
For conflicts in the second case, no model can be constructed with 
both at most $k$ and at least $j+1$ elements in the domain of $\So$.
For conflicts in the third case, note that if the procedure reaches Step~\ref{it:last}
there must be $k+1$ equivalence classes with representatives $t_1, \ldots, t_{k+1}$, say,
where $M \ent[\euf] t_i \tneq t_j$ for all $1 \leq i < j \leq k+1$;
hence no model can be constructed satisfying $\card[\So,k]$.
\smallskip

Termination)
It is easy to see that when the procedure recurses in Step~\ref{it:split},
the set of equalities and disequalities in $M$ without the cardinality constraints
is satisfiable.
Let $C$ be a set collecting the equivalence classes of $\cc M$
and let $[t]_M$ denote the equivalence class of a term $t$.
We argue that the splitting on the equality of $s$ and $t$ done at Step~\ref{it:split}
decreases the size of the set
\begin{equation}
 E_M := \{\, ([u]_M,[v]_M) \in C \times C \,\mid\, 
             [u]_M \neq [v]_M,\: M \not\ent[\euf] u \tneq v\, \}
\end{equation}
in other words, it decreases
the number of equivalence classes that are pairwise not entailed to be disequal.
In either branch of the split on $s \teq t$,
no equivalence classes are created (although two existing ones are possibly merged), 
and $([s]_M,[t]_M)$ is no longer an element of $E_M$ in the recursive call.
When $E_M$ becomes empty, the procedure is guaranteed to terminate,
since either more than $k$ equivalence classes are entailed to be distinct, 
in which case the procedure answers \unsat, 
or there are at most $k$ equivalence classes,
in which case the procedure answers \sat. 
\smallskip

Completeness)
The procedure answers \sat 
when the congruence closure $\cc M$ contains no equality whose negation occurs in $M$, 
and either there is no positive cardinality literal in $M$,
or $\cc M$ has at most $k$ equivalence classes
where $k$ is the smallest integer such that $\card[\So,k] \in M$.
In either case, we can construct a model where $S$ is interpreted 
as a set of size $j$, with $j \leq k$ for all $\card[\So,k] \in M$
and $j \geq k$ for all $\neg \card[\So,k] \in M$.
If $j$ is greater than the number of equivalence classes in $\cc M$,
arbitrary new elements can be added to the domain of $\So$
without affecting the satisfiability of the equalities and disequalities in $M$.
\qed
\end{proof}
\medskip

An immediate consequence of this lemma is that constraint satisfiability in $\fcc$ is decidable.

\begin{proposition}
The $\fcc$-satisfiability of sets of $\Sigma_\fcc$-literals is decidable.
\end{proposition}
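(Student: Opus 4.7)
The plan is to lift the single-sort decision procedure of Lemma~\ref{lem:fcc-correct} to signatures $\Sigma_\fcc$ with an arbitrary finite set of uninterpreted sorts. Since the proposition is stated in full generality and Lemma~\ref{lem:fcc-correct} only covers the case of one sort $\So$, the work is to argue that the generalization is essentially cosmetic: cardinality constraints for distinct sorts are independent, and equalities and disequalities can only hold between terms of the same sort, so the per-sort analysis is decoupled in exactly the right way.

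First, I would observe that for any finite set $M$ of $\Sigma_\fcc$-literals, the set $\terms_M$ of subterms can be partitioned by sort into $\terms_M^{S_1}, \ldots, \terms_M^{S_n}$, even though function applications may freely cross sorts (so the computation of $\cc M$ is performed over $\terms_M$ as a whole, as in Definition of congruence closure). The generalized procedure would then: $(i)$ compute $\cc M$ once; $(ii)$ fail (return \unsat) if some asserted disequality $s \tneq t$ or the disequality $\false \tneq \true$ is refuted by $\cc M$; $(iii)$ for each sort $S$, determine the smallest $k_S$ such that $\card[S,k_S] \in M$ (if any), fail if some $\neg \card[S,j_S] \in M$ with $j_S \geq k_S$, and otherwise check whether the number of equivalence classes of $\cc M$ restricted to $\terms_M^S$ is at most $k_S$; $(iv)$ if for some sort $S$ that check fails but two classes $[u]_M, [v]_M$ of terms of sort $S$ are not yet forced apart (that is, $M \not\ent[\euf] u \tneq v$), recurse on the two branches $M \cup \{u \teq v\}$ and $M \cup \{u \tneq v\}$; $(v)$ if no such pair exists for any $S$ that still violates its bound, return \unsat.

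For correctness, I would adapt the three arguments from Lemma~\ref{lem:fcc-correct}. Soundness is immediate: every \unsat return corresponds to either an entailed equality whose negation is asserted, an explicit conflict between upper and lower cardinality bounds on some $S$, or $k_S + 1$ pairwise-disequal representatives of classes of sort $S$. Termination follows by generalizing the measure to $E_M := \bigcup_{S} E_M^{S}$, where $E_M^{S}$ counts pairs of distinct $\cc M$-classes of terms of sort $S$ not yet entailed disequal by $M$; each recursive split strictly decreases $|E_M|$. Completeness is the only step requiring mild care: when the procedure returns \sat, one builds a model $\M$ by constructing, independently for each sort $S$, a domain of size between the largest lower bound and the smallest upper bound on $S$, interpreting each class representative of sort $S$ as a distinct element and padding with fresh elements as needed; function symbols of mixed sort are then interpreted consistently by fixing an arbitrary extension on arguments outside $\terms_M$, which is possible because $\cc M$ is already a congruence on $\terms_M$.

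The main obstacle I anticipate is justifying the completeness step in the multi-sort setting, specifically checking that the model so constructed indeed realizes the required cardinality on every sort simultaneously and that the free extension of cross-sort function symbols does not accidentally collapse distinct equivalence classes of any sort. This reduces, however, to the standard fact (cited already above Lemma~\ref{lem:fcc-correct}) that a satisfiable set of equalities and disequalities admits a normal model whose domain of each sort is in bijection with the equivalence classes of $\cc M$ restricted to that sort, padded as desired. Combined with the termination and soundness arguments, this yields an effective procedure, proving that $\fcc$-satisfiability is decidable.
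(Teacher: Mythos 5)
Your proof is correct, but it does considerably more than the paper does at this point. The paper's own proof of this proposition is a single sentence: it is declared an immediate consequence of Lemma~\ref{lem:fcc-correct}, because the proposition sits inside Section~\ref{sec:fcc-basic}, where the standing restriction ``for now, we limit ourselves to signatures $\Sigma_\fcc$ whose set of sorts consists of a single (uninterpreted) sort $\So$'' is still in force; the multi-sort situation is only taken up later, in Section~\ref{sec:fcc-min-card}, and there only at the level of the \dpllts integration (via the signature cardinality constraints $\card[\Sigma,k]$ and the fairness strategy), not as a standalone generalization of the basic decision procedure. You instead read the proposition as a claim about arbitrary finite sort sets and supply the missing lift: one congruence closure over all of $\terms_M$, per-sort cardinality bookkeeping, splitting restricted to same-sort class pairs, a termination measure $E_M = \bigcup_S E_M^S$ that strictly decreases under splits (correctly so, since adding literals only removes pairs from $E_M$ and merging classes never creates new ones, even when a merge in one sort propagates by congruence into another), and a per-sort padding argument for completeness. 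That is a genuinely stronger statement than the paper proves here, and your argument for it is sound; what the paper's one-line derivation buys is brevity and a clean separation of concerns (single-sort decidability now, multi-sort handling deferred to the solver integration), while your version buys an explicit, self-contained decidability result for the general signature, at the cost of having to re-examine the soundness, termination, and completeness arguments of Lemma~\ref{lem:fcc-correct} in the multi-sort setting.
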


The completeness argument in Lemma~\ref{lem:fcc-correct}
also suggests a constructive proof of the following result.
\begin{proposition}
\label{prop:fcc-finite}
Every satisfiable set of $\Sigma_\fcc$-literals has a finite model.
\end{proposition}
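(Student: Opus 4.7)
The plan is to obtain the finite model directly from the decision procedure of Figure~\ref{fig:dp-fcc} together with its completeness proof in Lemma~\ref{lem:fcc-correct}. Concretely, I would take a satisfiable input set $M$ of $\Sigma_\fcc$-literals and argue that the procedure must terminate with the answer \sat, and then read off a finite model from the terminal state.

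First, I would invoke termination from Lemma~\ref{lem:fcc-correct} to note that the execution tree generated by the recursive splits in Step~\ref{it:split} is finite. Since every split on $s \teq t$ vs.\ $s \tneq t$ is sound (any model satisfies exactly one disjunct), and $M$ is satisfiable by hypothesis, a straightforward induction on the recursion depth shows that at least one leaf of the execution tree returns \sat; call $M'$ the accumulated set of literals at that leaf, so $M \subseteq M'$ and $M'$ is still satisfiable.

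Next, I would inspect the two cases under which the procedure returns \sat at a leaf. In the first case, $M'$ contains no positive cardinality literal; then I take $j$ to be the maximum of the number of equivalence classes of $\cc{M'}$ and all integers $k+1$ such that $\neg\card[\So,k]$ appears in $M'$, and interpret $\So$ as a set of cardinality $j$, mapping each term of $M'$ to the representative of its equivalence class and extending the interpretation of function symbols arbitrarily but consistently with $\cc{M'}$. In the second case, let $k$ be the smallest integer with $\card[\So,k]\in M'$; then $\cc{M'}$ has at most $k$ equivalence classes and every negative cardinality literal $\neg\card[\So,j]\in M'$ satisfies $j<k$, so I choose the domain size $j$ in the required range and build the normal model analogously. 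In either case, the constructed interpretation $\M$ satisfies all equalities, disequalities, and cardinality literals of $M'$, hence of $M$, and has a finite domain by construction.

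I do not expect a real obstacle: the proposition is essentially a repackaging of the completeness half of Lemma~\ref{lem:fcc-correct}, which already exhibits a finite interpretation whenever the procedure reports \sat. The only mildly delicate point is making explicit that accumulated splits along a \sat-branch preserve satisfiability, so that the model built for $M'$ is also a model for the original $M$; this is handled by the soundness-of-splitting observation noted above.
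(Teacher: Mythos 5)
Your proposal is correct and matches the paper's intended argument: the paper gives no separate proof but remarks that the completeness argument of Lemma~\ref{lem:fcc-correct} yields a constructive proof, which is precisely what you carry out (run the decision procedure, use soundness of the splits and termination to reach a \sat leaf, and read off a finite normal model from the congruence closure there). Your explicit choice of domain size $j$ exceeding every $k$ with $\neg\card[\So,k]$ asserted is a slightly more careful rendering of the same construction.
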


We point out that in the absence of cardinality constraints 
the decision procedure in Figure~\ref{fig:dp-fcc} reduces 
to the standard congruence closure procedure used to decide the satisfiability
of constraints in \euf.
SMT solvers supporting \euf have theory solvers that essentially implement 
that procedure.

\subsection{Integration into \dpllts}
\label{sec:fcc-dpllts-opt}

Our decision procedure for \fcc can be integrated into the \dpllts framework
by capitalizing on the existence of a theory solver for \euf ($T_e$).
We effectively extend such a solver modularly with facilities to reason 
about cardinality constraints as well.
Since \fcc is an extension of \euf, we now replace the latter with the former
in the framework and make $T_e = \fcc$.
Recall the strategy outlined in Figure~\ref{fig:dpllt-strategy} of Section~\ref{sec:dpllt-strat}.
In the following, we detail how
the methods $|weak\_effort|$ and $|strong\_effort|$ of this strategy are implemented for \fcc.

For simplicity, we maintain the restriction for now that \fcc contains 
a single uninterpreted sort $\So$.
Also, when convenient, we identify equivalence classes of terms
with their representative terms.

\subsubsection*{Weak Effort Check}

\begin{figure}[t]
\begin{program}
\PROC |weak\_effort\_\fcc|( M, F, C ) \BODY 
  \IF l_1, \ldots, l_n \ent[\euf] \bot \text{ for some } l_1, \ldots, l_n \in M
    \text{Apply \conflict{e} with } C := \compl{l}_1 \vee \cdots \vee \compl{l}_n, \text{ return } \mathsf{false}
  \ELSEIF \card[\So,k], \neg \card[\So,j] \in M \text{ for } j>k,
    \text{Apply \conflict{e} with } C := \neg \card[\So,k] \vee \card[\So,j], \text{ return } \mathsf{false}
  \ELSEIF \card[\So,k] \in M \text{ and } M \ent[\euf] \distinct( t_1, \ldots, t_{k+1} )
    \text{Apply \learn{e} to } \neg \card[\So,k] \vee \neg \distinct( t_1, \ldots, t_{k+1} ), \text{ return } \mathsf{false}
  \ELSE
    \text{return } \mathsf{true}
\ENDPROC
\end{program}
\caption{
Weak effort check for \fcc.
}
\label{fig:fcc-dpllts-opt-weak-strategy}
\end{figure}

At weak effort, we recognize conflicting states of three different forms, outlined in Figure~\ref{fig:fcc-dpllts-opt-weak-strategy}.
First, if we are unable to construct a congruence closure for $M$ that is consistent with the disequalities from $M$,
we identify a subset $\{l_1, \ldots, l_n\}$ of $M$ that is \euf-unsatisfiable
and apply \conflict{e} to it.
Second, if $M$ contains the conflicting cardinality constraints $\card[\So,k] \in M$ and
$\neg \card[\So,j]$ with $j > k$,
we construct the conflict clause $\neg \card[\So,k] \vee \card[\So,j]$.
Third, we may recognize cases when $M$ contains a literal of the form $\card[\So,k]$
while its other literals entails that
$k+1$ terms $t_1, \ldots, t_{k+1}$ are pairwise disequal.
In this case, we use \learn{e} to add the lemma 
$\neg \card[\So,k] \vee \neg \distinct( t_1, \ldots, t_{k+1} )$ to the current set $F$
of clauses,
where $\distinct( t_1, \ldots, t_{k+1} )$ is shorthand for the conjunction of disequalities stating that 
the terms $t_1, \ldots, t_{k+1}$ are pairwise distinct.\footnote{%
Note that $\neg \card[\So,k] \vee \neg \distinct( t_1, \ldots, t_{k+1} )$
is a valid formula of \fcc.
}
We will refer to a lemma of this form as a \emph{clique lemma}.
We assume that this instance of \learn{e} is applied only 
when the resultant clause does not occur in $F$.
In practice, this can be achieved either 
by maintaining a cache of learned clauses or
by ensuring \propagate{e} is applied to completion between each call.
We apply \learn{e} because the constraint 
$\neg \distinct( t_1, \ldots, t_{k+1} )$ may contain literals not belonging to $F$.
We could alternatively apply \conflict{e} to construct a conflict clause of form 
$\compl{l}_1 \vee \ldots \vee \compl{l}_n \vee \neg \card[\So,k]$,
where $\{l_1, \ldots, l_n\}$ is a subset of $M$ 
that entails $\distinct( t_1, \ldots, t_{k+1} )$.
However, we have found that in practice this is inefficient, 
as many different sets of literals can be found for essentially the same conflict.

\paragraph{Generating clique lemmas}
For the purposes of discovering and learning clique lemmas,
we incrementally construct and maintain on the side a \emph{disequality graph} $D$ for $\So$, 
whose vertices correspond to the equivalence classes of terms of sort $\So$
induced by the congruence closure of $M$,
and whose edges represent disequalities in $M$ between terms in different equivalence classes.
In this representation, a sufficient condition for discovering a conflict 
reduces to finding a $(k+1)$-clique in $D$.
Now, even just checking for the presence of a $(k+1)$-clique 
in a $n$-vertex graph is too expensive in general---as 
this is an NP-complete problem~\cite{Garey:1974:SNP:800119.803884}.
For this reason, the weak effort check of our procedure uses an incomplete check 
for potential cliques.
This is done by partitioning the vertices of the graph into suitable
subsets that we call \emph{regions}.
After defining regions formally, we explain below how we exploit them 
to discover clique-related conflicts efficiently in practice.

\begin{definition}[$k$-Region]
\label{def:region}
Let $D = (V, E)$ be an undirected graph and let $R$ be a subset of $V$.
For all vertices $v \in R$, let $\mathsf{ext}(v)$ be the number of edges between $v$ and vertices not in $R$.
We say $R$ is a \emph{$k$-region} of $D$ if for all $0 < i \leq k$,
the size of the set $\{ v \mid v \in R, \mathsf{ext}(v) \ge i \}$ is smaller than $k-i$. 
A \emph{$k$-regionalization} $\mathcal{R}_D$ of $D$ is a partition of $V$ into $k$-regions.
We will refer to it simply as a \emph{regionalization} when $k$ is understood or not important.
\end{definition}

Regionalizations are useful for us because they facilitate the discovery of cliques.

\begin{lem}
If $\mathcal{R}_D$ is a $k$-regionalization of a graph $D$ and 
$D$ contains a $k$-clique $C$,
then all the vertices of $C$ reside in the same region of $\mathcal{R}_D$.
\end{lem}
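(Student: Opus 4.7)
The plan is to argue by contradiction via a simple edge-counting argument. First I would suppose $C$ is a $k$-clique whose vertices fall into at least two regions of $\mathcal{R}_D$, pick any region $R \in \mathcal{R}_D$ containing some but not all of $C$, and set $m := |R \cap C|$, so that $0 < m < k$.

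Next I would count external edges of the clique vertices that live in $R$. Since $C$ is complete, every $v \in R \cap C$ is adjacent in $D$ to each of the $k - m$ vertices of $C \setminus R$, and each such edge crosses out of $R$; hence $\mathsf{ext}(v) \geq k - m$ for every $v \in R \cap C$. This gives
\[
|\{\, v \in R \mid \mathsf{ext}(v) \geq k - m \,\}| \;\geq\; m.
\]
I would then instantiate the defining inequality of a $k$-region at $i := k - m$. Because $0 < m < k$ we have $1 \leq i \leq k - 1$, so $0 < i \leq k$ and the definition yields $|\{\, v \in R \mid \mathsf{ext}(v) \geq i \,\}| < k - i = m$, directly contradicting the lower bound above. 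Thus no region can contain a nonempty proper subset of $C$'s vertices; since the regions partition $V$, all $k$ vertices of $C$ must reside in a single region.

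The only delicate point, and really the main thing to get right, is the choice $i = k - m$ and the check that this value lies in the range $0 < i \leq k$ where the $k$-region condition actually bites; the rest is straightforward bookkeeping on edges inside versus across $R$.
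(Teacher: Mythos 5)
Your proof is correct and is essentially the paper's own argument: the paper splits the clique as $C_1 \cup C_2$ with $C_1 = C \cap R$, sets $i = |C_2|$, and observes that the $k-i$ vertices of $C_1$ each have $\mathsf{ext}(v) \geq i$, violating the region condition --- which is exactly your edge count with $i = k - m$ and $m = |C_1|$. Your instantiation $i := k-m$ and the range check $0 < i \leq k$ match the paper's choice precisely, so there is nothing to add.
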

\begin{proof}
If $k \leq 1$, the statement is trivial.
Otherwise, assume by contradiction $D$ contains $k$-clique $C = C_1 \cup C_2$ 
for non-empty $C_1, C_2$ where, for some region $R$ of $\mathcal{R}_D$, 
$v \in R$ for all $v \in C_1$ and $v \not\in R$ for all $v \in C_2$.
Say $\vert C_2 \vert = i$, and thus $\vert C_1 \vert = k-i$.
Since $C$ is a $k$-clique, $\mathsf{ext}(v)$ must be at least $i$ for all $v \in C_1$,
contradicting the assumption that $R$ is a region.
\qed
\end{proof}
\medskip

Notice that any graph $D = (V, E)$ has a trivial regionalization, 
with just one region which contains all vertices in $V$.

\begin{example}
Consider the constraints 
$\{ c_1 \tneq c_2, c_2 \tneq c_3, c_3 \tneq c_4 \}$,
all over sort $\So$, and the partition $\{ \{ c_1, c_2 \}, \{ c_3, c_4 \} \}$.
This partition is a 3-regionalization in the disequality graph induced by this set,
because a 3-clique can span two regions only if it contains two vertices with interregional edges,
and this partition only has one such edge.
Adding the disequality $c_2 \tneq c_4$ or $c_1 \tneq c_4$
breaks the regionalization invariant.
\qed
\end{example}

Let us examine how to maintain a $k$-regionalization in an (initially empty) \emph{evolving graph} $D$,
a data structure supporting the dynamic allocation of vertices and edges, 
as well as the merging of vertices.
In our framework, where $D$'s vertices correspond to equivalence classes of terms and 
edges to disequalities between them, 
these operations are triggered by operations performed on the data structure
that stores the congruence closure $\cc M$ of the current assignment $M$.
In particular, a vertex $v_e$ is added to $D$ when a new equivalence class $e$ is created,
which happens whenever a new term is added to $M$;
an edge between the vertices $v_1$ and $v_2$ corresponding to equivalence classes $e_1$ and $e_2$
is added to $D$ when the disequation $t_1 \tneq t_2$ is added to $M$, 
for some term $t_1$ in $e_1$ and $t_2$ in $e_2$; 
and 
two vertices are merged, in a single vertex that inherits their edges, 
when their corresponding equivalence classes are merged during the computation of $\cc M$.

\begin{figure}[t]
\begin{program}
\PROC |fix\_region|( R, \mathcal{R}_D ) \BODY 
  \IF R \text{ is not a $k$-region }
    \text{choose some } R' \in \mathcal{R}_D, \text{ where } R' \neq R
    \mathcal{R} := \mathcal{R} \setminus \{ R, R' \} \cup \{ R \cup R' \}
    |fix\_region|( \{ R \cup R' \}, \mathcal{R}_D )
  \FI
\ENDPROC
\end{program}
\vspace{-2ex}
\caption{The |fix\_region| procedure. Ensures $R \in \mathcal{R}_D$ is a $k$-region
by merging it with another $R' \in \mathcal{R}_D$, and repeating this process recursively.
}
\label{fig:fix-region}
\end{figure}

We maintain at all times a $(k+1)$-regionalization $\mathcal{R}_D$ of the graph $D$, 
where $k$ is the smallest integer such as $\card[\So,k] \in M$.\footnote{%
Recall that $\card[\So,k]$ states that sort $\So$ has at most $k$ elements.
}
As the graph $D$ is modified, it may be necessary to merge certain 
regions of the current within $\mathcal{R}_D$ 
to ensure the invariant in Definition~\ref{def:region} holds.
The procedure $|fix\_region|$ from Figure~\ref{fig:fix-region} ensures 
that a set $R$ within $\mathcal{R}_D$ is a $k$-region by merging it as needed 
with another set $R'$ in $\mathcal{R}_D$, and repeating this process 
recursively until $R$ becomes a $k$-region.
As a heuristic, we choose the $R'$ with the highest density of interregional edges to $R$.

Assuming we have a regionalization $\mathcal{R}_D$ for graph $D$,
here is how we construct a regionalization $\mathcal{R}_{D'}$ 
for graph $D'$ resulting from an addition to $M$.
In the following, $\mathcal{R}(v)$ denotes the region in a regionalization $\mathcal{R}$ 
that contains the vertex $v$.

\begin{description}
\item[Adding Vertices:]
When a vertex $v$ is added to $D$, $\mathcal{R}_{D'}$ is the result of adding the singleton region $\{ v \}$ to $\mathcal{R}_D$.
\smallskip

\item[Adding Edges:]
When we add an edge $( v_1, v_2 )$ to $D$, we have that $\mathcal{R}_{D'} = \mathcal{R}_D$ is still a partition of $V$.
However, $\mathcal{R}_D( v_1 )$ or $\mathcal{R}_D( v_2 )$ may not be regions of $D'$.
We apply the procedure $|fix\_region|$ first to $( \mathcal{R}_D( v_1 ), \mathcal{R}_{D'} )$
and then to $( \mathcal{R}_D( v_2 ), \mathcal{R}_{D'} )$ to ensure that $\mathcal{R}_{D'}$ is a regionalization.
\smallskip

\item{Merging Vertices:}
When a vertex $v_1$ is merged with another vertex $v_2$ in $D$, we have that $D'$ is a quotient graph of $D$, that is,
$D'$ contains a new vertex, call it $u$, connected to all vertices that are connected to either $v_1$ or $v_2$ in $D$.
If $\mathcal{R}_D(v_1)$ is equal to $\mathcal{R}_D(v_2)$, 
let $R$ be $(\mathcal{R}_D(v_1) \cup \{ u \}) \setminus \{ v_1, v_2 \}$.
Then $\mathcal{R}_{D'}$ is equal to $(\mathcal{R}_D \cup R) \setminus \{ \mathcal{R}_D(v_1) \} $.
To ensure $\mathcal{R}_{D'}$ is a regionalization, we apply $|fix\_region|$ to $( R, \mathcal{R}_{D'} )$.
If $\mathcal{R}_D(v_1)$ is not equal to $\mathcal{R}_D(v_2)$, 
let $\{ v_i, v_j \} = \{ v_1, v_2 \}$,
$R_i = (\mathcal{R}_D(v_i) \cup \{ u \}) \setminus \{ v_i \}$, and
$R_j = \mathcal{R}_D(v_j) \setminus \{ v_j \}$.
Then, $\mathcal{R}_{D'}$ is equal to $(\mathcal{R}_D \cup \{ R_i, R_j \}) \setminus \{ \mathcal{R}_D(v_1), \mathcal{R}_D(v_2) \}$.
We apply $|fix\_region|$ to $( R_i, \mathcal{R}_{D'} )$ and subsequently to $( R_j, \mathcal{R}_{D'} )$.
\end{description}

Additionally, when $\card[\So,k']$ is asserted for $k' < k$, we discard the $(k+1)$-regionalization and rebuild a $(k'+1)$-regionalization.

Given a $(k+1)$-regionalization $\mathcal{R}_D$ of $D$,
we will call each region in $\mathcal{R}_D$ with at least $k+1$ vertices a \emph{large region}, and all others \emph{small regions}.
For the purposes of efficiently discovering $(k+1)$-cliques
during weak effort checks,
we maintain a \emph{watched set} of $k+1$ vertices for each large region $R$ in $\mathcal{R}_D$, which we will write as $w(R)$.
This set is incrementally updated when vertices are added or removed from regions, and when regions are combined. 
If there exists a large region $R$ in $\mathcal{R}_D$ where each vertex in $w(R)$ is connected,
then we add the clique lemma $\neg \card[\So,k] \vee \neg \distinct(t_1, \ldots, t_{k+1})$ to $F$ using the rule \learn{e},
where $w(R) = \{ t_1, \ldots, t_{k+1} \}$.

\subsubsection*{Strong Effort Check}

\begin{figure}[t]
\begin{program}
\PROC |strong\_effort\_\fcc|( M, F, C ) \BODY 
  \text{let $k$ be the smallest integer such that } \card[ \So, k ] \in M \\
  \text{let $t_1, \ldots, t_n$ be the equivalence class representatives of sort $\So$ in $\cc M$} \\
  \IF n>k
    \text{choose $1 \leq i \lt j \leq n$ such that } M \not\ent[\euf] t_i \tneq t_j \\
    \text{apply \learn{e} to } t_i \teq t_j \vee t_i \tneq t_j \\
    \text{return } \mathsf{false}
  \ELSE
    \text{return } \mathsf{true}
\ENDPROC
\end{program}
\caption{
Strong effort check for \fcc.
}
\label{fig:fcc-dpllts-opt-strong-strategy}
\end{figure}

Recall from Section~\ref{sec:dpllt-strat} that a strong effort check 
must determine that the current set of constraints is consistent, or 
otherwise report a conflict or lemma.
The strong effort check of the \fcc solver is given in Figure~\ref{fig:fcc-dpllts-opt-strong-strategy}.
If $\card[ \So, k ] \in M$ for some (minimal) $k$, and 
there are more than $k$ equivalence class of sort $\So$ 
in the congruence closure of $M$,
then we choose two equivalence class representatives $t_i$ and $t_j$ and 
apply \learn{e} to add the splitting lemma 
$( t_i \teq t_j \vee t_i \tneq t_j )$ to $F$.
In practice, we also insist that future applications of \decide on the atom $t_i \teq t_j$ should be invoked with positive polarity.
If the number of equivalence classes is less than or equal to $k$, then the procedure returns $\true$, indicating that $M$ is $\fcc$-satisfiable.

The choice of $t_i$ and $t_j$ is guided by the watched set of vertices within regions.
In particular,
if there is a large region $R$ in $\mathcal{R}_D$, we know that $w(R)$ does not form a clique.
We choose $t_i, t_j$ to be two vertices from $w(R)$ that are not connected in $D$.
Otherwise, if there are no large regions in $\mathcal{R}_D$ and 
there are more than $k$ vertices in $D$,
then there must be at least two small regions.
We select two regions $R_i$ and $R_j$ based on a heuristic 
(namely, the maximum density of interregional edges),
combine them into a new region $R_i \cup R_j$, apply $|fix\_region|$ to $R_i \cup R_j$, and repeat.

We illustrate the operation of the \fcc solver with a couple of examples.

\begin{example}
Consider the constraints 
$\{ a \teq f( b ),\, b \teq f( c ),\, a \tneq b,\, b \tneq c,\, \card[\So,2] \}$ 
where all terms are over the single sort $\So$.
First, the \fcc solver computes the congruence
$\{\{a,f( b )\},\, \{b, f( c )\},\, \{c\}\}$.
Using $a, b, c$ as the representatives,
the solver builds the disequality graph with edges
$\{(a,b),(b,c)\}$.
Since $\card[\So,2]$ limits the size of $\So$ to at most 2, 
the solver generates the lemma $a \teq c \lor a \tneq c$.
Adding the constraint $a \teq c$ produces no conflicts and 
allows the \fcc solver to answer ``satisfiable''.
\qed
\end{example}

\begin{example}
Consider the constraints 
$\{ c_1 \teq c,\, c_4 \teq c,\, c_1 \tneq c_2,\, c_2 \tneq c_3,\, c_3 \tneq c_4,\, \card[\So,2]\}$
with all constants of sort $\So$. 
The corresponding disequality graph for these constraints contains 
a clique of size 3.
By discovering that clique, the \fcc solver can conclude that it is impossible
to shrink the model to 2 elements, and hence reports a clique lemma of the form
$\neg \distinct( c_1, c_2, c_3 ) \lor \lnot\card[\So,2]$.
\qed
\end{example}

Because of congruence constraints, guesses on merge lemmas may sometimes lead 
to inconsistencies when constructing the congruence closure, unless we
compute and propagate all entailed disequalities---which is usually not done,
for efficiency.
This is demonstrated in the following example.

\begin{example}
Consider the constraints 
$\{c_3 \teq f(c_1),\, c_4 \teq f(c_2),\, c_3 \tneq c_4,\, \card[\So,2] \}$
where all the terms have sort $\So$.
Unless the \euf subsolver propagates the entailed literal $c_1 \tneq c_2$,
the \fcc solver will construct the disequality graph 
$(V, E) = (\{c_1, c_2, c_3, c_4 \}, \{(c_3, c_4)\})$ for $\So$.
Say we decide to apply \learn{e} on $( c_1 \teq c_2 \vee c_1 \tneq c_2 )$,
and then the literal $c_1 \teq c_2$ is added to our set of constraints.
The subset
$\{c_3 \teq f(c_1),\,$ $c_4 \teq f(c_2),\, c_3 \tneq c_4,\,$ $c_1 \teq c_2\}$
will then be found unsatisfiable by congruence closure.
In contrast, adding the equalities $c_1 \teq c_3$ and $c_2 \teq c_4$ to our set will produce
a model of the required cardinality.
\qed
\end{example}

We now state the correctness of our \fcc procedure as integrated in the \dpllts framework.
In the following, we let \checkfcc denote the strategy 
that applies the rules of \dplltsfcconly 
according to Figure~\ref{fig:dpllt-strategy},
with the weak and strong effort checks described in this section.

\begin{thm}
\label{thm:fcc-dpllts-correct}
\checkfcc is a sound, complete and terminating strategy 
for every set of ground clauses $F_0$.
\end{thm}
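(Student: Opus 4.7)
The plan is to derive the result from Proposition~\ref{prop:dpllt-strat} by verifying its three conditions for the weak and strong effort checks in Figures~\ref{fig:fcc-dpllts-opt-weak-strategy} and~\ref{fig:fcc-dpllts-opt-strong-strategy}, taking $T_1 = \fcc$. The data structures used (the disequality graph $D$, the regionalization $\mathcal{R}_D$ and the watched sets) are heuristic aids for discovering conflicts and cliques efficiently and do not affect the reasoning at the level of the transition rules, so they can be largely ignored in the correctness argument and only revisited for termination of auxiliary procedures.

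First, I would exhibit a finite set $L_e$ containing all literals that \learn{e} can introduce. Inspection of the two checks shows that new literals are of two kinds: cardinality literals $\card[\So,k]$ and $\neg \card[\So,j]$ for $j,k$ appearing in cardinality atoms already occurring in $F_0$, and equalities or disequalities $t_i \teq t_j$, $t_i \tneq t_j$ where $t_1,\ldots,t_{k+1}$ are representatives of equivalence classes built by congruence closure over terms from $F_0$. Since congruence closure does not manufacture new terms, only finitely many such literals are possible, so $L_e$ is finite. Non-redundancy is ensured by the explicit guard stated for the clique-lemma rule (``only when the resultant clause does not occur in $F$'') and by the analogous caching of splitting lemmas at strong effort; this meets condition~1 of Proposition~\ref{prop:dpllt-strat}. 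Condition~2 is immediate from the syntactic form of both checks: every branch either applies a rule and returns $\mathsf{false}$, or falls through to return $\mathsf{true}$ without applying any rule.

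The substantive part is condition~3: whenever $|strong\_effort\_\fcc|(M,F,\none)$ returns $\mathsf{true}$, the assignment $M$ must be $\fcc$-satisfiable. Here I would use the fact that strong effort is only invoked after $|weak\_effort\_\fcc|$ has returned $\mathsf{true}$, which rules out each of the three obstructions identified in Lemma~\ref{lem:fcc-correct}: no $\Sigma_\euf$-conflict is entailed by $M$ (first branch of weak effort), no conflicting pair $\card[\So,k],\neg \card[\So,j]$ with $j>k$ occurs in $M$ (second branch), and no clique of size $k+1$ is $\euf$-entailed for the minimal $k$ with $\card[\So,k]\in M$ (third branch). Moreover, strong effort returns $\mathsf{true}$ only when the number of $\So$-equivalence classes in $\cc M$ is at most $k$. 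Under these conditions, the construction in the completeness part of Lemma~\ref{lem:fcc-correct} yields a normal model of $M$: interpret $\So$ as a set of size $j$ with $k' \le j \le k$ (where $k'$ is the largest index with $\neg\card[\So,k']\in M$, or $0$ if none), take the equivalence-class representatives as distinct elements, and pad with arbitrary fresh elements if needed. This model satisfies all literals in $M$, establishing $\fcc$-satisfiability.

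Finally, for termination I would note that each call to $|fix\_region|$ strictly decreases the number of regions in $\mathcal{R}_D$ and thus terminates, and each iteration of the strong effort loop either applies \learn{e} or returns, so the overall strategy terminates because Proposition~\ref{prop:dpllt-strat} together with the above facts reduces termination to the general termination of \dpllt from Theorem~\ref{thm:dpllts-correct}. The main obstacle is formulating condition~3 precisely: one must argue that the weak effort conditions, taken together with $n \le k$ at strong effort, are jointly strong enough to rebuild a model via the construction from Lemma~\ref{lem:fcc-correct}, and in particular that $\neg \card[\So,j]$ literals are automatically respected because the second weak-effort branch would otherwise have fired. Once this invariant is made explicit, the theorem follows as a direct instantiation of Proposition~\ref{prop:dpllt-strat}.
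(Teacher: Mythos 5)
Your proposal is correct and follows essentially the same route as the paper: both derive the theorem by instantiating Proposition~\ref{prop:dpllt-strat}, checking that \learn{e} introduces literals only from a finite set of (dis)equalities over terms of $F_0$, that the checks return $\mathsf{false}$ only when a rule fires, and that $|strong\_effort\_\fcc|$ returns $\mathsf{true}$ only in $\fcc$-satisfiable states. Your treatment of condition~3 (explicitly combining the weak-effort guarantees with the model construction from Lemma~\ref{lem:fcc-correct}) is in fact somewhat more careful than the paper's one-line assertion that such states are guaranteed satisfiable, but it is the same argument spelled out.
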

\begin{proof}
Notice that the weak and strong effort methods in this section legally apply \dpllts rules, that is,
they apply \conflict{e} only to clauses whose negated literals imply a contradiction
and \learn{e} to clauses that hold in all models.
We follow the three requirements for weak and strong effort checks as described in Proposition~\ref{prop:dpllt-strat}.

To show the first point, the only literals introduced by applications of \learn{e} (call them $L_\fcc$)
are equalities and disequalities between terms occurring in $F_0$.
Clearly $L_\fcc$ is finite.
To show the second point, the weak and strong effort methods in this section $\mathsf{false}$ only when they apply at least one rule.
To show the third point, $|strong\_effort\_\fcc|( M, F, C )$ returns $\mathsf{true}$ only
when the congruence closure of $M$ contains $k$ or fewer equivalence classes for all $\card[\So,k] \in M$.
In such states, we are guaranteed that $M$ is satisfiable in \fcc.
\qed
\end{proof}

\subsection{Establishing Finite Cardinalities} 
\label{sec:fcc-min-card}

We have now shown that a theory solver for \fcc can be integrated into the \dpllts architecture
with support for eager conflict detection through the use of weak effort checks.
In this section, we show an approach that makes use of this solver 
for answering the following problem:
\emph{given an input $F$, find the smallest integer $n > 0$ 
such that $F \wedge \card[\So,n]$ is satisfiable}.

A straightforward scheme for solving this problem is the following.
First, use the solver to determine if $F \wedge \card[\So,1]$ is satisfiable, and answer satisfiable if so.
If this is unsatisfiable, use the solver to determine if $F \wedge \card[\So,2]$ is satisfiable, and so on.
Due to Proposition~\ref{prop:fcc-finite}, this process is guaranteed to terminate when $F$ is satisfiable.
A clear disadvantage of this scheme is that, 
in the absence of conflict analysis, 
it diverges when $F$ is unsatisfiable.
This section describes an alternative approach 
that overcomes this limitation.
At a high level, our approach modifies the weak effort check 
of the \fcc solver by introducing splits on cardinality constraints
$(\card[\So,k] \vee \neg \card[\So,k])$,
and deciding upon literals of the $\card[\So,k]$ for the minimal feasible $k$.
Before formally defining this approach,
we discuss a generalization that is applicable to signatures with multiple uninterpreted sorts.

\begin{figure}[t]
\begin{minipage}[t]{.4\linewidth}
\begin{program}
\PROC |weak\_effort\_fc\_\fcc|( M, F, C ) \BODY   
  \text{Let $k$ be the least $\mathbb{N}$ s.t. } k \geq n \text{ and } \neg \card[\Sigma,k] \not\in M
  \IF |fix|( \card[\Sigma,k], M, F, C ) = \mathsf{false}
    \text{return } \mathsf{false}
  \FI 
  \text{For each } \So_i \in \Sigma, \text{let $k_i$ be the least $\mathbb{N}$ s.t. } \neg \card[\So_i,k_i] \not\in M 
  \IF |fix|( \card[\So_i,k_i], M, F, C ) = \mathsf{false} \text{ for a minimal $i$ }
    \text{return } \mathsf{false}
  \FI 
  \IF k_1 + \ldots + k_n > k
    \text{Apply \conflict{e} to } C := ( \displaystyle\vee_{i=1}^n \card[\So_i, k_i-1] \vee \neg \card[\Sigma, k] ) \\
    \text{return } \mathsf{false}
  \FI
  \text{return } |weak\_effort\_\fcc|( M, F, C )
\ENDPROC
\end{program}
\end{minipage}
\begin{minipage}[t]{.4\linewidth}
\begin{program}
\PROC |fix|( a, M, F, C ) \BODY 
    \IF a \not\in \lits{F}
      \text{Apply \learn{e} to } (a \vee \neg a) \\
      \text{return } \mathsf{false}
    \ELSEIF a \not\in M
      \text{Apply \decide to } a \\
      \text{return } \mathsf{false}
    \ELSE
      \text{return } \mathsf{true}
\end{program}
\end{minipage}

\caption{A version of the weak effort check procedure of the \fcc solver
that fixes the cardinality of uninterpreted sorts $\{ \So_1, \ldots, \So_n \}$ in signature $\Sigma$
according to a fair strategy.
}
\label{fig:weak-effort-fix-card-ms}
\end{figure}

\subsubsection*{Extension to Multiple Sorts}

Consider the case when our signature $\Sigma$ contains multiple sorts $\So_1, \ldots, \So_n$.
Given a set of input clauses $F$, we wish to determine that either $F$ is unsatisfiable, 
or find a tuple $( k_1, \ldots, k_n )$ such that $F \wedge \card[\So_1, k_1] \wedge \ldots \wedge \card[\So_n, k_n]$ is satisfiable.
To find such a tuple, a challenge is to devise a strategy that is \emph{fair}.
As an illustrative example, consider the formula $( c \tneq d \vee \varphi )$, 
where $c$ and $d$ are constants of sort $\So_1$, and the formula $\varphi$ does not have a model where $\So_2$ is interpreted as a finite set.\footnote{%
Observe that $\varphi$ must contain universal quantifiers for this to be the case.
}
Clearly this formula has a model where the cardinality of sorts $\So_1$ and $\So_2$ are $2$ and $1$ respectively.
However, in the absence of a fair strategy, 
a naive approach could search for models of size $( 1, 1 )$, $( 1, 2 )$, $( 1, 3 )$, and so on, ad infinitum.

To devise a strategy for finite model finding that is fair in the presence of multiple sorts,
we extend the signature $\Sigma$ of \fcc to include \emph{signature cardinality constraints} 
$\card[\Sigma, k]$, constants of sort $\bool$ for each integer $k > 0$.
Let $\Sigma$ be a signature containing uninterpreted sorts $\So_1, \ldots, \So_n$.
Let $\mathcal{I}$ be a $\Sigma$-interpretation that interprets sort $\So_i \in \Sigma$ as a set of size $k_i$ for $1 \leq i \leq n$.
Then, $\mathcal{I}$ satisfies $\card[\Sigma, k]$ if and only if $k_1 + \ldots + k_n \leq k$.

Figure~\ref{fig:weak-effort-fix-card-ms} gives an extension of the weak effort check of the \fcc solver
that introduces cardinality constraints for the purposes of finding small models.
In detail, we first find the minimal natural number $k$ such that the literal $\neg \card[\Sigma, k]$ does not occur in $M$.
Using the sub-routine |fix|, if the atom $\card[\Sigma, k]$ does not occur in $F$,
we apply \learn{e} to add $( \card[\Sigma, k] \vee \neg \card[\Sigma, k] )$ to $F$.
If it does occur in $F$, we apply \decide to $\card[\Sigma, k]$.
We then do the same for each of the uninterpreted sorts $\So_1, \ldots, \So_n$ in our signature.
If these steps do not apply a rule,
then $M$ contains the literals $\card[ \Sigma, k ]$ 
and $\neg \card[ \So_i, \ell ]$ for each $1 \leq \ell < k_i$, $i = 1, \ldots, n$.
We then check if $\card[ \Sigma, k ]$ is in conflict with the negatively asserted cardinality constraints.
In particular, $k_1 + \ldots + k_n > k$, we return a conflict of the form 
$( \card[\So_1, k_1-1] \vee \ldots \vee \card[\So_n, k_n-1] \vee \neg \card[\Sigma, k] )$,
where we write $\card[\So_i, k_i-1]$ to denote a cardinality constraint when $k_i>1$ and $\bot$ if $k_i=1$.
Otherwise, we apply the original weak effort check of the \fcc solver from Figure~\ref{fig:fcc-dpllts-opt-weak-strategy}.

Let \emph{fixed-cardinality \checkfcc}
be the strategy that applies the rules of \dplltsfcconly according to Figure~\ref{fig:dpllt-strategy}
where the weak effort check is the one from Figure~\ref{fig:weak-effort-fix-card-ms},
and the strong effort check is the one from Figure~\ref{fig:fcc-dpllts-opt-strong-strategy}.
This strategy maintains the following invariant.

\begin{proposition}
\label{prop:fixed-card-dpllts}
Given a signature $\Sigma$ containing uninterpreted sorts $\So_1, \ldots, \So_n$,
for each execution of fixed-cardinality \checkfcc ending in $\state{M,F,C}$,
either $M$ contains no decision points, or $M$ is of the form 
$N \ \bullet$ $\card[\Sigma,k] \  M_0\ ({} \bullet \card[\So_1, k_1] M_1)$ $\cdots$ $\ ({} \bullet \card[\So_m, k_m] \  M_m) \ N'$,
for some $m$, $0 \leq m \leq n$, where $N$,$M_0$, $\ldots$, $M_m$ contain no decision points, 
$N'$ contains no decision points if $m \lt n$,
$\neg \card[\Sigma, j] \prec_M \card[\Sigma, k]$ for each $n \leq j \lt k$, 
and $\neg \card[\So_i, j] \prec_M \card[\So_i, k_i]$ for each $1 \leq i \leq m$, $1 \leq j \lt k_i$.
\end{proposition}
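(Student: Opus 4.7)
The plan is to proceed by induction on the number of rule applications in an execution of fixed-cardinality \checkfcc starting from the initial state $\state{\emptyset, F_0, \none}$. The base case is immediate since $M = \emptyset$ trivially contains no decision points. For the inductive step, assume the invariant holds in a reachable state $\state{M, F, C}$ and check that it is preserved by each rule the strategy may next apply. The rules \learn{i}, \conflict{i}, \explain{i}, and \fail leave $M$ unchanged, so the invariant is preserved trivially. \propagate{i} appends a non-decision literal at the current decision level, thereby extending $N$, some $M_j$, or $N'$, and so the structural form is preserved. The rule \backjump rewrites $M$ as $\pref{M}{j}\, l$ for some $j$ less than the current decision level; any prefix of $M$ cut at a decision boundary inherits the invariant form (with a possibly smaller $m$ and a truncated $N'$), and appending the non-decision literal $l$ does not disturb it.

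The substantive work lies in \decide. Under the strategy, \decide fires either inside a call to $|fix|$ on a cardinality atom during the weak-effort check, or at the top level of $|check|$ on an arbitrary atom of $F$, the latter only once $|weak\_effort\_fc\_\fcc|$ has returned true. Suppose first that \decide is applied by $|fix|(\card[\Sigma,k], M, F, C)$: by the code of $|weak\_effort\_fc\_\fcc|$, $k$ is the least integer with $k \geq n$ and $\neg\card[\Sigma,k] \notin M$, so $\neg\card[\Sigma,j] \in M$ for every $n \leq j < k$, which yields $\neg\card[\Sigma,j] \prec_M \card[\Sigma,k]$ once the decision is appended. I would then argue, using the inductive hypothesis, that no positive $\card[\Sigma,\cdot]$ literal is already in $M$: if $\card[\Sigma,k_0] \in M$, the invariant form forces $\neg\card[\Sigma,j] \in M$ for $n \leq j < k_0$, so the minimality search returns $k = k_0$, in which case $|fix|$ would have returned true rather than applied \decide. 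The case of \decide applied by $|fix|(\card[\So_i, k_i], \ldots)$ is analogous, with the extra observation that $|fix|$ is invoked on $\So_i$ only after the preceding calls for $\Sigma$ and $\So_1, \ldots, \So_{i-1}$ all returned true, which by induction pins down the required prefix $N \bullet \card[\Sigma,k] M_0 \bullet \card[\So_1,k_1] M_1 \cdots \bullet \card[\So_{i-1},k_{i-1}] M_{i-1}$; appending $\bullet \card[\So_i,k_i]$ extends this legally. Finally, a top-level \decide on a non-cardinality atom is reached only after every cardinality $|fix|$ call has returned true, which forces $m = n$, so the new decision point safely lies in $N'$.

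The step I expect to require the most care is the interplay of the minimality-driven choices inside $|fix|$ with \backjump. A backjump may shorten $M$ into the middle of its cardinality-decision region, after which a fresh sequence of $|fix|$ calls may legitimately select different values of $k$ or the $k_i$, because earlier positive cardinality atoms can effectively be replaced, at the lower decision level, by negative ones propagated from the learned conflict clause. The delicate bookkeeping is to show that after any such truncation the minimality computation performed by $|weak\_effort\_fc\_\fcc|$ is still consistent with the re-established invariant form, so that the next cardinality decision lands in the correct slot. This is a matter of careful case splitting on what remains in $\pref{M}{j}$ rather than of deep insight, but it is where most of the writing effort would go.
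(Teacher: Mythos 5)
Your induction over rule applications is the right way to substantiate this claim, and it is considerably more explicit than what the paper provides: the paper offers no proof at all, asserting only that the invariant ``follows directly from the definition of the method'' in Figure~\ref{fig:weak-effort-fix-card-ms}. Your handling of \decide via the minimality conditions inside the $\mathsf{fix}$ sub-routine, and in particular the observation that a positive $\card[\Sigma,k_0]$ already present in $M$ forces $\mathsf{fix}$ to return true rather than re-decide, are exactly the points the paper is implicitly relying on, so in spirit the two arguments coincide.

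The one place your sketch stops short is the case you yourself flag: the interaction with \backjump. The difficulty is not merely that truncation may land mid-region, but that the literal $l$ asserted by \backjump can itself be a \emph{positive} cardinality atom: the conflict clause produced by the weak effort check is $\bigvee_{i=1}^{n} \card[\So_i,k_i-1] \vee \neg\card[\Sigma,k]$, whose literals other than the last are positive sort-cardinality constraints, so conflict analysis can end with \backjump asserting some $\card[\So_i,k_i-1]$ as a non-decision literal at a low level. The subsequent call to $\mathsf{fix}$ for $\So_i$ then finds its target atom already in $M$ and returns true \emph{without} creating a decision point, so the next cardinality decision may be on $\So_{i+1}$, and the stated shape --- consecutive decisions $\card[\So_1,k_1],\ldots,\card[\So_m,k_m]$ in index order --- fails literally unless one either reads the proposition's indexing loosely or argues that such backjump-asserted positive literals can be absorbed into the preceding segment $M_{i-1}$ with the decision sequence re-indexed accordingly. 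Since the paper gives no argument, this is not a divergence from the paper, but it is the case analysis you would have to carry out to turn your outline into a complete proof; the rest of your cases (\propagate{e}, the conflict-handling rules, and top-level \decide after all $\mathsf{fix}$ calls succeed) are handled correctly.
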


In other words, using the strategy fixed-cardinality \checkfcc,
minimal positive cardinality literals are the first decision literals in satisfying assignments.
This invariant follows directly from definition of the method given in Figure~\ref{fig:weak-effort-fix-card-ms}.

\begin{thm}
\label{thm:fixed-card-dpllts}
Fixed-cardinality \checkfcc is a sound, complete and terminating strategy 
for every set of ground clauses $F$.
\end{thm}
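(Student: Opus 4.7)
The plan is to extend the proof of Theorem~\ref{thm:fcc-dpllts-correct} to account for the modified weak effort check in Figure~\ref{fig:weak-effort-fix-card-ms}, which differs from the original only in that it may introduce signature- and sort-cardinality literals via \learn{e}, apply \decide on them, or derive a conflict of the form $(\bigvee_{i=1}^{n} \card[\So_i, k_i - 1] \vee \neg \card[\Sigma, k])$. Soundness is routine: the clauses introduced by \learn{e} are either instances of excluded middle $(a \vee \neg a)$ or the clique lemmas of Section~\ref{sec:fcc-dpllts-opt}, both valid in $\fcc$; the new conflict clause is easily seen to be $\fcc$-valid; and \decide applies only to literals present in $F$. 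Hence every rule application is legal, and the first requirement of Proposition~\ref{prop:dpllt-strat} continues to hold.

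Completeness is just as direct: the strong effort check is the one in Figure~\ref{fig:fcc-dpllts-opt-strong-strategy}, unchanged from Theorem~\ref{thm:fcc-dpllts-correct}, so whenever $|strong\_effort\_\fcc|$ returns $\mathsf{true}$ the congruence closure of $M$ respects all asserted cardinality bounds and a normal model exists, from which a normal $\fcc$-model of $F$ can be built as in the completeness proof of Lemma~\ref{lem:fcc-correct}.

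The hard part is termination, since the set of literals that may appear in $M$ is no longer \emph{a priori} finite: each phase can introduce fresh atoms $\card[\Sigma, k]$ for arbitrarily large $k$, so Proposition~\ref{prop:dpllt-strat} cannot be invoked directly. I would structure the argument in phases using the invariant of Proposition~\ref{prop:fixed-card-dpllts}: at any point of the execution, the assignment decides on a unique minimal positive signature-cardinality literal $\card[\Sigma, k]$, which designates a ``phase $k$''. Within phase $k$, only finitely many literals can be generated, namely equalities and disequalities between terms of $F$, clique lemmas over those terms, and cardinality atoms $\card[\Sigma, j]$ and $\card[\So_i, j]$ for $j \leq k$. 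Restricted to this finite vocabulary, the termination argument of Theorem~\ref{thm:fcc-dpllts-correct} applies verbatim, so each phase ends after finitely many transitions in one of three ways: success via $|strong\_effort|$ (ending in \sat), \fail (ending in $\bot$), or a \backjump that produces $\neg \card[\Sigma, k]$ and advances the execution into phase $k+1$.

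Bounding the number of phases is then the crux. If $F$ is $\fcc$-satisfiable then, by Proposition~\ref{prop:fcc-finite}, some least $k^*$ admits a model of $F$ whose signature cardinality is $k^*$, and the strategy must terminate with \sat upon reaching phase $k^*$. If $F$ is $\fcc$-unsatisfiable then, since $F$ contains no cardinality constants, it is also \euf-unsatisfiable; the conflict clauses produced by congruence analysis in each phase are pure \euf clauses over a finite set of literals (those occurring in $F$), and by the standard DPLL($T$) argument used in the proof of Theorem~\ref{thm:dpllts-correct} only finitely many such clauses can be learned before an empty conflict clause is derived at decision level 0, triggering \fail. Together, these cases yield termination in every execution.
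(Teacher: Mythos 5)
Your soundness and completeness points match the paper's, but your termination argument has a genuine gap, and it stems from a premise the paper shows to be false: you assert that the set of literals introducible by \learn{e} is not a priori finite, so that Proposition~\ref{prop:dpllt-strat} cannot be invoked, and you therefore fall back on a phase-per-$\card[\Sigma,k]$ decomposition. The problem is that your phase argument never bounds the number of phases when $F$ is unsatisfiable. A phase can end with a \backjump that asserts $\neg \card[\Sigma,k]$ (for instance after clique lemmas refute every feasible split of $k$ among the sort cardinalities), and your observation that only finitely many pure \euf conflict clauses exist does not prevent the execution from advancing to phase $k+1$ before the empty clause is derived; a priori this could recur forever, and the clauses driving the phase transitions mention a fresh cardinality atom each time, so they do not live in a fixed finite set under your own assumption. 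Asserting that \fail ``is eventually triggered'' is precisely the claim that needs proof here.

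The missing idea is the paper's bound on the cardinality literals themselves: a literal $\neg \card[\So_i,j]$ can enter $M$ only via \propagate{e} or \backjump, hence only when some literals occurring in $F$ $\fcc$-entail it, which requires at least $j+1$ terms of sort $\So_i$ in $F$. Consequently no literal $(\neg) \card[\So_i,k_i]$ with $k_i$ exceeding the number of $\So_i$-terms of $F$ is ever introduced by the |fix| subroutine, and likewise no $(\neg) \card[\Sigma,k]$ with $k$ beyond the sum of these per-sort bounds. This makes the set $L_\fcc$ finite after all, so Proposition~\ref{prop:dpllt-strat} \emph{can} be applied directly, exactly as in the proof of Theorem~\ref{thm:fcc-dpllts-correct}, and neither the phase decomposition nor the case split on the satisfiability of $F$ is needed. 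If you wish to keep your phase structure, you must at minimum establish this bound on the phase index; without it the unsatisfiable case remains open.
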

\begin{proof}
Assume our signature $\Sigma$ contains uninterpreted sorts $\So_1, \ldots, \So_n$.
Note that the weak effort check method in Figure~\ref{fig:weak-effort-fix-card-ms} 
extends our original weak effort check while additionally applying only legal
applications of \dpllts rules, noting we apply \learn{e} to tautologies of the form $(a \vee \neg a)$,
\conflict{e} to sets of literals that are collectively inconsistent according to our extension of \fcc, and
\decide to literals whose atom does not occur in $M$.
To show this strategy is sound, complete, and terminating, 
we again follow the three requirements for weak and strong effort checks as given in Proposition~\ref{prop:dpllt-strat}.

To show the first point,
we must show that the set $L_\fcc$ of literals introduced by applications of \learn{e} is finite.
For each $1 \leq i \leq n$, let $k_i$ be smallest integer greater than the number of terms of sort $\So_i$ in $F$,
and such that the literal $\neg \card[\So_i,k_i]$ does not occur in $F$.
Let $k$ be the smallest integer greater than $k_1 + \ldots + k_n$,
and such that the literal $\neg \card[\Sigma,k]$ does not occur in $F$.
We claim that the set of literals introduced by applications of \learn{e}, call them $L^{fc}_\fcc$,
are a subset of the set of all equalities and disequalities between terms from $F$,
the literals of the form $(\neg) \card[\So_i,j]$ where $1 \leq j \lt k_i$ for each sort $\So_i$,
and the literals $(\neg) \card[\Sigma,j]$ where $n \leq j \lt k$.
First,
only equalities and disequalities between terms from $F$ are introduced by applications of \learn{e}
for the same reason as in Theorem~\ref{thm:fcc-dpllts-correct}.
Second, assume by contradiction that a literal $(\neg) \card[\So_i,k_i]$ is introduced by an application of \learn{e}.
Then, it must be the case that an execution of fixed-cardinality \checkfcc results in a state where
$\neg \card[\So_i,k_i-1] \in M$.
For this to be the case, 
$\neg \card[\So_i,k_i-1]$ must be added to $M$ by \propagate{e} or \backjump.
In either case, there must exist a set of literals $l_1, \ldots, l_n$ from $F$
such that $l_1, \ldots, l_n \models_\fcc \neg \card[\So_i,k_i-1]$.
By our selection of $k_i$ this is a contradiction
since there must be at least $k_i$ terms of sort $\So_i$ in $F$ for this to be the case.
Third, for similar reasons, by our selection of $k$, 
a literal $(\neg) \card[\Sigma,k]$ cannot be introduced by an application of \learn{e} 
unless there exists an execution of fixed-cardinality \checkfcc resulting in a state where
$\neg \card[\So_i,j] \in M$ for some $\So_i$ where $j \geq k_i$.
This cannot be the case for the reasons mentioned above.

To show the second point, weak effort check method in Figure~\ref{fig:weak-effort-fix-card-ms} 
returns $\mathsf{false}$ only when it applies a rule.

To show the third point,
the strong effort check of fixed-cardinality \checkfcc is the same as the strong effort check in \checkfcc
and thus this holds for the same reason as in the proof of Theorem~\ref{thm:fcc-dpllts-correct}.
\qed
\end{proof}

\ 

Combining the results of Theorem~\ref{thm:fixed-card-dpllts} and Proposition~\ref{prop:fixed-card-dpllts},
given as input a set of ground $\Sigma$-clauses $F$,
fixed-cardinality \checkfcc will terminate either in 
$(i)$ a fail state, establishing that $F$ is unsatisfiable, or
$(ii)$ a state $\state{M,F,\none}$ where $M$ contains $\card[ \So_i, k_i ]$ for each uninterpreted sort $\So_i$ in $\Sigma$,
establishing that $F$ is satisfied by a (finite) model.

For the remainder of the paper, we assume that
Step 1 of our finite model finding procedure in Figure~\ref{fig:fmf-proc}
uses fixed-cardinality \checkfcc for finding satisfying assignments $M$.

\section{Constructing Candidate Models}
\label{sec:fmf-model-construct}

We now focus our attention to Step 2 of procedure \fmsolve from Figure~\ref{fig:fmf-proc}, 
which attempts to constructs models $\M$ of satisfying assignments $M$ 
for the input clause set $F$.
We refer to $\M$ as a \emph{candidate model}.
Note that the assignment $M$ computed by the procedure may contain occurrences
of proxy variables $a$ for quantified formulas $\forall \vec x\, \varphi$
with the variables in $\vec x$ ranging over uninterpreted or \emph{finite} sorts.
Recall that those formulas are stored in the input set $A$ 
in equivalences of the form $a \Leftrightarrow \forall \vec x\, \varphi$.
The goal of the procedure is to construct $\M$ so that it satisfies 
not just $M$ but also all its active quantified formulas
(those whose proxy variable $a$ occurs positively in $M$).
The reason is that such a model witnesses the $T$-satisfiability of
$F \cup A$.

To discuss the model construction we focus
on the variables and the uninterpreted sorts and function symbols of $\Sigma$,
since the interpretation of the other sorts and function symbols is fixed by the theory.
We construct a candidate model $\M$ by associating each uninterpreted sort $S$
with a finite set $\vals_S$ of \emph{domain elements} (i.e., $S^\M = \vals_S$).
Contrary to other model finding approaches, which use fresh symbols as domain elements,
we use the equivalence classes of $\cc M$ or, rather,
representative terms for these classes.
All interpreted sorts are interpreted in $\M$ as usual.
We extend $\cc M$ to another $T$-satisfiable set, call it $\ccc M$,
such that the representative of each equivalence class 
of interpreted sort $S_i$ in $\ccc M$ is a value from $S_i^\M$.
Such an extension is always possible since $M$ is $T$-satisfiable.

We then associate each uninterpreted function $f$ of sort 
$S_1 \times \ldots \times S_n \rightarrow S$
to a function $f^\M$ from $S_1^\M \times \cdots \times S_n^\M$ to $S^\M$.
We construct this function based on the literals in $M$ that contain $f$.
For instance, if $M$ contains $f( c ) \teq b$, then $f^\M$ is defined so that 
it maps the interpretation of $c$ to the interpretation of $b$.
Using those equalities typically produces only a partial definition for $f$.
To complete it, one can use arbitrary output values for the missing input tuples.
We describe choices for doing so in the following.

Concretely, we represent candidate $\Sigma$-models with the following data structure.

\begin{definition}[Defining map]
\label{def:defining-map}
Let $f : S_1 \times \cdots \times S_n \rightarrow S$ be an uninterpreted function symbol
of $\Sigma$ and
let $y_1, \ldots, y_n$ be distinct fresh variables of respective sort $S_1, \ldots, S_n$.
A \emph{defining map for $f$} is a finite set $\Delta_f$ 
of well-sorted (directed) equations of the form $f(t_1,\ldots,t_n) \teq v$ 
with $v \in S^\M$ and 
$t_i \in \{y_i\} \cup S_i^\M$ for $i=1,\ldots,n$,
satisfying the following requirements.
\begin{enumerate}
\item  \label{it:uniqueness}
If $s_1 \teq v_1, s_2 \teq v_2 \in \Delta_f$ with $s_1 \neq s_2$
and
$s_1$ and $s_2$ have an mgu $\sigma$, then 
\smallskip

\begin{enumerate}
\item
$\sigma$ is non-empty, and 
\smallskip

\item
$s_1\sigma \teq v \in \Delta_f$ for some $v$.
\end{enumerate}
\smallskip

\item \label{it:existence}
$f(y_1, \ldots, y_n) \teq v \in \Delta_f$ for some $v$.
\end{enumerate}
A \emph{$\Sigma$-map} is 
a set $\Delta = \bigcup_{f \in \sfuns{\Sigma}} \Delta_f$
where each $\Delta_f$ is a defining map for $f$.
\qed
\end{definition}

For the rest of this section, we will use letters $y, y_1, y_2, \ldots$
to denote variables and $c, c_1, c_2, \ldots$ to denote constant symbols.

\begin{example}
The set $\{ f( c_1, y_2 ) \teq c_2, f( y_1, c_2 ) \teq c_1, f( c_1, c_2 ) \teq c_3, f( y_1, y_2 ) \teq c_3 \}$
is a defining map for $f$.
Notice that $f( c_1, y_2 )$ and $f( y_1, c_2 )$ have mgu $\{ y_1 \mapsto c_1, y_2 \mapsto c_2 \}$.
As required in point~\ref{it:uniqueness}, this mgu is non-empty and an equality of the form $f( c_1, c_2 ) \teq v$ also occurs in this set.
\qed
\end{example}

\noindent
By construction of $\Delta$, every \emph{flat term},
a $\Sigma$-term $t = f(v_1,\ldots,v_n)$
has exactly one \emph{most specific generalization} $s$ among the left-hand sides
of the equalities in $\Delta_f$,
where $s$ is a generalization of $t$ if $t = s\sigma$ for some substitution $\sigma$,
and $s$ is more specific than $s'$ if $s'$ is a generalization of $s$.
The existence of this generalization is guaranteed by Point~\ref{it:existence} 
in the definition above;
its uniqueness by Point~\ref{it:uniqueness}.
The \emph{value of $t$ in $\Delta$} is 
the value $v$ in the (unique) equality $s \teq v \in \Delta_f$.
Thus, a $\Sigma$-map $\Delta$ represents a normal model $\M$
where each uninterpreted sort $S$ is interpreted as the term set $\vals_S$ and 
each uninterpreted function symbol $f : S_1 \times \cdots \times S_n \rightarrow S$ 
is interpreted as the function $f^\M$ mapping 
every $(v_1,\ldots,v_n) \in S_1^\M \times \cdots \times S_n^\M$
to the value of $f(v_1,\ldots,v_n)$ in $\Delta$.\footnote{%
More precisely,
a $\Sigma$-map represents a family of normal models which differ only 
over the variables and the interpreted symbols of $\Sigma$. 
}

\begin{figure}[t]
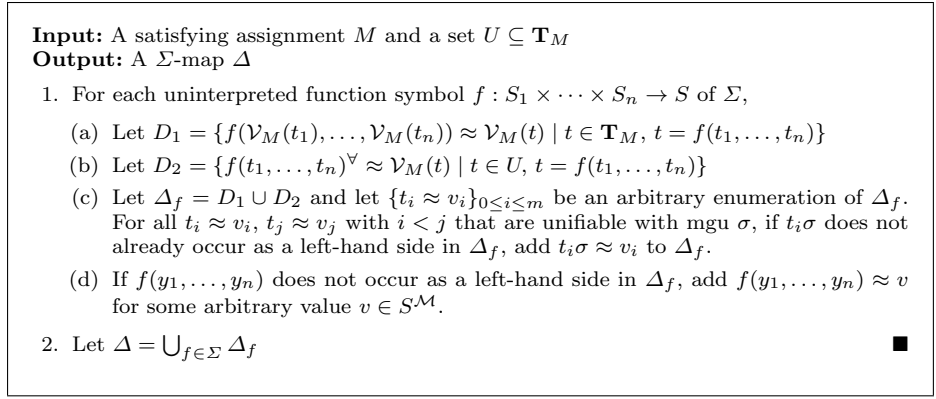

\begin{framed}
{\bf Input:} A satisfying assignment $M$ and a set $U \subseteq \terms_M$ \\
{\bf Output:} A $\Sigma$-map $\Delta$
\begin{enumerate}

\item
For each uninterpreted function symbol $f : S_1 \times \cdots \times S_n \rightarrow S$ 
of $\Sigma$,
\smallskip

\begin{enumerate}
\item 
Let 
$D_1 = \{ f(\evmap{M}(t_1),\ldots,\evmap{M}(t_n)) \teq \evmap{M}(t) \mid 
          t \in \terms_M,\, t = f(t_1,\ldots,t_n) \}
$
\smallskip

\item
Let 
$D_2 = \{ f(t_1,\ldots,t_n)^\forall \teq \evmap{M}(t) \mid 
          t \in U,\, t = f(t_1,\ldots,t_n) \}
$
\smallskip

\item
Let $\Delta_f = D_1 \cup D_2$ and
let $\{t_i \teq v_i\}_{0\leq i \leq m}$ be an arbitrary enumeration of $\Delta_f$.
For all $t_i \teq v_i$, $t_j \teq v_j$ with $i < j$ that are unifiable with mgu $\sigma$,
if $t_i\sigma$ does not already occur as a left-hand side in $\Delta_f$,
add $t_i\sigma \teq v_i$ to $\Delta_f$.
\smallskip

\item
If $f(y_1,\ldots,y_n)$ does not occur as a left-hand side in $\Delta_f$,
add $f(y_1,\ldots,y_n) \teq v$ for some arbitrary value $v \in S^\M$.
\end{enumerate}
\smallskip

\item
Let $\Delta = \bigcup_{f \in \Sigma} \Delta_f$
\qed
\end{enumerate}
\end{framed}
\caption{Model construction procedure.}
\label{fig:mcp}
\end{figure}

\subsubsection*{Model Construction Procedure}
We now describe a procedure for constructing $\Sigma$-maps
from satisfying assignments.
In particular, we describe a parametrized method for completing
the partial definitions (of uninterpreted functions) induced by 
an assignment $M$.

Let $M$ be an assignment.
Recall that if $M$ is $T$-satisfiable,
it is satisfied by a normal model, that is, a model that interprets each uninterpreted sort $S$ 
as the set $\vals_S$ consisting of a representative term for each equivalence class of 
(the extension of) $M$'s congruence closure $\ccc M$.
For each term $t$, 
we write $\evmap{M}(t)$ to denote the representative of
$t$'s equivalence class in $\ccc M$.

For every uninterpreted function symbol 
$f : S_1 \times \cdots \times S_n \rightarrow S$ in $\Sigma$,
we fix $n$ distinct fresh variables $y_1, \ldots, y_n$ of respective sort 
$S_1 \ldots, S_n$.
To each uninterpreted sort $S$, we associate a distinguished ground $\Sigma$-term $e^S$,
which we will write as $e$ when $S$ is understood.
This ground term guides the selection of default values of the interpretation of 
uninterpreted function symbols in our model construction procedure, 
based on the following operation.
For a ground $\Sigma$-term $f(t_1, \ldots, t_n)$, 
we denote by $f( t_1, \ldots, t_n )^\forall$ 
the term $f( u_1, \ldots, u_n )$
where $u_i = y_i$ if $t_i = e$, and 
$u_i = \evmap{M}(t_i)$ otherwise, for $i=1, \ldots, n$.
\medskip

The non-deterministic procedure described in Figure~\ref{fig:mcp}
constructs a $\Sigma$-map from $M$ and a subset $U$ of the set of terms $\terms_{M}$ occurring in $M$.
The subset $U$ determines which terms will be used as the basis for default values
of function interpretations. 
For example, let $\M$ be a normal model induced by the defining map
constructed by the procedure in Figure~\ref{fig:mcp} for some $U$. 
If $f( e, e ) \in U$, then
the default value of $f$ in $\M$ is the value of $f( e, e )$ in $\M$.
In our implementation of the procedure, we choose the set $U$ to be
the entire set $\terms_{M}$, although other choices are possible.

\begin{example}
Consider an assignment $M$ with the following constraints 
\[
\{c_1 \teq f(c_2,e),\, c_3 \teq f(c_4,c_6),\, c_3 \teq f(e,c_4),\, c_6 \teq f( c_2, c_5 ),\, c_2 \teq c_5,\, c_4 \teq f(e,e) \}
\]
where all terms are of uninterpreted sort $\So$.
The equivalence classes in the congruence closure of $M$ are
\[
 \{ c_1, f( c_2, e ) \},\; 
 \{ c_2, c_5 \},\;  
 \{ c_3, f( c_4, c_6 ), f( e, c_4 ) \},\;
 \{ c_4, f( e, e ) \}, \{ c_6, f( c_2, c_5 ) \}, \{ e \} \ .
\]
Let $\vals_S = \{ c_1, c_2, c_3, c_4, c_6, e \}$ 
and $U = \{ f( c_2, e ), f( e, c_4 ), f( e, e ) \}$.
Following the procedure to construct the defining map $\Delta_f$,
we let:
\[\begin{array}{r@{\hspace{1em}}c@{\hspace{1em}}l}
D_1 & = & \{ f( c_2, e ) \teq c_1, f( c_4, c_6 ) \teq c_3, f( c_2, c_2 ) \teq c_6, f( e, e ) \teq c_4 \} 
\\[1ex]
D_2 & = & \{ f( c_2, y_2 ) \teq c_1, f( y_1, c_4 ) \teq c_3, f( y_1, y_2 ) \teq c_4 \}
\\[1ex]
\Delta_f & = & D_1 \cup D_2 
\end{array}\]
Since $f( c_2, y_2 )$ and $f( y_1, c_4 )$ are unifiable with $\sigma = \{ y_1 \mapsto c_2, y_2 \mapsto c_4 \}$,
and $f( c_2, c_4 )$ is not in $\Delta_f$, we add the equality $f( c_2, c_4 ) \teq c_1$ (alternatively, $f( c_2, c_4 ) \teq c_3$) to $\Delta_f$.
Finally, since $f( y_1, y_2 )$ is already in $\Delta_f$, this gives us the set
\[\begin{array}{r@{\hspace{1em}}c@{\hspace{1em}}ll}
\Delta_f & = & \{ & f( c_2, e ) \teq c_1, f( c_4, c_6 ) \teq c_3, f( c_2, c_2 ) \teq c_6, f( e, e ) \teq c_4, \\
 & & & f( c_2, y_2 ) \teq c_1, f( y_1, c_4 ) \teq c_3, f( y_1, y_2 ) \teq c_4, f( c_2, c_4 ) \teq c_1 \}
\end{array}\]
which is a complete definition for $f$.
Notice that a different selection of $U$ would have led to a different construction for $\Delta_f$.
Let $\M$ be the normal model induced by a $\Delta$ containing $\Delta_f$.
We have that, for instance, $\evaluate{f( c_2, c_3 )}{\M} = c_1$
since $f( c_2, y_2 ) \teq c_1 \in \Delta_f$ and
$f( c_2, y_2 )$ is the most specific generalization of $f( c_2, c_3 )$ among the left-hand sides of equalities in $\Delta_f$.
Similarly, we have that $\evaluate{f( c_6, c_4 )}{\M} = c_3$ and $\evaluate{f( c_3, c_3 )}{\M} = c_4$.
\qed
\end{example}

\begin{proposition}
Let $M$ be a $T$-satisfiable assignment containing only uninterpreted function symbols 
over uninterpreted sorts.
The set $\Delta$ constructed by the procedure in Figure~\ref{fig:mcp} is a $\Sigma$-map.
Moreover, the normal model $\M$ represented by $\Delta$ satisfies $M$.
\end{proposition}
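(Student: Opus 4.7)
The plan is to verify the two conditions of Definition~\ref{def:defining-map} for each $\Delta_f$ and then show that the normal model $\M$ induced by $\Delta$ satisfies every literal of $M$. For the defining-map conditions: condition~\ref{it:existence} is delivered directly by Step~1(d), which explicitly inserts some $f(y_1,\ldots,y_n) \teq v$ if none is already present. Condition~\ref{it:uniqueness}(a) is purely logical: if $s_1 \neq s_2$ share the empty substitution as mgu, then $s_1 = s_2$, a contradiction. The real content is condition~\ref{it:uniqueness}(b), which is what Step~1(c) is designed to enforce.

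For condition~\ref{it:uniqueness}(b), the subtlety is that an equation $t_i\sigma \teq v_i$ added by a unifying pair can itself form a new unifiable pair with another member of $\Delta_f$. I would interpret Step~1(c) as a fixed-point iteration and argue termination as follows. Every left-hand side appearing in $D_1 \cup D_2$ has the shape $f(w_1,\ldots,w_n)$ with $w_i \in \{y_i\} \cup V_i$, where $V_i$ is the (finite) set of values occurring at position $i$ in some equation. The mgu of two such terms is again of this restricted shape: at each position, either both sides already agree, or one side is $y_i$ and the other is a value, in which case the mgu specializes $y_i$ to that value. Hence the iteration stays inside a finite universe of left-hand sides and must terminate; at termination, condition~\ref{it:uniqueness}(b) holds by construction. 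Uniqueness of the right-hand side for each left-hand side, needed to make the value of a flat term well defined, follows because $\evmap{M}$ is a function on equivalence classes of $\cc M$ and because Step~1(c) refuses to add an equation whose left-hand side is already present.

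For the second claim, I would first prove by structural induction on $\Sigma$-terms $t \in \terms_M$ that $\evaluate{t}{\M} = \evmap{M}(t)$. The base case for values is immediate. For the inductive step with $t = f(t_1,\ldots,t_n) \in \terms_M$, the induction hypothesis gives $\evaluate{t_i}{\M} = \evmap{M}(t_i)$; by construction of $D_1$ the ground equation $f(\evmap{M}(t_1),\ldots,\evmap{M}(t_n)) \teq \evmap{M}(t)$ lies in $\Delta_f$, and a fully ground left-hand side is trivially its own most specific generalization among its generalizations in $\Delta_f$, so $\evaluate{t}{\M} = \evmap{M}(t)$. From this invariant, every positive equality $s \teq t$ in $M$ holds in $\M$ because $s$ and $t$ share a representative under $\ccc M$, and every disequality $s \tneq t$ in $M$ holds in $\M$ because the $T$-satisfiability of $M$ forces their representatives in $\ccc M$ to differ.

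The principal obstacle I foresee is the rigorous justification of the closure step: one must be careful to interpret Step~1(c) as a fixed-point process and then use the finiteness of the lattice of admissible left-hand sides, together with the closure of that lattice under mgu instantiation, to guarantee the construction yields a well-defined $\Sigma$-map. Once that is in hand, both parts of the proposition follow from the plan above.
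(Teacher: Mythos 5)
Your proposal is correct and follows essentially the same route as the paper's proof: verify the conditions of Definition~\ref{def:defining-map} for each $\Delta_f$, then show $\evaluate{t}{\M} = \evmap{M}(t)$ for all $t \in \terms_M$ via the ground equations in $D_1$, and conclude by $T$-satisfiability of $\ccc M$. Two points of divergence are worth recording. First, you read Point~\ref{it:uniqueness}(a) literally (two syntactically distinct left-hand sides cannot have an empty mgu, so the clause is vacuous), whereas the paper reads it as asserting functionality of $\Delta_f$ and proves it by contradiction: $t \teq v_1, t \teq v_2 \in D_1 \cup D_2$ with $v_1 \neq v_2$ would yield terms $f(t_1,\ldots,t_n), f(s_1,\ldots,s_n) \in \terms_M$ with pairwise congruent arguments but non-congruent values, contradicting consistency of $M$. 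You do cover this well-definedness separately, though your appeal to ``$\evmap{M}$ is a function on equivalence classes'' leaves the congruence step (same argument representatives implies same class for the applications) implicit, which is exactly what the paper spells out. Second, and to your credit, you identify that Step~1(c) must be read as a fixed-point closure, since an added equation $t_i\sigma \teq v_i$ can create new unifiable pairs, and you supply a termination argument (the left-hand sides live in a finite universe closed under mgu instantiation). The paper's proof simply asserts that Step~1(c) ensures Point~\ref{it:uniqueness}(b) and does not address either the need for iteration or its termination, so your treatment is more careful on this point and buys a genuinely complete argument that $\Delta_f$ is a defining map.
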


\begin{proof}
To show that $\Delta$ is a $\Sigma$-map, 
we show that $\Sigma_f$ is a defining map for each function symbol $f$ of $\Sigma$.
Step 1(c) of the procedure ensures that 
Point~\ref{it:uniqueness}(b) of Definition~\ref{def:defining-map} is met for all pairs 
of equalities in $\Delta_f$,
while Step 1(d) makes sure that Point~\ref{it:existence} is met.
We prove by contradiction that Point~\ref{it:uniqueness}(a) of 
Definition~\ref{def:defining-map} also holds for $\Delta_f$.
Assume that $t \teq v_1, t \teq v_2 \in \Delta_f$ with $v_1 \neq v_2$. 
Due to our construction, both $t \teq v_1$ and $t \teq v_2$ are in $D_1 \cup D_2$.
Thus, there must exist terms 
$t = f( t_1, \ldots, t_n )$ and $s = f( s_1, \ldots, s_n )$ in $\terms_M$ 
such that $\evmap{M}(t_1) = \evmap{M}(s_1)$, $\ldots$, $\evmap{M}(t_n) = \evmap{M}( s_n )$
and $\evmap{M}(t) = v_1 \neq v_2 = \evmap{M}(s)$,
contradicting our assumption that $M$ is a (consistent) satisfying assignment.
Thus, $\Delta_f$ is a defining map for all $f \in \Sigma$, and thus $\Delta$ is a $\Sigma$-map.

For each term $f( t_1, \ldots, t_n ) \in \terms_M$, 
we have that $f( \evmap{ M }( t_1 ), \ldots, \evmap{ M }( t_n ) ) \teq \evmap{M}(t) \in D_1$,
and thus $\M( t ) = \evmap{M}( t )$.
Thus, $\M$ satisfies all equalities between pairs of terms in the same equivalence class of $\ccc M$.
Since $\ccc M$ is $T$-satisfiable, we have that $\M$ satisfies all disequalities in $\ccc M$ as well.
Since $\ccc M$ is a superset of $M$, we have that $\M$ satisfies $M$.
\qed
\end{proof}

\section{Model-Based Quantifier Instantiation}
\label{sec:fmf-mbqi}

We now focus our attention on Step 3 of our finite model finding procedure \fmsolve
from Figure~\ref{fig:fmf-proc}.
In this step, the procedure \fmsolve considers quantified formulas in the set:
\begin{equation} \label{eq:q}
\{ \forall \tupl{x}\,\varphi \mid (a \Leftrightarrow \forall \tupl{x}\,\varphi) \in A \text{ and } a \in M\}
\end{equation}
Call this set $Q$.
For each formula $\forall \tupl{x}\,\varphi \in Q$, it uses a quantifier instantiation heuristic $\He$
that returns a set of substitutions from $\tupl{x}$ to terms in the set $\vals$ constructed in Step 2.
A trivial way to implement $\He$ is to choose all such possible substitutions.
If $\tupl{x}$ is a tuple of $n$ variables each ranging of a sort with $k$ domain elements, 
this heuristics will return $k^n$ substitutions, 
which is clearly unfeasible unless both $k$ and $n$ are rather small.
Significantly more scalable heuristics can be adopted 
if it is possible to identify sets of substitutions $\sigma$ yielding 
instances $\varphi\sigma$ that are already satisfied by the current candidate model, 
as these substitutions can be safely ignored.
These heuristics are collectively known as \emph{model-based quantifier instantiation}.

A way to perform model-based quantifier instantiation, 
as implemented in the SMT solver Z3~\cite{GeDeM-CAV-09},
is to use the SMT solver itself as an oracle:
a separate copy of the SMT solver is run on another query to determine 
whether a candidate model $\M$ satisfies each quantified formula.
If it does not, a single instance that is falsified by $\M$ is added
to the current clause set $F$.
This approach incurs the performance overhead of constructing the corresponding query 
as well as initializing the oracle.
Our version of model-based instantiation relies instead upon specialized data structures 
when checking candidate models and choosing instantiations,
and may add more than one instantiation per invocation.

We describe below a model-based quantifier instantiation method 
that identifies entire sets of instances as satisfiable in $\M$
without actually generating and checking those instances individually~\cite{ReyEtAl-2-RR-13}.
The main idea is to determine the satisfiability
in $\M$ of some instance $\varphi\sigma$ of a quantified formula 
$\forall \tupl{x}\, \varphi \in Q$,
generalize $\varphi\sigma$ to a set $J$ of instances 
equisatisfiable with $\varphi\sigma$ in $\M$, and then 
look for further instances only outside that set.
The set $J$ is computed by identifying which variables of $\varphi$ actually
matter in determining the satisfiability of $\varphi\sigma$.
Technically, 
for each $\psi = \forall \tupl{x}\, \varphi$,
substitution $\sigma = \{\tupl{x} \mapsto \tupl{v}\}$ into $\vals$, and 
instance $\varphi' = \varphi\sigma$ of $\psi$, 
if $\M \models \varphi'$
we compute a partition of $\tupl{x}$ into $\tupl{x}_1$ and $\tupl{x}_2$ and
a corresponding partition of $\tupl{v}$ into $\tupl{v}_1$ and $\tupl{v}_2$
such that
$\M \models \forall \tupl{x}_2\, \varphi\{\tupl{x}_1 \mapsto \tupl{v}_1\}$;
similarly, if $\M \not\models \lnot\varphi'$ we compute a partition 
such that
$\M \not\models \forall \tupl{x}_2\, \lnot\varphi\{\tupl{x}_1 \mapsto \tupl{v}_1\}$.
In either case, we then know that 
all instances of $\varphi\{\tupl{x}_1 \mapsto \tupl{v}_1\}$ over $\vals$
are equisatisfiable with $\varphi'$ in $\M$, and so 
it is enough to consider just $\varphi'$ in lieu of all them.
We will refer to the elements of $\tupl{x}_1$ above 
as a set of \emph{critical variables for $\varphi$ (under $\sigma$)}---although
strictly speaking this is a misnomer as we do not insist that 
$\tupl{x}_1$ be minimal.

\begin{figure}[t]
\begin{program}
\PROC |eval|( \M, t, \sigma ) \BODY 
  \MATCH t \WITH
  \mid f( t_1, \ldots, t_n ) \ \to \qtab  \quad   \LOOPFOR j = 1, \ldots, n 
      \text{let } ( v_j, X_j ) = |eval|(\M, t_j, \sigma )
    \ENDLOOP
    \text{choose a critical argument subset $C$ of } \{ 1, \ldots, n \}
    \RETURN ( f^{\M}( v_1, \ldots, v_n ),\, \bigcup_{i \in C} X_i ) \
   \untab
  \mid x \ \to \ \ \RETURN ( \sigma(x),\, \{ x \} )
  \ENDMATCH
\ENDPROC
\end{program}
\vspace{-2ex}
\caption[The |eval| procedure]{The |eval| procedure for candidate model $\M$. 
Returns a pair $( v, S )$ where $( t \sigma )^\M = v$,
and $S$ is a subset of the domain of $\sigma$ that was used to compute this interpretation.
}
\label{fig:evaluate}
\end{figure}

\subsection{Generalizing Evaluations}

We have developed a general procedure that,
given the $\Sigma$-map of a candidate model $\M$, a term $t$, and 
a substitution $\sigma$ over $t$'s variables,
computes and returns both the value of $t\sigma$ in $\M$ and 
a set of critical variables for $\sigma$.
This procedure effectively extends to quantifier-free formulas as well
by treating them as Boolean terms---which evaluate to either $\true$ or $\false$ 
in a $\Sigma$-interpretation depending
on whether they are satisfied by the model or not.

The procedure, called |eval|, is defined recursively over its input term and 
is sketched in Figure~\ref{fig:evaluate}. 
For uniformity, we assume that function symbols and logical operators
are all in prefix form.

When evaluating a non-variable term $f(t_1, \ldots, t_n)$, 
|eval| determines a \emph{critical argument subset} $C$ for it.
This is a subset of $\{1,\ldots,n\}$ such that
the term $f(s_1, \ldots, s_n)$ denotes a constant function in $\M$
where each $s_i$ is the value computed by |eval| for $t_i$ if $i \in C$, and 
is a unique variable otherwise.
If $f$ is a logical symbol, the choice of $C$ is dictated 
by the symbol's semantics.
For instance, for ${\teq}(t_1, t_2)$, $C$ is $\{1,2\}$;
for $\lor(t_1, \ldots, t_n)$,
it is $\{1,\ldots,n\}$ 
if the disjunction evaluates to $\false$;
otherwise, we it chooses $\{i\}$ for some $i$ where $t_i$ 
evaluates to $\true$.
If $f$ is a function symbol of $\Sigma$, 
|eval| computes $C$ by first constructing a custom index data structure 
for interpreting applications of $f$ to values.
The key feature of this data structure is that it uses information 
on the sets $X_1, \ldots X_n$ to choose an evaluation order 
for the arguments of $f$.
For example, given the term $t = f( g( x, y, z ), v_2, h( x ))$,
say that |eval| computes the values $v_1, v_2, v_3$ and the critical variable sets
$\{ x,y,z \}$, $\emptyset$, $\{x\}$ for the three arguments of $f$, respectively.
With those sets, it will use the evaluation order $(2, 3, 1)$ 
for those arguments---meaning that 
the second argument is evaluated first, then the third, etc.
Using the index data structure, it will first determine 
if $f( x_1, v_2, x_3 )$ has a constant interpretation in $\M$ for all $x_1, x_3$.
If so, then the evaluation of $t$ depends on none of its variables,
and the returned set of critical variables for $t$ will be $\emptyset$.
Otherwise, if $f( x_1, v_2, v_3 )$ has a constant interpretation in $M$, 
then the evaluation of $t$ depends on $\{ x \}$, or else
it depends on the entire variable set $\{ x, y, z \}$.

The next example gives more details on the whole process of using |eval|
to generalize a ground instance to a set of ground instances equisatisfiable 
with it in a given model.

\begin{example}
\label{ex:qi-ge}
Let $Q = \{\forall x_1\, x_2\, f( x_1 ) \teq g( x_2, b ) \lor  h( x_2, x_1 ) \tneq b\}$,
where all terms are of some sort $\So$.
Consider a candidate model $\M$ induced by a $\Sigma$-map containing the following definitions : 
\[
\begin{array}{l}
\Delta_g = \{ g(a,a) \teq c,\, g(y_1,b) \teq a,\, g(y_1,y_2) \teq b \}
 \\[.5ex]
\Delta_f  = \{ f(b) \teq b,\, f(y_1) \teq a \}
 \\[.5ex]
\Delta_h  = \{ h(y_1,y_2) \teq b \}
\end{array}
\]
Suppose $\vals_{S} = \{ a, b, c \}$.
The table below shows the bottom-up calculation performed by |eval| 
on the formula $\varphi = f(x_1) \teq g(x_2,b) \lor h(x_2,x_1) \tneq b$
with $\M$ above and $\sigma = \{ x_1 \mapsto a, x_2 \mapsto a \}$.

\begin{center}
\begin{tabular}{|@{\ }c@{\ }|@{\ }l@{\ }|@{\;}c|}
\hline
\bf input & \bf output & \bf critical arg. subset \ 
\\
\hline
$x_1$ & $(a, \{ x_1 \})$ & \ 
\\
$x_2$ & $(a, \{ x_2 \})$ & \ 
\\
$b$ & $(b, \emptyset)$ & $\emptyset$
\\
$f(x_1)$ & $(a, \{ x_1 \})$ & $\{ 1 \}$
\\
$g(x_2,b)$ & $(a, \emptyset)$ & $\{ 2 \}$
\\
$h(x_2,x_1)$ & $(b, \emptyset)$ & $\emptyset$
\\
$f(x_1) \teq g(x_2,b)$ & $(\true, \{ x_1 \})$ & $\{ 1, 2 \}$
\\
$h(x_2,x_1) \tneq b$ & $(\false, \emptyset)$ & $\{ 1, 2 \}$
\\
$f(x_1) \teq g(x_2,b) \lor h(x_2,x_1) \tneq b$ & $(\true, \{ x_1 \} )$  & $\{ 1 \}$
\\
& &
\\
\hline
\end{tabular}
\end{center}
For most entries in the table the evaluation is straightforward.
For a more interesting case, consider the evaluation of $g(x_2,b)$.
First, the arguments of $g$ are evaluated,
respectively to $(a, \{x_2\})$ and $(b, \emptyset)$.
Using an indexing data structure built from $\Delta_g$
for the evaluation order $(2, 1)$,
we determine that $g(x_2, b)$ has constant value $a$ for all $x_2$.
Hence we return an empty set of critical variables for $g(x_2,b)$.

Similarly, the fact that |eval| returns $(\true, \{ x_1 \} )$ 
for the original input formula $\varphi$ and 
the substitution $\sigma = \{ x_1 \mapsto a,\, x_2 \mapsto a \}$
means that we were able to determine that 
all ground instances of 
$\varphi\{x_1 \mapsto a \} = (f(a) \teq g(x_2,b) \lor h(x_2,a) \tneq b)$,
not just the instance $\varphi\sigma$, are satisfied in $\M$.
We can then use this information in \fmsolve to completely avoid 
generating and checking those instances.
\qed
\end{example}

\subsection{A Model-Based Instantiation Heuristic}
For any given quantified formula $\psi$,
the |eval| procedure allows us to identify a set of instances over $\vals$
that can be represented by a single one, 
as far as satisfiability in the candidate model $\M$ is concerned.
In this subsection, we present a quantifier instantiation heuristic
that generates a set $I$ of instances
that together represent \emph{all} instances of $\psi$ over $\vals$
that are falsified by $\M$.
This kind of exhaustiveness is crucial because it allows us to conclude
that $\M \models \psi$ by just checking that $I$ is empty.

The heuristic is implemented by a procedure that relies on |eval| for computing 
the set $I$ above, or rather, a set of substitutions for generating the elements 
of $I$ from $\psi$.
The procedure is fairly unsophisticated and quite conservative 
in its choice of representative instances,
which makes it very simple to implement and prove correct.
Its main shortcoming is that it does not take full advantage of the information 
provided by |eval|, and 
so may end up producing more representative instances than needed in many cases.

Let $\psi = \forall \tupl{x}\, \varphi \in Q$
with $\tupl{x} = (x_1, \ldots, x_n)$,
where $Q$ is the set defined in~(\ref{eq:q}).
For $i=1,\ldots,n$, let $S_i$ be the sort of $x_i$ and let
$\vals_{\tupl{x}} = \vals_{S_1} \times \cdots \times \vals_{S_n}$.
For each $S \in \{S_1, \ldots, S_n\}$,
let $<_S$ be an arbitrary total ordering over the values $\vals_S$ of sort $S$.
Let $<$ be the \emph{lexicographic}
extension of these orderings to the tuples 
in $\vals_{\tupl{x}}$ and observe that $\vals_{\tupl{x}}$ is totally ordered by $<$.
We write ${\tupl{v}}_{min}$ to denote the minimum of $\vals_{\tupl{x}}$ with respect to this ordering.

For every $\tupl{v} = (v_1, \ldots, v_n) \in \vals_{\tupl{x}}$
Let $\next{i}(\tupl{v})$ denote the smallest tuple $\tupl{u}$ with respect to $<$ such that 
$\tupl{v}(j) <_{S_j} \tupl{u}(j)$ for some $1 \leq j \leq n+1-i$,
if such a tuple exists, and denote ${\tupl{v}}_{min}$ otherwise
(including when $i > n$).\footnote{%
Where $\tupl{v}(j)$ and $\tupl{u}(j)$ are the $j$-th component of $\tupl v$ and $\tupl u$,
respectively.
}
For instance, with $n = 3$, $S_1 = S_2 = S_3$ and 
$\vals_{S_1} = \{a,b\}$ with $a <_{S_1} b$, we have that
$\next{1}(a,a,a) = (a,a,b)$, $\next{2}(a,a,a) = (a,b,a)$, 
$\next{2}(a,b,a) = (b,a,a)$, $\next{3}(a,a,a) = (b,a,a)$, 
and $\next{2}(b,b,a) = {\tupl{v}}_{min} = (a,a,a)$.
Note that except in the case that $\next{i}(\tupl{v})$ is ${\tupl{v}}_{min}$, we have that $\tupl{v} \lt \next{i}(\tupl{v})$.

\begin{figure}[t]
\begin{program}
\PROC \He_{m}(\M, \forall \tupl{x}\, \varphi) \BODY
  \instans{\tupl{x}} := \emptyset; \
  t := {\tupl{v}}_{min}
  \LOOPDO
    ( v, \{x_{i_1}, \ldots, x_{i_m}\} ) := |eval|(\M, \varphi, \{ \tupl{x} \mapsto t \} )
    \IF v = \false \THEN \
      \instans{\tupl{x}} := \instans{\tupl{x}} \cup \{ \{ \tupl{x} \mapsto t \} \}
    \FI
    t := \next{i}(t) \text{ where } i = n+1 - \maxf \{ 0, i_1, \ldots, i_m \}
  \ENDLOOPDO t \neq {\tupl{v}}_{min}
  \RETURN \instans{\tupl{x}}
\ENDPROC
\end{program}
\vspace{-2ex}

\caption{
A model-based instantiation heuristic $\He_{m}$, where $\tupl{x} = (x_1, \ldots, x_n)$.
}
\label{fig:choose}
\end{figure}

Our instantiation heuristic $\He_{m}$ is given in Figure~\ref{fig:choose}.
It takes in a quantifier-free formula $\varphi$ 
with variables $\tupl{x}$ and 
returns a set $\instans{\tupl{x}}$ of substitutions $\sigma$ for $\tupl{x}$
such that $\M \not\models \varphi\sigma$.
At each execution of its loop
the procedure implicitly determines with |eval| 
a set $I$ of instances of $\varphi$ 
that are equisatisfiable with $\varphi\{\tupl{x} \mapsto \tupl{v}\}$ in $\M$,
where $\tupl{v}$ is the tuple stored in the program variable $t$.
The next value $t_{next}$ for $t$ is a greater tuple chosen to maintain
the invariant that all the tuples between $t$ and $t_{next}$ generate
instances of $\varphi$ that are in $I$.
To see that, it suffices to observe that 
these tuples differ from $t$ only in positions
that correspond to non-critical variables of $\varphi$,
namely those before position $i$ where $x_i$ is the first critical variable 
of $\varphi$ in the enumeration $x_1, \ldots, x_n$.
This observation is the main argument in the proof of the following result.

\begin{lem}
Let $\tupl{v}_0, \ldots, \tupl{v}_m$ be all values successively taken
by variable $t$ at the beginning of the loop in $\He_{m}$.
Let $v_{max}$ be the maximum element of $\vals_{\tupl{x}}$.
Then for all $j=1, \ldots, m$,
\begin{enumerate}
\item
$\tupl{v}_{j-1} < \tupl{v}_j$,
\item
for all $\tupl{u}$ with $\tupl{v}_{j-1} \leq \tupl{u} < \tupl{v}_j$, \ 
$\M \models \varphi\{\tupl{x} \mapsto \tupl{u}\}$ \ iff \ 
$\M \models \varphi\{\tupl{x} \mapsto \tupl{v}_{j-1}\}$,

\item
for all $\tupl{u}$ with $\tupl{v}_m \leq \tupl{u} \leq \tupl{v}_{max}$, \ 
$\M \models \varphi\{\tupl{x} \mapsto \tupl{u}\}$ \ iff \ 
$\M \models \varphi\{\tupl{x} \mapsto \tupl{v}_m\}$.
\end{enumerate}
\end{lem}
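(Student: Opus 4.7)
The plan is to reduce all three claims to two basic structural facts: the behavior of $\next{i}$ under the lexicographic ordering $<$, and the soundness guarantee of the critical-variable set returned by |eval|. Specifically, I rely on the property that whenever |eval| is called on $\M, \varphi, \sigma$ and returns a pair $(v, X)$, one has $\evaluate{\varphi\sigma'}{\M} = v$ for every substitution $\sigma'$ that agrees with $\sigma$ on $X$. For each iteration $j$, let $k_{j-1}$ denote the value $\max\{0, i_1, \ldots, i_m\}$ produced from the critical set on the call with $\{\tupl{x} \mapsto \tupl{v}_{j-1}\}$, so that $\tupl{v}_j = \next{n+1-k_{j-1}}(\tupl{v}_{j-1})$ and every critical index lies in $\{1, \ldots, k_{j-1}\}$.

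For claim 1, $\next{i}$ by definition returns either a tuple strictly above its input in $<$, or the sentinel ${\tupl{v}}_{min}$. Since the loop continues past iteration $j-1$ precisely when the update is not ${\tupl{v}}_{min}$, we obtain $\tupl{v}_j > \tupl{v}_{j-1}$ for all $j \leq m$.

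For claim 2, I will use the minimality characterization of $\next{n+1-k_{j-1}}(\tupl{v}_{j-1})$: it is the least tuple strictly greater than $\tupl{v}_{j-1}$ whose value at some position $j' \leq k_{j-1}$ strictly exceeds $\tupl{v}_{j-1}(j')$. Thus any $\tupl{u}$ with $\tupl{v}_{j-1} < \tupl{u} < \tupl{v}_j$ fails this condition, i.e., $\tupl{u}(j') \leq \tupl{v}_{j-1}(j')$ for all $j' \leq k_{j-1}$. Combined with $\tupl{u} > \tupl{v}_{j-1}$ in lex order, which forces the first coordinate at which $\tupl{u}$ and $\tupl{v}_{j-1}$ differ to be strictly greater than $k_{j-1}$, this yields $\tupl{u}(j') = \tupl{v}_{j-1}(j')$ for all $j' \leq k_{j-1}$. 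In particular, $\tupl{u}$ agrees with $\tupl{v}_{j-1}$ on every critical variable, and the |eval| guarantee yields $\M \models \varphi\{\tupl{x} \mapsto \tupl{u}\}$ iff $\M \models \varphi\{\tupl{x} \mapsto \tupl{v}_{j-1}\}$; the case $\tupl{u} = \tupl{v}_{j-1}$ is immediate.

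Claim 3 follows by the same argument applied to the wraparound case: since $\next{n+1-k_m}(\tupl{v}_m) = {\tupl{v}}_{min}$, the definition of $\next{i}$ asserts that no $\tupl{u} > \tupl{v}_m$ has $\tupl{u}(j') > \tupl{v}_m(j')$ at any $j' \leq k_m$, so every $\tupl{u}$ with $\tupl{v}_m < \tupl{u} \leq {\tupl{v}}_{max}$ agrees with $\tupl{v}_m$ on positions $1, \ldots, k_m$, hence on all critical variables. I expect the main obstacle to be making the lex-order bookkeeping precise — cleanly establishing that "$\tupl{u}$ fails the $\next{i}$ condition" together with "$\tupl{u}$ is strictly greater than $\tupl{v}_{j-1}$ in lex order" forces agreement on the entire prefix of length $k_{j-1}$ — together with stating the |eval| soundness property explicitly, since it is only described informally in the text preceding Figure~\ref{fig:choose}.
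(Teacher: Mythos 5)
Your proof is correct and follows essentially the same route as the paper's own sketch: part 1 from the fact that a non-sentinel $\next{i}$ value is strictly larger, and parts 2 and 3 by showing that every tuple in the half-open interval agrees with its left endpoint on all positions up to the largest critical index (hence on all critical variables) and then invoking the soundness guarantee of $|eval|$'s critical-variable set. Your version is in fact slightly more careful than the paper's, which leaves the $|eval|$ guarantee implicit and states the prefix-agreement condition with the index direction inverted relative to the definition of $\next{i}$.
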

\begin{proof}
(Sketch)
The first statement is immediate since for all $j = 1 \ldots m$, we have $v_j = \next{k}(\tupl{v}_{j-1})$ for some $k$ and $v_j \neq \tupl{v}_{min}$.
To show the second statement for a $j$,
assume $\tupl{v}_j = \next{k}(\tupl{v}_{j-1})$ for some $k$.
For each $\tupl{u}$ where $\tupl{v}_{j-1} \leq \tupl{u} < \tupl{v}_j$,
we have that $\tupl{u}(\ell) = \tupl{v}_{j-1}(\ell)$ for all $\ell \geq k$.
For all $\ell \lt k$, the $|eval|$ procedure determined that the variable $x_\ell$ was not a critical variable for $\varphi$.
Since $\tupl{u}$ and $\tupl{v}_{i-1}$ vary on only these variables,
we have $\M \models \varphi\{\tupl{x} \mapsto \tupl{u}\}$ iff $\M \models \varphi\{\tupl{x} \mapsto \tupl{v}_{j-1}\}$.
The third statement holds for similar reasons as the second.
\qed
\end{proof}

\begin{thm}
\label{thm:choose-inst-ge}
The set $\instans{\tupl{x}}$ returned 
by $\He_{m}(\M, \varphi, \tupl{x})$ is empty if and only if $\M \models \forall \tupl{x}\, \varphi$.
\end{thm}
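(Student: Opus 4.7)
The plan is to derive the theorem directly from the preceding lemma, which already does the structural work. That lemma partitions $\vals_{\tupl{x}}$ into disjoint intervals $[\tupl{v}_{j-1}, \tupl{v}_j)$ for $j = 1, \ldots, m$ together with a terminal interval $[\tupl{v}_m, \tupl{v}_{max}]$, and guarantees that on each such interval the truth value of $\varphi\{\tupl{x}\mapsto \tupl{u}\}$ in $\M$ is constant, agreeing with its value at the left endpoint. So the finite sequence of tuples actually inspected by $\He_m$ is a complete set of representatives for $\vals_{\tupl x}$ with respect to satisfaction of $\varphi$ in $\M$.

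For the ``only if'' direction I will assume $\instans{\tupl{x}} = \emptyset$ and pick an arbitrary $\tupl{u} \in \vals_{\tupl{x}}$. Since $\tupl{v}_0 = \tupl{v}_{min}$, since by Item~1 of the lemma the visited sequence is strictly increasing, and since the loop halts precisely when $\next{i}$ wraps back to $\tupl{v}_{min}$, the intervals above cover all of $\vals_{\tupl{x}}$; hence $\tupl u$ lies in one of them, with left endpoint some $\tupl{v}_{j}$. Because $\instans{\tupl{x}}$ is empty, the call $|eval|(\M, \varphi, \{\tupl{x}\mapsto \tupl{v}_j\})$ must have returned a value distinct from $\false$, and by the intended specification of $|eval|$ this means $\M \models \varphi\{\tupl{x}\mapsto \tupl{v}_j\}$. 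Items~2 and~3 of the lemma then transfer this to $\tupl u$, giving $\M \models \varphi\{\tupl{x}\mapsto \tupl u\}$, and hence $\M \models \forall \tupl{x}\, \varphi$.

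The ``if'' direction is the easy one: if $\M \models \forall \tupl{x}\, \varphi$ then $\M \models \varphi\{\tupl{x}\mapsto t\}$ for every tuple $t$, so $|eval|$ always returns a first component equal to $\true$, the guard $v = \false$ never fires, and no substitution is ever inserted into $\instans{\tupl{x}}$.

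The one point requiring care, and really the only obstacle, is the termination/coverage bookkeeping: one must check that the visited tuples together with the lemma's intervals truly cover $\vals_{\tupl{x}}$ without gaps. This reduces to observing that $\next{i}(\tupl{v})$ is the $<$-least tuple strictly above $\tupl v$ that differs from $\tupl v$ in some coordinate at position $\leq n+1-i$, so nothing between $\tupl v$ and $\next{i}(\tupl v)$ is skipped in the sense used by the lemma, and that the exit condition $t = \tupl{v}_{min}$ fires exactly when every remaining tuple up to $\tupl{v}_{max}$ falls in the terminal interval. With that in place, the biconditional follows immediately from the two directions above.
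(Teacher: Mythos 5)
Your proposal is correct and follows essentially the same route as the paper: both directions are derived from the preceding lemma, whose items guarantee that the tuples visited by $\He_m$ form a complete set of representatives of $\vals_{\tupl{x}}$ with respect to satisfaction of $\varphi$ in $\M$, so emptiness of $\instans{\tupl{x}}$ is equivalent to every instance being satisfied. Your added attention to the coverage/termination bookkeeping (that the intervals $[\tupl{v}_{j-1},\tupl{v}_j)$ together with the terminal interval leave no gaps) is a point the paper leaves implicit in the lemma, but it is the same argument, not a different one.
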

\begin{proof}
Thanks to the previous lemma, 
if there is an instance of $\varphi$ that is falsified by $\M$,
then $\He_{m}$ will consider at least one $\tupl{v}_i$ for which $\varphi \{ \tupl{x} \mapsto \tupl{v}_i \}$ evaluates to $\false$,
and hence it will return at least one instance.
Conversely, if all instances of $\varphi$ are satisfied by $\M$, 
then all instances of $\varphi$ considered by $\He_{m}$ evaluate to $\true$,
and hence it will return no instances.
\qed
\end{proof}

We remark that, for the model finding purposes of procedure \fmsolve,
there is no need for the procedure $\He_{m}$ to compute 
the full set $\instans{\tupl{x}}$ once it contains at least one substitution.
Any non-empty subset would suffice to trigger a (more incremental) 
revision of the current candidate model $\M$.
That said, our current implementation does compute the whole set and adds all 
the corresponding instances to the clause set $F$ before computing another model for it.
Our experiments show that computing and using one substitution at a time
is worse for overall performance than computing and using
the full set $\instans{\tupl{x}}$.

\begin{example}
Consider the quantified formula $\forall x_1\, x_2\, \varphi$ and candidate model $\M$ from Example~\ref{ex:qi-ge}.
Assume that $a <_S b <_S c$.
The result of running $\He_m$ on $\M$, $\varphi$ and $\tupl{x} = ( x_1, x_2 )$ is summarized in the table below.
Each row in the column shows the value of variable $t$ at the beginning of the loop in $\He_m$,
the result of computing |eval|, the substitution (if any) added to $\instans{\tupl{x}}$ on that iteration,
and the computation of the next tuple of terms $\next{i}(t)$.
\begin{center}
\begin{tabular}{|@{\ }c@{\ }|@{\ }c@{\ }|@{\ }c@{\ }|@{\ }c@{\ }|@{\ }c@{\ }|@{\;}c|}
\hline
Iteration & $t$ & |eval|$( \M, \varphi, \{ \tupl{x} \mapsto t \})$ & Add to $\instans{\tupl{x}}$ & $i$ & $\next{i}(t)$
\\
\hline
1 & $(a,a)$ & $(\true, \{ x_1 \} )$ & $\emptyset$ & 2 & $(b,a)$ \\
2 & $(b,a)$ & $(\false, \{ x_1 \} )$ & $\{ x_1 \mapsto b, x_2 \mapsto a \}$ & 2 & $(c,a)$ \\
3 & $(c,a)$ & $(\true, \{ x_1 \} )$ & $\emptyset$ & 2 & $(a,a)$ \\
\hline
\end{tabular}
\end{center}
We begin by setting $t$ to ${\tupl{v}}_{min} = (a,a)$.
As demonstrated in Example~\ref{ex:qi-ge}, 
we have that |eval|$( \M, \varphi, \{ x_1 \mapsto a, x_2 \mapsto a \})$ returns the pair $(\true, \{ x_1 \} )$.
The first component of this pair 
indicates that $\evaluate{\varphi \{ x_1 \mapsto a, x_2 \mapsto a \}}{\M} = \true$,
and hence we do not add this substitution to $\instans{\tupl{x}}$.
The second component of this pair indicates moreover that
this interpretation did not depend on the value of $x_2$, and hence
$(\varphi \{ x_1 \mapsto a,\, x_2 \mapsto v \})^\M = \true$ for all values of $v$.
Thus, we need not consider $t = (a,b)$ or $t = (a,c)$.
Instead, on the second iteration, we consider $\next{2}(a,a) = (b, a)$.
Subsequently, 
|eval|$( \M, \varphi, \{ x_1 \mapsto b, x_2 \mapsto a \})$ returns the pair $(\false, \{ x_1 \} )$.
This indicates that 
$(\varphi \{ x_1 \mapsto b, x_2 \mapsto v \})^\M = \false$ for all values of $v$.
We add the substitution $\{ x_1 \mapsto b, x_2 \mapsto a \}$ to $\instans{\tupl{x}}$ only.
Finally, on the third iteration, 
|eval|$( \M, \varphi, \{ x_1 \mapsto c, x_2 \mapsto a \})$ returns the pair $(\true, \{ x_1 \} )$;
we add no substitutions to $\instans{\tupl{x}}$, and the loop terminates.
Overall, $\He_m$ returns the singleton set of substitutions $\{ \{ x_1 \mapsto b, x_2 \mapsto a \} \}$.
\qed
\end{example}

\subsection{Enhancement: Heuristic Instantiation}
\label{sec:heur-inst}

Modern SMT solvers rely on syntatic heuristic instantiation methods 
for finding unsatisfiable instances for quantified formulas~\cite{Detlefs03simplify:a,DBLP:conf/cade/MouraB07,ReynoldsTinelliMoura14}.
In these methods, quantified formulas are instantiated based on pattern matching.
For instance, the solver may choose to instantiate the quantified formula $\forall x\, P( f( x ) ) \Rightarrow Q( x )$
based on the substitution $\{ x \mapsto c \}$ when
$P( f( c ) )$ is a ground term occurring in its current satisfying assignment.
This technique is often referred to as \emph{E-matching}.
We found that heuristic instantiation-based E-matching can be helpful in the context of our finite model finding approach as well,
because the instances it generates are helpful in quickly ruling out candidate models
that are obviously spurious.

A quantifier instantiation heuristic $\He$, such as the model-based one from
the previous section, can be enhanced
by applying heuristic instantiation with a higher priority.
That is, we may consider a modified quantifier instantiation heuristic that 
first computes the set of instances $\instans{\tupl{x}}$ returned by
E-matching for a quantified formula $\forall \tupl{x}\,\varphi$.
If this set is non-empty, it returns $\instans{\tupl{x}}$; otherwise it
returns the instances from the original heuristic $\He$ on $\forall \tupl{x}\,\varphi$.

In practice, we have found that it is best to apply
heuristic quantifier instantiation \emph{after} finding 
a satisfying assignment with a bounded number of equivalence classes.
By waiting to apply quantifier instantiation until after a satisfying assignment of this form can be constructed, 
we can avoid pitfalls common to $E$-matching.
In particular, having a finite cardinality for uninterpreted sorts ensures
that only a finite number of terms are unique up to congruence, 
thus ensuring that $E$-matching, which is non-terminating in general,
will eventually return instances that rule out the current upper bound on cardinality,
or terminate with no instances.
We discuss the impact of heuristic instantiation further in Section~\ref{sec:results-fmf}.

\section{Results}
\label{sec:results}

We implemented all features mentioned in this paper inside \cvc~\cite{CVC4-CAV-11},
a state-of-the-art SMT solver based on the \dpllts architecture.
This section presents experimental results on this implementation.\footnote{%
Details can be found at {\url{http://cs.uiowa.edu/~ajreynol/TPLP-fmf}}.
}
We separate this section into two sets of experiments, 
the first to evaluate the relative effectiveness of various strategies 
for the \fcc solver, and 
the second to evaluate the model finder's overall performance  
when used with quantified formulas. 
For the second set of experiments, we compare our model finder against 
state-of-the-art SMT solvers and automated theorem provers.

\subsection{\fcc Solver Evaluation}

We first examine the effectiveness of approach to handling ground 
problems in the theory of \euf with finite cardinality constraints (\fcc).
In this section, all experiments were run on a Linux machine with an 8-core 2.60GHz 
Intel\textsuperscript{\textregistered} Xeon\textsuperscript{\textregistered} E5-2670 processor with 16GB of RAM.

We tested various configurations of the \fcc solver, 
starting with the default configuration {\bf cvc4+f}, 
which contains the region-based enhancements described 
in Section~\ref{sec:fcc-dpllts-opt},
where conflicting states are reported by using clique lemmas of the form 
$(\lnot \distinct( t_1, \ldots, t_{k+1} ) \lor \lnot \card[S,k])$.
We also tested a configuration, {\bf cvc4+fe}, which reports conflict clauses of the form $( \compl{l}_1 \vee \ldots \vee \compl{l}_n \vee \neg \card[S,k] )$,
where $l_1, \ldots, l_n$ are equalities and disequalities that entail $\distinct( t_1, \ldots, t_{k+1} )$.
This configuration avoids the introduction of new equalities into the search
(contained in the expansion of $\distinct$), 
but has the disadvantage that it can generate different conflict clauses
for essentially the same clique.
Additionally, we considered configuration {\bf cvc4+f-r}, 
which differs from {\bf cvc4+f} only in that regionalizations have always 
just one region per sort $S$, encompassing the entire disequality graph for $S$.

We also evaluated the MACE-style approach to finite model finding described in related work~\cite{mccune-1994},
which we implemented in the configuration {\bf cvc4+mace}.
In the case of a set of ground clauses $F$ involving a single sort, 
if $\terms_F$ is the set of all terms in $F$ and 
$c_1, \ldots, c_k$ are fresh constants serving as domain constants,
this configuration checks the satisfiability of 
\begin{equation}
\label{eq:mace}
 F \land \distinct(c_1, \ldots, c_k) \land 
 \displaystyle\bigwedge\limits_{t \in \terms_F} 
   (t \teq c_1 \lor \ldots \lor t \teq c_k )
\end{equation}
for $k = 1, 2, \ldots$ until (\ref{eq:mace}) is found satisfiable for some $k$.
Then, the minimal model size for $F$ is $k$.
A major and well-known shortcoming of this approach is 
the introduction of unwanted symmetries in the problem due to the use of domain constants.
\cvc can address this issue to some extent since it incorporates 
symmetry breaking techniques directly at the ground \euf level~\cite{DehEtAl-CADE-11}.

\begin{figure}[t]
\centering
\includegraphics[scale=.20, angle=270]{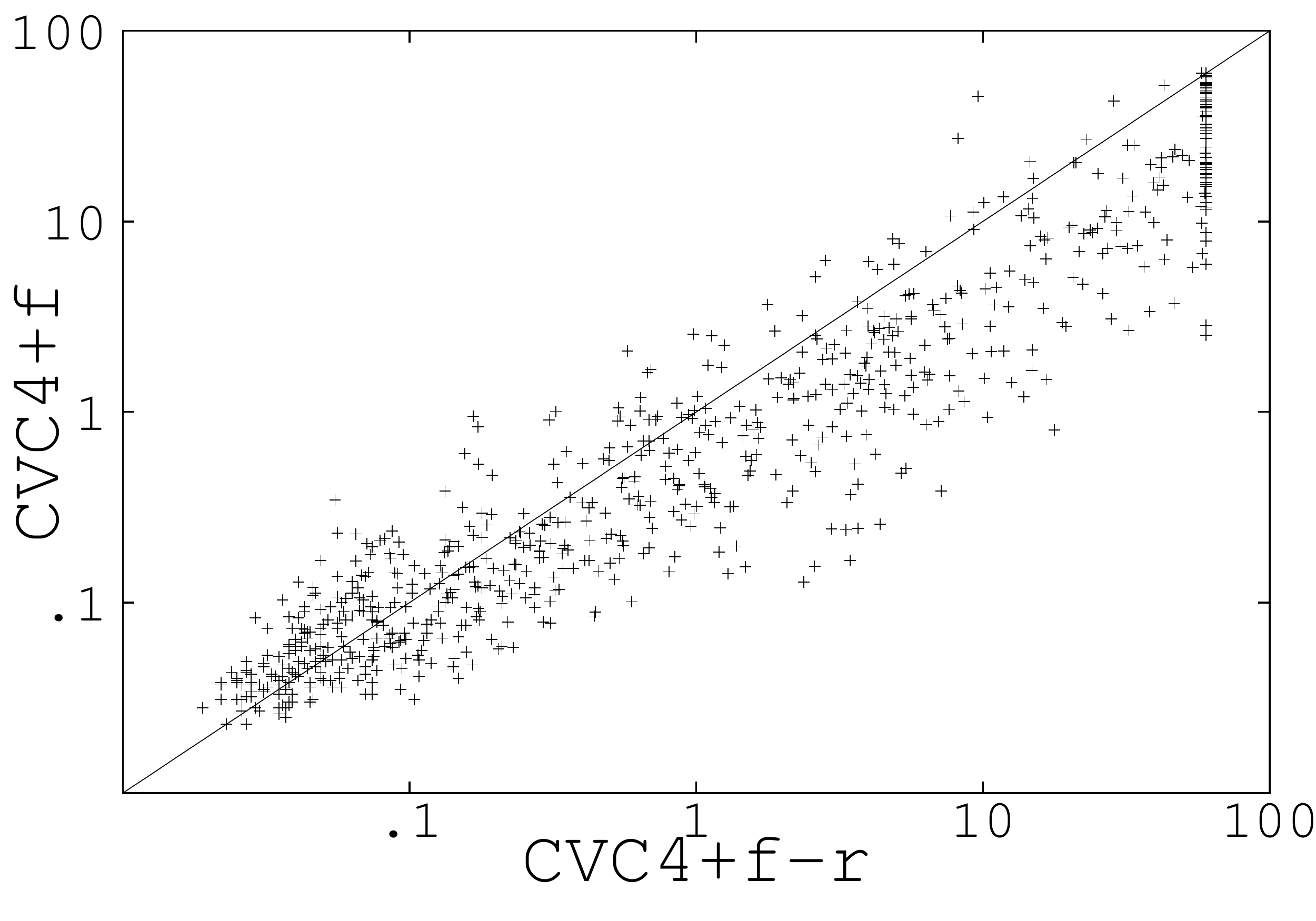} \ \  
\includegraphics[scale=.20, angle=270]{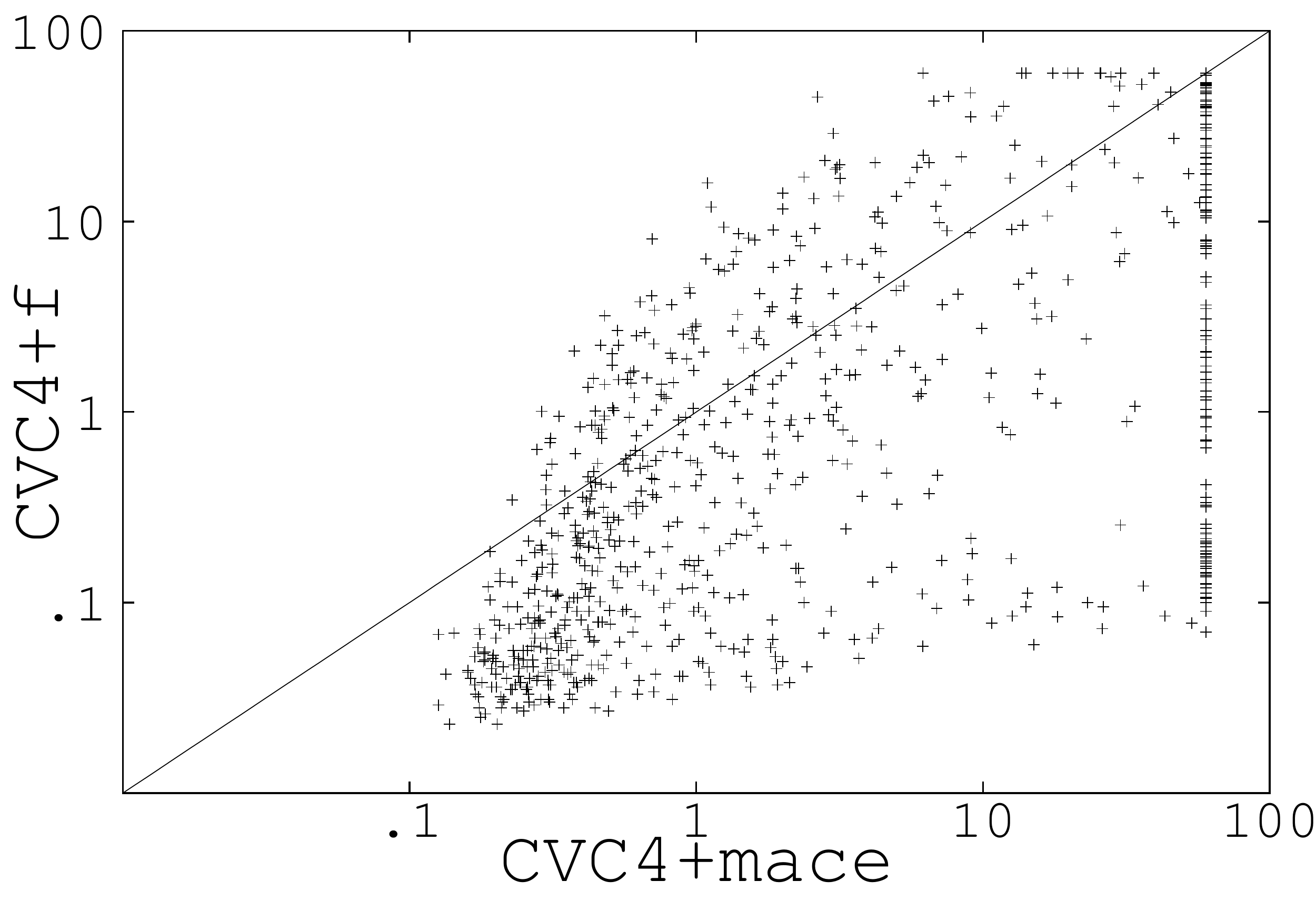}
\caption[Results for randomly generated benchmarks]{Results for randomly generated benchmarks.
Runtimes are on a log-log scale.
}
\label{fig:results-random}
\end{figure}

We considered satisfiable benchmarks encoding randomly generated 
graph coloring problems and consisting of a conjunction of disequalities 
between constants of a single sort.
In particular, we considered a total of 793 non-trivial problems containing 
between 20 and 50 unique constants and between 100 and 900 disequalities, and 
measured the time it takes each configuration to find a model of minimum size,
with a 60 second timeout.
For the benchmarks we tested, 
the configuration {\bf cvc4+f} solves the most benchmarks 
within the time limit: 723.
The configuration {\bf cvc4+f} was an order of magnitude faster than {\bf cvc4+fe} on most benchmarks,
with the latter only being able to solve 309 benchmarks within the time limit.
This strongly suggests that generating explanations for cliques 
in conflict lemmas involving cardinality constraints is not
an effective approach in this scheme.

Figure~\ref{fig:results-random} compares the performance of the configuration {\bf cvc4+f} against {\bf cvc4+f-r} and {\bf cvc4+mace}.
The second scatter plot clearly shows that 
the {\bf cvc4+f} configuration generally requires less time and 
solves more benchmarks (723 vs. 664) 
than {\bf cvc4+f-r}, confirming the usefulness 
of a region-based approach for clique detection.
The third scatter plot compares {\bf cvc4+f} against {\bf cvc4+mace}.
The latter configuration was able to solve only 617 benchmarks and
generally performed poorly on benchmarks with larger model size.
The median model size of the 123 benchmarks solved only 
by {\bf cvc4+f} was 17, 
whereas the median size of the 13 benchmarks solved only by {\bf cvc4+mace} 
was 10.
This suggests that  for larger cardinalities {\bf cvc4+mace} suffers 
from the model symmetries created by the introduction of domain constants, 
something that {\bf cvc4+f} avoids.

\subsection{Finite Model Finder Evaluation}\label{sec:results-fmf}

We provide results on \cvc with finite model finding 
for three sets of benchmarks coming from different formal methods applications,
including verification and automated theorem proving.
We will refer to various configurations of \cvc
based on the features they include.
Configuration {\bf cvc4+f} uses the finite model finding techniques described earlier.
Additionally, configurations containing {\bf m} in their suffix use the model-based quantifier instantiation
heuristic described in Section~\ref{sec:fmf-mbqi},
and configurations with {\bf i} use heuristic instantiation,
which can be paired with finite model finding configurations as described in Section~\ref{sec:heur-inst}.

Experiments from Section~\ref{sec:exp-intel}
were run on a Linux machine with an 8-core 2.60GHz 
Intel\textsuperscript{\textregistered} Xeon\textsuperscript{\textregistered} E5-2670 processor.
All others were run on a Linux machine with an 8-core 3.20GHz 
Intel\textsuperscript{\textregistered} Xeon\textsuperscript{\textregistered} E5-1650 processor with 16GB of RAM.

\subsubsection*{Intel benchmarks}
We evaluated the overall effectiveness of \cvc's finite model finder 
for quantified SMT formulas taken from verification conditions generated 
by DVF~\cite{GKLT12}, a tool used at Intel for verifying properties 
of security protocols and design architectures, among other applications.
Both unsatisfiable and satisfiable benchmarks were produced, 
the latter by manually removing necessary assumptions 
from verification conditions.
All benchmarks contain quantifiers, although only over uninterpreted sorts, and
span a wide range of theories, 
including linear integer arithmetic, arrays, EUF, and algebraic datatypes.

For comparison we looked at the SMT solvers \cvciii~\cite{BT07} (version 2.4.1)\footnote{\cvciii is the predecessor of \cvc.  
The latter was developed from scratch, and does not have any code in common with \cvciii.  
}, 
Yices~\cite{dutertre2006yices} (version 1.0.32), and \ziii~\cite{DeMoura:2008:ZES:1792734.1792766} (version 4.1).
We did not consider traditional theorem provers and finite model finders
because they do not have built-in support for the theories in our benchmark set.
All these solvers use E-matching as a heuristic method 
for answering unsatisfiable in the presence of universally quantified formulas.
Z3 additionally relies on model-based quantifier instantiation
techniques to establish satisfiability in the presence of quantified formulas~\cite{GeDeM-CAV-09}.

The results, separated into unsatisfiable and satisfiable instances, are shown
in Figure~\ref{fig:dvf-results} for five classes of benchmarks and 
a timeout of 600 seconds per benchmark.
The first two classes, {\bf refcount} and {\bf german}, represent 
verification conditions for systems described in~\cite{GKLT12}; 
benchmarks in the third are taken from~\cite{TG12}; 
the last two classes are verification problems internal to Intel.

\begin{figure}[t]
\scriptsize
\centering
\begin{tabular}{|l|c|r|c|r|c|r|c|r|c|r|}
\hline
Sat & \multicolumn{2}{c|}{\bf german} & \multicolumn{2}{c|}{\bf refcount} & \multicolumn{2}{c|}{\bf agree} & \multicolumn{2}{c|}{\bf apg} & \multicolumn{2}{c|}{\bf bmk}
\\
 & \multicolumn{2}{c|}{(45)} & \multicolumn{2}{c|}{(6)} & \multicolumn{2}{c|}{(42)} & \multicolumn{2}{c|}{(19)} & \multicolumn{2}{c|}{(37)}
\\
\hline
  & \# & time & \# & time & \# & time & \# & time & \# & time
\\
\hline
{\bf cvc3} & 0 & 0.0 & 0 & 0.0 & 0 & 0.0 & 0 & 0.0 & 0 & 0.0
\\
{\bf yices} & 2 & 0.0 & 0 & 0.0 & 0 & 0.0 & 0 & 0.0 & 0 & 0.0
\\
{\bf z3} & 45 & 1.1 & 1 & 7.0 & 0 & 0.0 & 0 & 0.0 & 0 & 0.0
\\
{\bf cvc4+i} & 2 & 0.0 & 0 & 0.0 & 0 & 0.0 & 0 & 0.0 & 0 & 0.0
\\
{\bf cvc4+f } & {\bf 45} & 0.3 & {\bf 6} & 0.1 & 42 & 15.5 & 18 & 200.0 & 36 & 1201.5
\\
{\bf cvc4+f-r } & {\bf 45} & 0.3 & {\bf 6} & 0.1 & 42 & 18.6 & 15 & 364.3 & 34 & 720.4
\\
{\bf cvc4+fi } & 45 & 0.4 & {\bf 6} & 0.1 & {\bf 42} & 14.2 & 19 & 492.8 & 36 & 831.0
\\
{\bf cvc4+fm } & {\bf 45} & 0.3 & {\bf 6} & 0.1 & 42 & 23.6 & {\bf 19} & 210.2 & 37 & 375.1
\\
{\bf cvc4+fmi } & {\bf 45} & 0.3 & {\bf 6} & 0.1 & 42 & 16.4 & 19 & 221.1 & {\bf 37} & 176.8
\\
\hline
%
%
\multicolumn{11}{c}{}
\\[-1ex]
\hline
Unsat & \multicolumn{2}{c|}{\bf german} & \multicolumn{2}{c|}{\bf refcount} & \multicolumn{2}{c|}{\bf agree} & \multicolumn{2}{c|}{\bf apg} & \multicolumn{2}{c|}{\bf bmk}
\\
 & \multicolumn{2}{c|}{(145)} & \multicolumn{2}{c|}{(40)} & \multicolumn{2}{c|}{(488)} & \multicolumn{2}{c|}{(304)} & \multicolumn{2}{c|}{(244)}
\\
\hline
 & \# & time & \# & time & \# & time & \# & time & \# & time
\\
\hline
{\bf cvc3} & 145 & 0.4 & 40 & 0.2 & 457 & 6.8 & 267 & 77.0 & 229 & 76.2
\\
{\bf yices} & 145 & 1.8 & 40 & 7.0 & 488 & 1475.4 & 304 & 35.8 & 244 & 25.3
\\
{\bf z3} & 145 & 1.9 & 40 & 0.9 & {\bf 488} & 10.6 & 304 & 12.2 & 244 & 5.3
\\
{\bf cvc4+i} & {\bf 145} & 0.1 & 40 & 0.2 & 484 & 6.8 & {\bf 304} & 11.2 & {\bf 244} & 2.9
\\
{\bf cvc4+f} & 145 & 0.8 & 40 & 0.4 & 476 & 3782.1 & 298 & 2252.5 & 242 & 1507.0
\\
{\bf cvc4+f-r} & 145 & 0.4 & 40 & 0.2 & 475 & 1574.3 & 294 & 3836.0 & 240 & 1930.5
\\
{\bf cvc4+fi } & 145 & 0.7 & {\bf 40} & 0.1 & 488 & 188.7 & 302 & 342.0 & 244 & 660.3
\\
{\bf cvc4+fm } & 145 & 0.4 & 40 & 0.3 & 471 & 5018.2 & 300 & 1122.7 & 242 & 834.1
\\
{\bf cvc4+fmi } & 145 & 0.3 & {\bf 40} & 0.1 & 488 & 185.9 & 302 & 339.8 & 244 & 668.5
\\
\hline
\end{tabular}
\caption{Number of solved satisfiable and unsatisfiable Intel (DVF) benchmarks and cumulative time for solved benchmarks. All times are in seconds.}
\label{fig:dvf-results}
\end{figure}

For the satisfiable benchmarks, our finite model finder is the only tool capable 
of solving any instance in the last three benchmark classes.
In fact, {\bf cvc4+f} is able to solve all but two, and 
most of them in less than a second.
When extended to include techniques for model-based quantifier instantiation (configurations {\bf cvc4+fm} and {\bf cvc4+fmi}),
we are able to solve all satisfiable benchmarks within the timeout.
By comparing {\bf cvc4+f} against {\bf cvc4+f-r}, we see that the region-based approach for recognizing cliques is beneficial, 
particularly for the harder classes where the latter configuration solves fewer benchmarks within the timeout.
The model sizes found for these benchmarks were relatively small; 
only a handful had a model with sort cardinalities larger than 4.
To our knowledge, our model finder is the only tool capable of solving these benchmarks.

For the unsatisfiable benchmarks, Yices and Z3 can solve all of them,
 with Z3 being much faster in some cases.
We observe that \cvc with finite model finding is orders of magnitude slower 
than the SMT solvers on these benchmarks.
This is, however, to be expected since it is geared 
towards finding models, and applies exhaustive instantiation 
with increasingly large cardinality bounds,
which normally delays the discovery that 
the problem is unsatisfiable regardless of those bounds.

However, we found that each unsatisfiable problem can be solved 
by either {\bf cvc4} or {\bf cvc4+fmi}, and in less than 3s.
Additionally, configuration {\bf cvc4+fmi} solves all unsatisfiable benchmarks 
within 900s, suggesting that \cvc's model finder makes consistent progress
towards answering unsatisfiable on provable DVF verification conditions.
From the perspective of verification tools, the results here seem promising.
A common strategy for handling a verification condition would be 
to first use an SMT solver hoping that it can quickly find it unsatisfiable 
with $E$-matching techniques;
and then resort to finite model finding if needed 
to either answer unsatisfiable, or produce a model representing 
a concrete counterexample for the verification condition.

\subsubsection*{TPTP benchmarks}

We considered benchmarks from a recent version of the TPTP library~\cite{Sut-JAR-09} (5.4.0), 
a widely-used library from the automated theorem proving community.
The benchmarks from this library involve no theory reasoning other than equality,
and are composed mostly of quantified formulas.

We compared \cvc (version 1.2) against other SMT solvers including \ziii (version 4.3) and \cvciii (version 2.4.1), 
as well as various automated theorem provers and model finders for first order logic, 
including Paradox~\cite{Claessen:Soerensson:MACEimprove:ModelComputationWS:2003} and 
iProver~\cite{Kor08-IJCAR} (version 0.99).
Paradox is a MACE-style model finder that uses preprocessing optimizations 
such as sort inference and clause splitting, among others, and
then encodes to SAT the original problem together with increasingly 
looser constraints on the size of the model.
iProver is an automated theorem prover based in the Inst-Gen calculus
that can also run in finite model finding mode ({\bf iprover+f}).
In that mode, it incrementally bounds model sizes 
in a manner similar to MACE-style model finding.
However, it encodes the whole problem into the EPR fragment,\footnote{
This fragment of first-order logic consists of all formulas of the form 
$\exists \tupl{x}. \forall \tupl{y}. \varphi$, 
where $\varphi$ is quantifier-free and contains no function symbols.}
for which it is a decision procedure.
Since these two tools are limited to classical first-order logic with equality, 
we considered only the unsorted first-order benchmarks of TPTP.

\begin{figure}[t]
\tiny
\centering
\begin{tabular}{|l|c|c|c|c|c|c|c|c|c|c|}
\hline
 & \multicolumn{5}{c|}{Unsat} & \multicolumn{5}{c|}{Sat}
\\
\hline
 & {\bf EPR} & {\bf NEQ} & {\bf SEQ} & {\bf PEQ} & {\bf Total} & {\bf EPR} & {\bf NEQ} & {\bf SEQ} & {\bf PEQ} & {\bf Total}
\\
 & (920) & (2008) & (7682) & (1796) & (12406) & (388) & (618) & (340) & (612) & (1958)
\\
\hline
{\bf z3 } &  840 & 1406 & {\bf 3366} & 656 & 6268 & 345 & 261 & 175 & 160 & 941
\\
{\bf cvc3 } &  596 & 910 & 3091 & 648 & 5245 & 24 & 0 & 8 & 0 & 32
\\
{\bf iprover } &  {\bf 888} & {\bf 1786} & 3346 & 310 & {\bf 6330} & {\bf 384} & 434 & 106 & 156 & 1080
\\
{\bf iprover+f } &  - & - & - & - & - & 378 & {\bf 555} & {\bf 224} & 268 & 1425
\\
{\bf paradox } &  - & - & - & - & - & 343 & 534 & 201 & {\bf 372} & {\bf 1450}
\\
{\bf cvc4+i } &  809 & 1346 & 3277 & {\bf 668} & 6100 & 21 & 1 & 8 & 0 & 30
\\
{\bf cvc4+f } &  736 & 900 & 1261 & 531 & 3428 & 329 & 441 & 178 & 242 & 1190
\\
{\bf cvc4+fm } &  725 & 942 & 1315 & 419 & 3401 & 329 & 448 & 214 & 286 & 1277
\\
{\bf cvc4+fi } &  733 & 994 & 1594 & 457 & 3778 & 329 & 422 & 178 & 231 & 1160
\\
{\bf cvc4+fmi } &  748 & 997 & 1594 & 459 & 3798 & 327 & 416 & 190 & 232 & 1165
\\
\hline
\end{tabular}
\caption{Number of solved TPTP benchmarks.  All experiments were run with a 300 second timeout.}
\label{fig:tptp-results}
\end{figure}

Figure~\ref{fig:tptp-results} shows results
for benchmarks from the TPTP library that are known to be satisfiable or unsatisfiable.
All experiments were run with a 300 second timeout per benchmark.
The benchmarks were placed into (exactly one) category based on its logical and syntactic characteristics,
where EPR includes benchmarks that reside in the effectively propositional fragment,
NEQ are benchmarks that do not contain any equality reasoning,
SEQ are benchmarks containing some equality,
and PEQ are benchmarks containing only pure equality.
Both configurations {\bf iprover} and {\bf iprover+f}
used scheduling strategies that iProver incorporated for CASC 24, a competition for automated theorem provers,
meaning that multiple configurations of this solver were run sequentially.
The latter of these configurations, as well as the configuration {\bf paradox} were solely run on the satisfiable benchmarks from this set.
All configurations of \cvc with finite model finding used sort inference techniques as described in~\cite{reynolds2013finite},
which is capable of treating unsorted inputs as multi-sorted based on their structure.
Sort inference techniques are known to be useful for this set of benchmarks, 
and are used in most competitive ATP systems, including Paradox and iProver.

For satisfiable benchmarks,
\cvc's model finder with exhaustive instantiation ({\bf cvc4+f})
solves 1190 benchmarks.
Using model-based quantifier instantiation,
that number goes up to 1277 (configuration {\bf cvc4+fm}).
Using heuristic instantiation ({\bf cvc4+fmi}) in addition to model-based instantiation
led to finding fewer satisfiable benchmarks, solving 1165 within the timeout,
suggesting that the solver becomes overloaded with the large number of instantiations
produced by exhaustive instantiation.

While \cvc solves more than \ziii, which finds 941 satisfiable benchmarks, 
our model finder still trails the overall performance 
of the other model finders on these problems.
Paradox was the overall best solver, finding 1450 satisfiable benchmarks.
We attribute this to the fact that we have not implemented advanced preprocessing techniques,
such as clause splitting, that have been shown to be critical for finding finite models of TPTP benchmarks.
Nevertheless, \cvc's model finder solves more satisfiable benchmarks (214) than Paradox for classes of problems
having some equality reasoning (SEQ).
Collectively, some configuration of \cvc with finite model finding 
was able to solve 52 satisfiable benchmarks that {\bf paradox} was not able to solve,
and 36 satisfiable benchmarks that {\bf iprover+f} was not able to solve.

Figure~\ref{fig:tptp-results} also shows results for unsatisfiable problems.
Although these results are not comparable to those achieved
by state-of-the-art theorem provers, such as Vampire and E~\cite{DBLP:conf/cav/KovacsV13,schulz2002brainiac}, 
we note that {\bf iprover} solves the most benchmarks, 6330.
Here, {\bf cvc4+fmi} was the best configuration of \cvc with finite model finding,
solving 3781 within the timeout.
While finite model finding configurations solved considerably fewer than using heuristic instantiation alone,
some configuration of \cvc with finite model finding solves 144 unsatisfiable benchmarks
that were unable to be solved by any other solver in these experiments,
including iProver and \ziii.

\begin{figure}[t]
\centering
\includegraphics[width=0.49\textwidth]{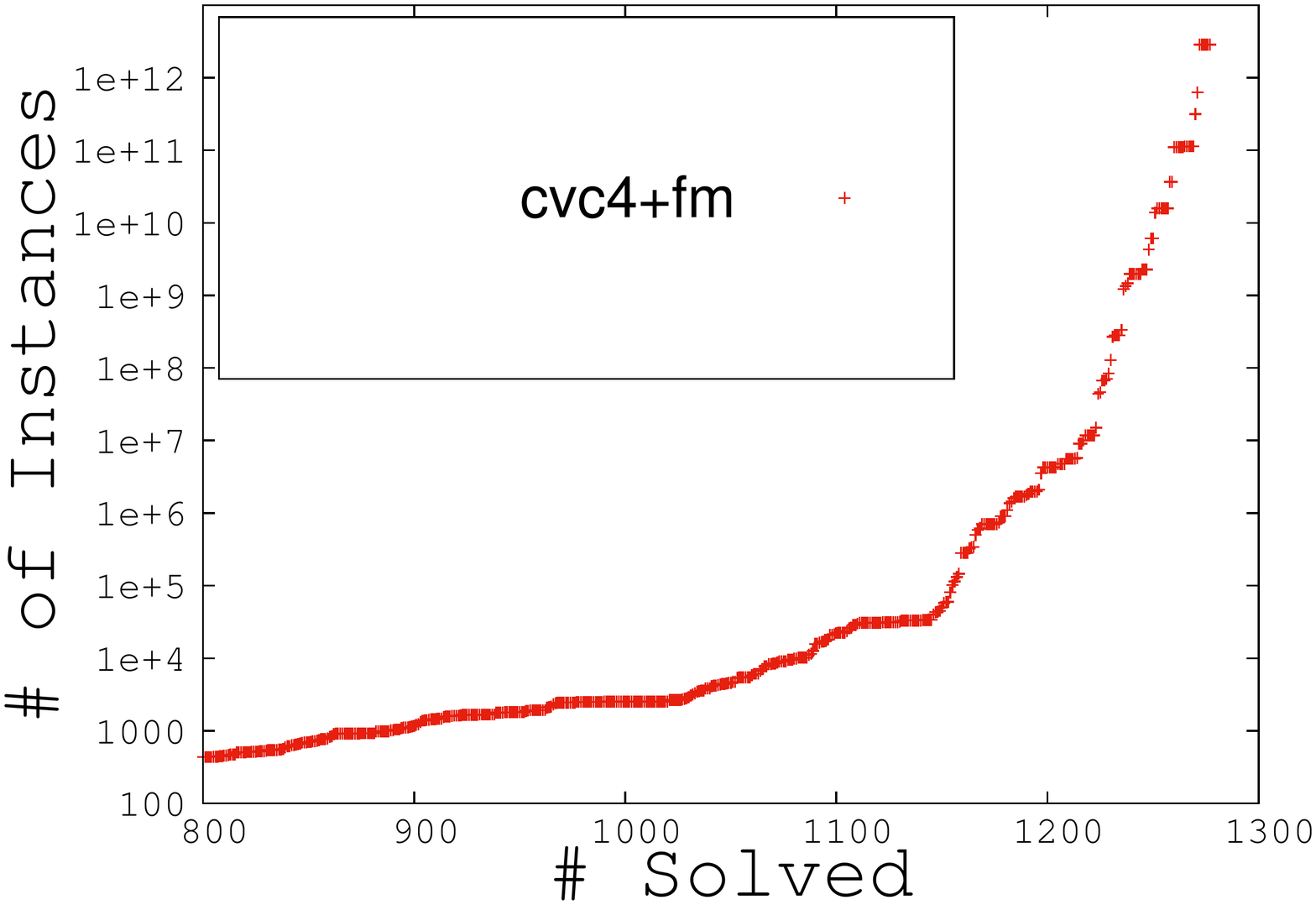}
\includegraphics[width=0.49\textwidth]{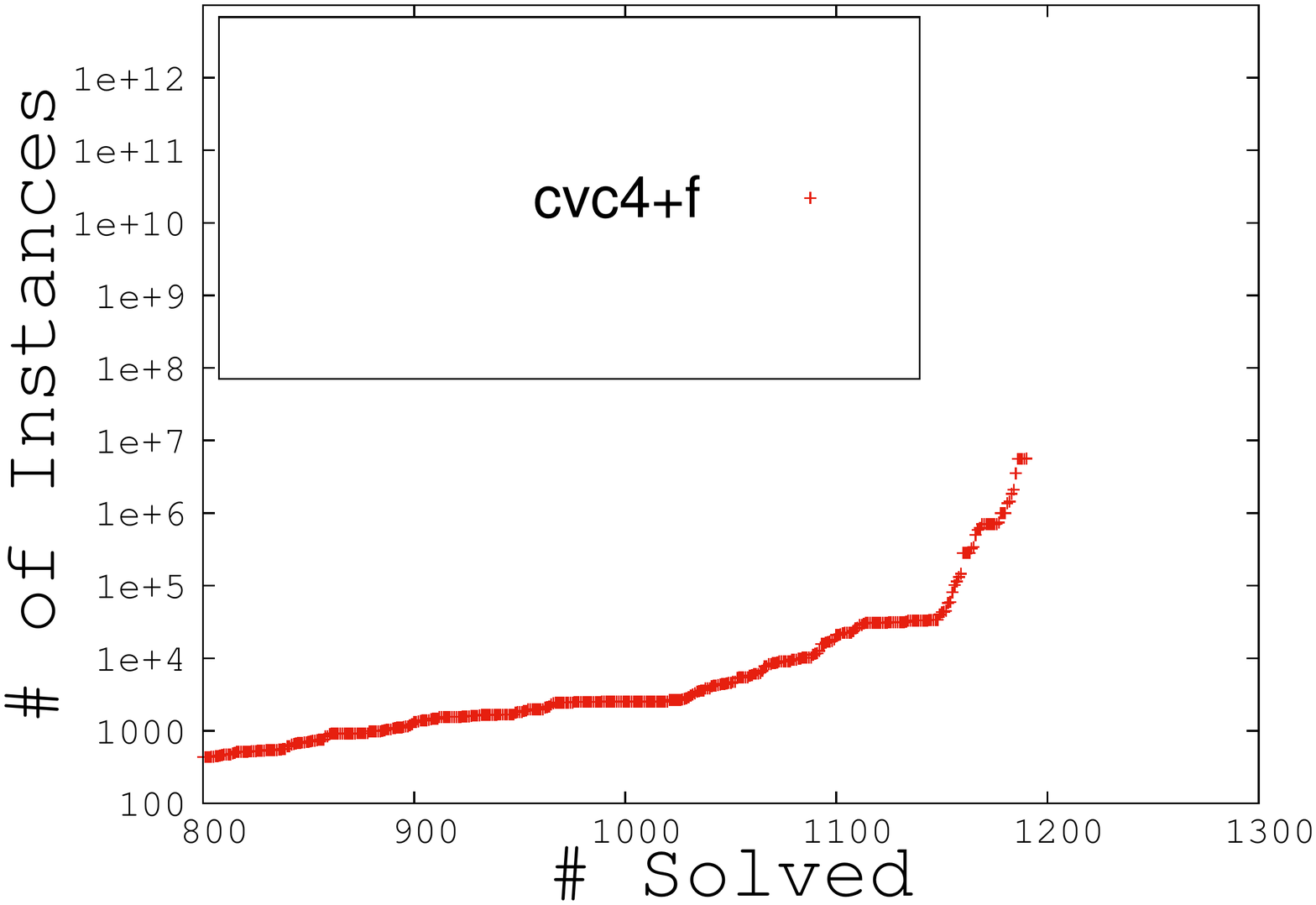}
\caption[Satisfiable TPTP problems with and without model-based instantiation.]{
Satisfiable TPTP problems with ({\bf cvc4+fm}) and without ({\bf cvc4+f}) model-based instantiation.
A point $(x,y)$ on this graph says the configuration solves $x$ benchmarks each having at most $y$ ground instances of quantified formulas.
}
\label{fig:tptp-model-size}
\end{figure}

To further evaluate the impact of model-based quantifier instantiation 
on our model finder, 
we recorded statistics on the domain size of quantified formulas 
in benchmarks solved by its various configurations.
We measured the total number of possible ground instances for all quantified formulas 
in the smallest model for that benchmark (a quantified formula over $n$ variables each with domain size $k$ has $k^n$ instances).
For a problem with $d$ total instances, the configuration {\bf cvc4+f} must explicitly generate these $d$ instances, 
while a model-based configuration may avoid doing so.

The graph on the right hand side of Figure~\ref{fig:tptp-model-size} shows that
{\bf cvc4+f} was only able to solve 13 problems having
more than 100K instances, the maximum having around 5.6 million instances.
On the other hand, {\bf cvc4+fm} was capable of solving 123 problems having 
more than 100K instances, 
with the largest having more than 2.8 trillion instances.
This indicates that the model-based instantiation approach improves the scalability 
of our model finder, and allows it to solve benchmarks 
where exhaustive instantiation is clearly infeasible.
Model finders such as Paradox have other ways of handling the explosion in the number of instances, 
namely by minimizing the number of variables per clause.
Coupling these techniques with model-based techniques could then lead to additional improvements in scalability.
Since techniques for reducing variables in clauses rely on introducing new symbols into the problem,
we have found that they have a negative impact on performance for several classes of benchmarks,
and thus are disabled by default in \cvc.


\subsubsection*{Isabelle benchmarks}

Recent work has shown that SMT solvers are effective at discharging proof obligations for Isabelle, a generic proof assistant~\cite{paulson2002isabelle}.
The performance of these solvers can benefit from an encoding that makes use of theories~\cite{BBP-11}.
We considered a set of 13,041 benchmarks corresponding to both provable and unprovable proof goals,
corresponding to a superset of those discussed in~\cite{BBP-11}.
Most benchmarks in this set contain quantifiers, and a significant portion contain integer arithmetic.
For many of them, the quantification is limited to the uninterpreted sorts, thus making our finite model finding approach applicable.

\begin{figure}[t]
\centering
{\scriptsize
\begin{tabular}{|l|c|c|c|c|c|c|c|c|c|c|}
\hline
Sat & {\bf Arr} & {\bf FFT} & {\bf FTA} & {\bf Hoare} & {\bf NSS} & {\bf QEp} & {\bf SN} & {\bf TSq} & {\bf TSf} & {\bf Total}
\\
\hline
{\bf z3 } & 2 & 19 & 24 & 47 & 7 & 47 & 1 & 17 & 8 & 172 
\\
{\bf cvc3 } & 0 & 9 & 0 & 0 & 0 & 0 & 0 & 8 & 0 & 17 
\\
{\bf cvc4+i } & 0 & 9 & 0 & 0 & 0 & 0 & 0 & 8 & 0 & 17 
\\
{\bf cvc4+f } & 35 & 145 & {\bf 177 } & 162 & 56 & 85 & 12 & {\bf 57 } & 90 & 819 
\\
{\bf cvc4+fm } & 33 & 141 & 173 & 155 & 43 & 86 & 12 & 54 & 89 & 786 
\\
{\bf cvc4+fi } & 36 & 146 & 172 & 162 & {\bf 61 } & {\bf 86 } & {\bf 12 } & 55 & {\bf 93 } & 823 
\\
{\bf cvc4+fmi } & {\bf 36 } & {\bf 147 } & 174 & {\bf 162 } & 61 & 83 & 12 & 56 & 93 & {\bf 824 }
\\
\hline
\multicolumn{10}{c}{}
\\[-1ex]
\hline
Unsat & {\bf Arr} & {\bf FFT} & {\bf FTA} & {\bf Hoare} & {\bf NSS} & {\bf QEp} & {\bf SN} & {\bf TSq} & {\bf TSf} & {\bf Total}
\\
\hline
{\bf z3 } & 178 & 277 & 917 & 549 & 108 & 325 & 241 & 620 & {\bf 291 } & 3506 
\\
{\bf cvc3 } & {\bf 321 } & {\bf 296 } & {\bf 1124 } & {\bf 607 } & 105 & 297 & 207 & 643 & 227 & 3827 
\\
{\bf cvc4+i } & 307 & 288 & 990 & 563 & {\bf 117 } & {\bf 360 } & {\bf 242 } & {\bf 708 } & 283 & {\bf 3858 } 
\\
{\bf cvc4+f } & 165 & 106 & 451 & 239 & 44 & 131 & 88 & 442 & 151 & 1817 
\\
{\bf cvc4+fm } & 132 & 92 & 442 & 238 & 26 & 160 & 88 & 430 & 128 & 1736 
\\
{\bf cvc4+fi } & 172 & 185 & 589 & 383 & 47 & 222 & 112 & 585 & 196 & 2491 
\\
{\bf cvc4+fmi } & 168 & 186 & 589 & 379 & 47 & 222 & 112 & 584 & 196 & 2483 
\\
\hline
\end{tabular}
}
\caption{
Number of solved satisfiable and unsatisfiable Isabelle benchmarks
for various classes within a 300 second timeout.}
\label{fig:isabelle-results}
\end{figure}

The results are shown in Figure~\ref{fig:isabelle-results}.
For satisfiable benchmarks, all configurations of \cvc's model finder find more satisfiable 
problems than \ziii, which finds only 172 of them overall.
The model-based quantifier instantiation technique from Section~\ref{sec:fmf-mbqi} (configuration {\bf cvc4+fm})
was less effective than naive instantiation (configuration {\bf cvc4+f}) which solves 819,
suggesting that model-based techniques were not effective at minimizing the number of instantiations for this set of benchmarks.
Using heuristic E-matching noticeably improved the search for models, 
as configuration {\bf cvc4+fi} solves 823 satisfiable benchmarks.
Using both model-based instantiation and heuristic instantiation, configuration {\bf cvc4+fmi},
found more satisfiable problems (824) than any other configuration.

For unsatisfiable problems, {\bf cvc4+i} is the overall winner, solving 3,858,
which was more than both {\bf z3 } and {\bf cvc3} which solved 3,506 and 3,827 respectively.
Configurations of \cvc with finite model finding generally solves less unsatisfiable benchmarks,
but is orthogonal to other solvers and configurations.
In these experiments, 309 unsatisfiable benchmarks that \cvciii cannot solve are solved by at least one configuration of \cvc with finite model finding.
Similarly, a configuration of \cvc with finite model finding solves 429 unsatisfiable benchmarks that \ziii cannot, and 168 that {\bf cvc4+i} cannot.

\section{Conclusion}
\label{sec:conclusion}

We developed a general approach for finite model finding in SMT that 
is efficient for many classes of problems that are of practical interest to formal methods applications.
Experimental evidence from an implementation of these methods in the SMT solver \cvc
shows that our approach is effective in practice at solving many classes of benchmarks, 
including verification conditions from industry,
and benchmarks from automated theorem proving libraries.
The implementation is highly competitive  with respect to other SMT solvers and
to automated theorem provers.

In ongoing work, 
we plan to extend our approach to the problem of finding models of formulas 
with quantifiers ranging over built-in domains such as the integers and inductive datatypes.
We are also investigating the use of \cvc as a backend to interactive proof assistants
such as Isabelle and Coq, 
where small counterexamples to conjectures are often helpful to the user.

\begin{acknowledgements}
We would like to thank Amit Goel and Sava Krsti\'c for their valuable contributions to this work,
Sascha B\"ohme for proving the Isabelle benchmarks,
and Fran\c{c}ois Bobot for his help in writing a TPTP front end for CVC4.
\end{acknowledgements}

\bibliographystyle{spmpsci}      
\bibliography{pap}

\begin{thebibliography}{10}
\providecommand{\url}[1]{{#1}}
\providecommand{\urlprefix}{URL }
\expandafter\ifx\csname urlstyle\endcsname\relax
  \providecommand{\doi}[1]{DOI~\discretionary{}{}{}#1}\else
  \providecommand{\doi}{DOI~\discretionary{}{}{}\begingroup
  \urlstyle{rm}\Url}\fi

\bibitem{BaaNip-98}
Baader, F., Nipkow, T.: Term Rewriting and All That.
\newblock Cambridge University Press (1998)

\bibitem{CVC4-CAV-11}
Barrett, C., Conway, C., Deters, M., Hadarean, L., Jovanovic, D., King, T.,
  Reynolds, A., Tinelli, C.: {CVC4}.
\newblock In: Proceedings of CAV'11, \emph{LNCS}, vol. 6806, pp. 171--177.
  Springer (2011)

\bibitem{BarNOT-LPAR-06}
Barrett, C., Nieuwenhuis, R., Oliveras, A., Tinelli, C.: Splitting on demand in
  {SAT} modulo theories.
\newblock In: Proceedings of LPAR'06, \emph{LNCS}, vol. 4246, pp. 512--526.
  Springer (2006)

\bibitem{BT07}
Barrett, C., Tinelli, C.: {CVC3}.
\newblock In: W.~Damm, H.~Hermanns (eds.) Proceedings of the $19^{th}$
  International Conference on Computer Aided Verification (CAV '07),
  \emph{Lecture Notes in Computer Science}, vol. 4590, pp. 298--302.
  Springer-Verlag (2007).
\newblock Berlin, Germany

\bibitem{DBLP:conf/cade/BaumgartnerBW14}
Baumgartner, P., Bax, J., Waldmann, U.: Finite quantification in hierarchic
  theorem proving.
\newblock In: Automated Reasoning - 7th International Joint Conference, {IJCAR}
  2014, Held as Part of the Vienna Summer of Logic, {VSL} 2014, Vienna,
  Austria, July 19-22, 2014. Proceedings, pp. 152--167 (2014)

\bibitem{baumgartner-et-al-2009}
Baumgartner, P., Fuchs, A., de~Nivelle, H., Tinelli, C.: Computing finite
  models by reduction to function-free clause logic.
\newblock J. Applied Logic \textbf{7}(1), 58--74 (2009)

\bibitem{BBP-11}
Blanchette, J.C., B{\"o}hme, S., Paulson, L.C.: Extending {Sledgehammer} with
  {SMT} solvers.
\newblock In: N.~B{\o}rner, V.~Sofronie-Stokkermans (eds.) Automated Deduction,
  \emph{Lecture Notes in Computer Science}, vol. 6803, pp. 116--130. Springer
  (2011)

\bibitem{blanchette-nipkow-2010}
Blanchette, J.C., Nipkow, T.: Nitpick: A counterexample generator for
  higher-order logic based on a relational model finder.
\newblock In: M.~Kaufmann, L.C. Paulson (eds.) ITP 2010, \emph{LNCS}, vol.
  6172, pp. 131--146. Springer (2010)

\bibitem{BruCFGS-AMAI-09}
Bruttomesso, R., Cimatti, A., Franz{\'e}n, A., Griggio, A., Sebastiani, R.:
  Delayed theory combination vs. {Nelson-Oppen} for satisfiability modulo
  theories: a comparative analysis.
\newblock AMAI \textbf{55}(1-2), 63--99 (2009)

\bibitem{Claessen:Soerensson:MACEimprove:ModelComputationWS:2003}
Claessen, K., S{\"o}rensson, N.: New techniques that improve {MACE}-style
  finite model building.
\newblock In: CADE-19 Workshop: Model Computation -- Principles, Algorithms,
  Applications, pp. 11--27 (2003)

\bibitem{DeMoura:2008:ZES:1792734.1792766}
De~Moura, L., Bj{\o}rner, N.: Z3: an efficient {SMT} solver.
\newblock In: Proceedings of the Theory and practice of software, 14th
  international conference on Tools and algorithms for the construction and
  analysis of systems, TACAS'08/ETAPS'08, pp. 337--340. Springer-Verlag,
  Berlin, Heidelberg (2008)

\bibitem{DehEtAl-CADE-11}
D\'{e}harbe, D., Fontaine, P., Merz, S., Paleo, B.W.: Exploiting symmetry in
  {SMT} problems.
\newblock In: Proceedings of CADE-23, \emph{LNCS}, vol. 6803, pp. 222--236.
  Springer (2011)

\bibitem{Detlefs03simplify:a}
Detlefs, D., Nelson, G., Saxe, J.B.: Simplify: A theorem prover for program
  checking.
\newblock Tech. rep., J. ACM (2003)

\bibitem{dutertre2006yices}
Dutertre, B., De~Moura, L.: The yices smt solver.
\newblock Tool paper at http://yices. csl. sri. com/tool-paper. pdf \textbf{2},
  2 (2006)

\bibitem{Garey:1974:SNP:800119.803884}
Garey, M.R., Johnson, D.S., Stockmeyer, L.: Some simplified np-complete
  problems.
\newblock In: Proceedings of the Sixth Annual ACM Symposium on Theory of
  Computing, STOC '74, pp. 47--63. ACM, New York, NY, USA (1974)

\bibitem{GBT09}
Ge, Y., Barrett, C., Tinelli, C.: Solving quantified verification conditions
  using satisfiability modulo theories.
\newblock Annals of Mathematics and Artificial Intelligence \textbf{55}(1-2),
  101--122 (2009)

\bibitem{GeDeM-CAV-09}
Ge, Y., de~Moura, L.: Complete instantiation for quantified formulas in
  satisfiability modulo theories.
\newblock In: Proceedings of CAV'09, \emph{LNCS}, vol. 5643, pp. 306--320.
  Springer (2009)

\bibitem{GKLT12}
Goel, A., Krsti\'c, S., Leslie, R., Tuttle, M.: {SMT}-based system verification
  with {DVF}.
\newblock In: Proceedings of SMT'12 (2012)

\bibitem{ihlemann2008}
Ihlemann, C., Jacobs, S., Sofronie-Stokkermans, V.: On local reasoning in
  verification.
\newblock In: C.R. Ramakrishnan, J.~Rehof (eds.) TACAS 2008, pp. 265--281.
  Springer Berlin Heidelberg, Berlin, Heidelberg (2008)

\bibitem{DBLP:journals/fmsd/JovanovicB13}
Jovanovic, D., Barrett, C.: Being careful about theory combination.
\newblock Formal Methods in System Design \textbf{42}(1), 67--90 (2013)

\bibitem{Kor08-IJCAR}
Korovin, K.: {iProver} -- an instantiation-based theorem prover for first-order
  logic.
\newblock In: Proceedings of IJCAR'08, \emph{LNCS}, vol. 5195, pp. 292--298.
  Springer (2008)

\bibitem{DBLP:conf/cav/KovacsV13}
Kov{\'{a}}cs, L., Voronkov, A.: First-order theorem proving and vampire.
\newblock In: Computer Aided Verification - 25th International Conference,
  {CAV} 2013, Saint Petersburg, Russia, July 13-19, 2013. Proceedings, pp.
  1--35 (2013)

\bibitem{KrsGoe-FROCOS-07}
{Krsti\'c}, S., Goel, A.: Architecting solvers for {SAT} modulo theories:
  {Nelson-Oppen} with {DPLL}.
\newblock In: Proceeding of FroCoS'07, \emph{LNCS}, vol. 4720, pp. 1--27.
  Springer (2007)

\bibitem{mccune-1994}
McCune, W.: A {D}avis--{P}utnam program and its application to finite
  first-order model search: Quasigroup existence problems.
\newblock Tech. rep., Argonne National Laboratory (1994)

\bibitem{DBLP:conf/cade/MouraB07}
de~Moura, L., Bj{\o}rner, N.: Efficient {E-Matching} for {SMT} solvers.
\newblock In: Automated Deduction - CADE-21, 21st International Conference on
  Automated Deduction, Bremen, Germany, July 17-20, 2007, Proceedings,
  \emph{Lecture Notes in Computer Science}, vol. 4603, pp. 183--198. Springer
  (2007)

\bibitem{NieOT-JACM-06}
Nieuwenhuis, R., Oliveras, A., Tinelli, C.: {Solving SAT and SAT Modulo
  Theories: from an Abstract Davis-Putnam-Logemann-Loveland Procedure to
  DPLL(T)}.
\newblock Journal of the ACM \textbf{53}(6), 937--977 (2006)

\bibitem{paulson2002isabelle}
Paulson, L.C., Wenzel, M.: Isabelle/HOL: a proof assistant for higher-order
  logic, vol. 2283.
\newblock Springer (2002)

\bibitem{DBLP:conf/sat/Reger0V16}
Reger, G., Suda, M., Voronkov, A.: Finding finite models in multi-sorted
  first-order logic.
\newblock In: Theory and Applications of Satisfiability Testing - {SAT} 2016 -
  19th International Conference, Bordeaux, France, July 5-8, 2016, Proceedings,
  pp. 323--341 (2016)

\bibitem{ReyEtAl-1-RR-13}
Reynolds, A., Tinelli, C., Goel, A., Krsti\'c, S.: Finite model finding in
  {SMT}.
\newblock In: N.~Sharygina, H.~Veith (eds.) Computer Aided Verification,
  \emph{Lecture Notes in Computer Science}, vol. 8044, pp. 640--655. Springer
  Berlin Heidelberg (2013)

\bibitem{ReyEtAl-2-RR-13}
Reynolds, A., Tinelli, C., Goel, A., Krsti\'c, S., Deters, M., Barrett, C.:
  Quantifier instantiation techniques for finite model finding in {SMT}.
\newblock In: M.P. Bonacina (ed.) Automated Deduction – CADE-24, \emph{Lecture
  Notes in Computer Science}, vol. 7898, pp. 377--391. Springer Berlin
  Heidelberg (2013)

\bibitem{ReynoldsTinelliMoura14}
Reynolds, A., Tinelli, C., de~Moura, L.M.: Finding conflicting instances of
  quantified formulas in {SMT}.
\newblock In: FMCAD, pp. 195--202. {IEEE} (2014).
\newblock \doi{10.1109/FMCAD.2014.6987613}

\bibitem{reynolds2013finite}
Reynolds, A.J.: Finite model finding in satisfiability modulo theories.
\newblock Ph.D. thesis, The University of Iowa (2013)

\bibitem{schulz2002brainiac}
Schulz, S.: E--a brainiac theorem prover.
\newblock Ai Communications \textbf{15}(2, 3), 111--126 (2002)

\bibitem{Sut-JAR-09}
Sutcliffe, G.: {The TPTP Problem Library and Associated Infrastructure: The FOF
  and CNF Parts, v3.5.0}.
\newblock Journal of Automated Reasoning \textbf{43}(4), 337--362 (2009)

\bibitem{TinHar-FROCOS-96}
Tinelli, C., Harandi, M.T.: A new correctness proof of the {N}elson--{O}ppen
  combination procedure.
\newblock In: Proceeding of FroCoS'96, Applied Logic, pp. 103--120. Kluwer
  Academic Publishers (1996)

\bibitem{TorJac-TACAS-07}
Torlak, E., Jackson, D.: Kodkod: a relational model finder.
\newblock In: Proceedings of TACAS'07, \emph{LNCS}, vol. 4424, pp. 632--647.
  Springer (2007)

\bibitem{TG12}
Tuttle, M.R., Goel, A.: Protocol proof checking simplified with {SMT}.
\newblock In: Proceedings of NCA'12, pp. 195--202. IEEE Computer Society (2012)

\bibitem{Zhang1995IJCAI}
Zhang, J., Zhang, H.: {SEM}: a system for enumerating models.
\newblock In: Proceedings of IJCAI'95, pp. 298--303 (1995)

\end{thebibliography}

\end{document}